\documentclass[a4paper, twoside]{article}
\usepackage{mathrsfs}
\usepackage{graphicx} 
\usepackage{authblk}
\usepackage{amsfonts}
\usepackage{amsmath}
\usepackage{amssymb}
\usepackage{mathtools}
\usepackage{amsthm}
\usepackage{mathrsfs}
\usepackage{tikz}
\usepackage{fancyhdr}
\usepackage{fix-cm}

\usepackage[dvipsnames,table]{xcolor}
\usepackage{multirow}

\usepackage{titlesec}
\titleformat{\subsubsection}[runin]{\normalfont\bfseries}{}{0em}{}[]
\titlespacing*{\subsubsection}{0pt}{3.25ex plus 1ex minus .2ex}{0.5em}

\usepackage{geometry}
\usepackage[
hyperfootnotes=false,
colorlinks=true,
breaklinks=true,
pdfstartview=FitH,
pdfpagelayout=OneColumn]{hyperref}

\usepackage{color}

\newtheorem{theorem}{Theorem}[section]
\newtheorem{corollary}{Corollary}[section]
\newtheorem{lemma}{Lemma}[section]
\newtheorem{definition}{Definition}[section]

\usepackage[
backend=biber,
style=alphabetic,
sorting=nyt,
maxnames=4,
maxalphanames=4,
giveninits=true,
isbn=false,
url=false
]{biblatex}

\newcommand{\ket}[1]{|#1\rangle}
\newcommand{\bra}[1]{\langle #1|}
\newcommand{\op}[2]{|#1\rangle\langle #2|}
\newcommand{\ip}[2]{\langle #1| #2\rangle}
\newcommand{\samp}[0]{\leftarrow_{\$}}
\newcommand{\mc}{\mathcal}

\newcommand{\mbb}{\mathbb}
\newcommand{\mbf}{\mathbf}
\newcommand{\mrm}{\mathrm}
\newcommand{\msf}{\mathsf}
\usepackage{bm}
\newcommand{\X}{\mathsf{X}}
\newcommand{\Y}{\mathsf{Y}}
\newcommand{\Z}{\mathsf{Z}}
\newcommand{\tgt}{\textup{tgt}}
\newcommand{\src}{\textup{src}}

\newcommand{\D}{\textup{D}}
\newcommand{\tr}{\textup{Tr}}
\newcommand{\f}{\textup{F}}
\newcommand{\tv}{\textup{TV}}
\newcommand{\bc}{\textup{BC}}
\newcommand{\td}{\textup{TD}}
\newcommand{\id}{\textup{id}}

\newcommand{\guess}{\textup{guess}}

\title{Quantum Secure Non-Interactive Reductions}
\author[1,$\dagger$]{Maxwell Gold}
\author[1]{Sarah Hagen}
\affil[1]{Department of Physics, University of Illinois Urbana-Champaign}
\author[2]{Daniel Alabi}
\author[2]{Eric Chitambar}
\affil[2]{Department of Electrical and Computer Engineering, University of Illinois Urbana-Champaign}
\affil[$\dagger$]{mjgold2@illinois.edu}
\date{\today}

\begin{document}

\maketitle

\begin{abstract}
    Efficient models for secure computation often rely on \textit{offline} preprocessed correlations, which subsequently enable private \textit{online} computation from minimal assumptions. Within these models, the study of secure non-interactive reductions (SNIR) formalizes when one classical correlation can be non-interactively transformed into another, while guaranteeing information-theoretic simulation-based privacy. In this work, we introduce quantum secure non-interactive reductions (QSNIR), a natural extension of SNIR, in which the source and target resource are each a bipartite quantum state. Specifically, when the target resource is classical, our framework enables us to analyze the problem of disseminating private correlations from entanglement, in addition to classical resources. Our primary technical result is that QSNIR privacy in this context can be computed exactly via a semidefinite program (SDP) pair that quantifies an error operationally rooted in a simulation-based cheating advantage. We prove that QSNIR privacy is lower-bounded by a decision problem related to minimum-error state discrimination (MED), and use both definitions to compute exact one-shot privacy errors for universal two-party computation (2PC) preprocessing correlations.
\end{abstract}

\section{\label{sec:INT}Introduction}
Private computation from \textit{preprocessed} correlated randomness serves as a powerful methodology for constructing information-theoretic (or provable) malicious-secure multiparty computation (MPC)~\cite{Beaver-1992-EM,Bendlin-2011-SE, Damgard-2012-MC}. 
In a purely classical setting, however, it is impossible to disseminate these resources privately, without resorting to strong assumptions, such as a trusted third party and private channels, or computational hardness~\cite{Impagliazzo-1989-Of}.  

In contrast, quantum key distribution (QKD) allows a pair, Alice ($\mbf{A}$) and Bob ($\mbf{B}$), to disseminate keys, in a way that offers malicious-secure privacy \cite{Bennett-1984-Qc, Ekert-1991-Qc, Vazirani-2014-Fd}, and is composable \cite{Ben-Or-2005-UC, Muller-Quade-2009-cq}. 
The essential components for proving security come from the eavesdropper's inability to clone an unknown state~\cite{Bennett-1984-Qc}, or to spoof certain uniquely quantum \textit{nonlocal} correlations~\cite{Vazirani-2014-Fd}. 
Symmetric key can also be faithfully embedded in a quantum resource, such as an entangled Bell pair~\cite{Ekert-1991-Qc}.

Key is simple to derive from entanglement precisely because it is symmetric: $\mbf{A}$ wants to know exactly $\mbf{B}$'s sample, and vice versa. If instead we consider correlated randomness that is not constrained in this way, embedding it in a quantum system ``privately'' entails additionally proving that $\mbf{A}$ or $\mbf{B}$ cannot cheat via their quantum side-information, and learn the other's sample beyond what they can infer from an ideal sampling of the correlation. This makes many foundational quantum cryptographic primitives impossible~\cite{Mayers-1997-US,Lo-1997-Iq,Lo-1997-QB}.
In fact, all non-trivial 2PC primitives embedded in a quantum resource will leak against a malicious adversary~\cite{Salvail-2009-PT}.

In~\cite{Salvail-2009-PT}, a quantum resource that correctly instantiates a primitive is represented by a \textit{coherent embedding} of the preprocessing correlation. This embedding is a pure quantum state shared between $\mbf{A}$ and $\mbf{B}$, whose complex amplitudes have a modulus determined by the encoded distribution.\footnote{Importantly, their definition of leakage is minimized when the set of quantum states in which the distribution is embedded forms an orthonormal basis. Including any additional quantum degrees of freedom that are jointly tied to the embedded distribution can only increase the leakage (see Lem 4.3 in~\cite{Salvail-2009-PT}).} 
With this embedding shared between the parties, instantiating the primitive (i.e. sampling from the preprocessing correlation) is modeled by a non-interactive \textit{reduction}, where $\mbf{A}$ and $\mbf{B}$ both measure their part of the embedding. The precomputed embedding is assumed to be ideal, and the cheating party's quantum side-information is their reduced state, given the measurement of the honest party. The definition for privacy (or conversely, leakage), given in~\cite{Salvail-2009-PT}, is then based on the difference between the quantum mutual information in the classical-quantum (cq) state that arises from measuring half of the embedding, and the classical mutual information in the ideal correlation. While this definition provides necessary and sufficient conditions for perfect privacy, it lacks an operational interpretation when the leakage is non-zero, and the leakage per copy applies only in the asymptotic limit, such that it cannot be easily shown to compose.

In this work, we detail a simulation-based framework for quantum secure non-interactive reductions (QSNIR). This builds off of the classical paradigm of secure non-interactive reductions (SNIR) \cite{Agarwal-2022-SN}, and adapts it to a setting where correlations can arise from quantum entanglement in addition to classically correlated quantum resources. We develop a definition for QSNIR that handles this distinction and supports composition. 

When applied to the coherent embeddings, our definition yields the same constraints for perfect privacy as in~\cite{Salvail-2009-PT}, but is operationalized by a guessing game: A distinct global observer, referred to as the \textit{environment}, must distinguish between a real world, which contains the embedding, and an ideal world. In the ideal world, a \textit{simulator}, a local channel on the side of the cheating party, works against the environment, by attempting to invert the measurement that the cheating party would make if they were honest. Our central result is that QSNIR privacy, when applied to these classical-quantum reductions, is represented by a strongly dual SDP pair, which captures this operational guessing game, and can be used to compute exact one-shot leakages, referred to as \textit{privacy errors}, for any cryptographic primitive. In addition, we prove that the optimum of this SDP is lower-bounded by another SDP that computes privacy related to a guessing game based on minimum-error state discrimination (MED), which we term MED privacy.

Beyond these results, we explicitly compute one-shot optimal QSNIR and MED privacy errors for common universal 2PC correlations, such as \textit{oblivious key}~\cite{Wolf-2004-OT}. Proving each privacy error involves exploiting various symmetries of how the correlations are encoded in a complex Hilbert space, and may merit its own technical interest. Additionally, even over the small range of correlations we consider, we see a diverse slate of optima that warrants a deeper understanding of the gap between QSNIR and MED privacy, which is shown not to be tight. The existence of this gap is related to Blackwell's work on informativeness of experiments~\cite{Blackwell1951}. Centrally, in QSNIR privacy the actions of the simulator are a \textit{garbling} of the ideal correlation, in order to conflate the ideal experiment with the real one. MED privacy, within this context, is then a decision problem that looks at a subset consisting of only the real world experiments.\footnote{We return to this topic in Sec.~\ref{sec:DSC}.} 

\subsection{\label{sec:INT-RW}Related Work}

\subsubsection{Preprocessing correlations for MPC.}
 Kilian~\cite{Kilian-1988-Fc} first proved that MPC in its entirety can be founded on primitives for oblivious transfer (OT). In turn Wolf and Wullschleger~\cite{Wolf-2004-OT} showed how to preprocess OT from a correlation known as \textit{oblivious key}. Oblivious key is synonymous with another universal preprocessing correlation, Beaver's \textit{multiplication triples}, which directly enable private multiplication in secret sharing-based MPC~\cite{Beaver-1991-Sm}. Rabin~\cite{Rabin-2005-HE}, as well as Cr\'{e}peau and Kilian~\cite{Crépeau-1988-ET,Crepeau-1988-Ao}, alternatively demonstrated weakened versions of these assumptions, where MPC is reduced to correlations arising from \textit{symmetric} (bit-flip) and \textit{erasure} channels.

\subsubsection{Secure non-interactive reductions.}
\textit{Secure non-interactive reductions} (SNIR)
and \textit{secure non-intera\-ctive simulations} (SNIS) are security frameworks introduced to capture when one correlation can be locally transformed into another, while preserving simulation-based privacy~\cite{Khorasgani-2020-SN,Agarwal-2022-SN}. These works build upon (non-secure) \textit{non-interactive simulation} (NIS)~\cite{Gacs-1973-Ci,Witsenhausen-1975-SP,Wyner-1975-ci,AGKN14,KA16,DMN18,GKS16,STW20}, and establish definitions for correctness and privacy analogously to secure computation, in the non-interactive setting. Similarly, in contrast to \textit{pseudorandom correlation generators} (PCGs)~\cite{BCGI18,BCGIKS19,BCGIKS20}, these frameworks establish security guarantees that are information-theoretic, making no computational assumptions. Agarwal et al.~\cite{Agarwal-2022-SN} studied SNIRs via spectral characterizations, identifying which correlations admit secure reductions and proving separations from non-secure counterparts. Independently (and concurrently), Amini Khorasgani et al.~\cite{Khorasgani-2020-SN} analyzed feasibility and rate regions for secure non-interactive simulation, demonstrating fundamental limits on the efficiency of secure reductions. Our contributions directly extend this line of work by introducing QSNIR, where the source resource can be quantum
entanglement.

\subsubsection{Limits for two-party quantum cryptography.}
Provably secure quantum cryptography is impossible for most 2PC primitives. This includes important foundational tasks such as bit commitment~\cite{Lo-1997-QB, Mayers-1997-US}, OT~\cite{Lo-1997-Iq}, and fair coin tossing~\cite{Mochon-2007-Qw}. Salvail et al.~\cite{Salvail-2009-PT} showed that all non-trivial 2PC primitives necessarily leak information against an adversary who \textit{purifies} their actions with unbounded resources. While their work rigorously confines the correlations that can be extracted from quantum resources with perfect privacy to those that satisfy certain \textit{zero-error} entropic conditions~\cite{Wolf-2004-Zi}, the  definition for leakage they give is based on asymptotic entropic quantities and lacks a direct operational interpretation. Our work improves on this result by giving quantitative definitions of leakage that 1.) are zero when the same zero-error entropic properties hold, and 2.) are operationally ascribed to one-shot decision-theoretic problems about channel recoverability and state discrimination.

\subsubsection{Quantum MPC from computational assumptions.}
As of late, founding quantum advantage for MPC has necessitated rejecting entirely information-theoretic results~\cite{Dulek-2020-SM,AQY22,BKS23}. Dulek et al.~\cite{Dulek-2020-SM} extended the preprocessing model of \cite{Bendlin-2011-SE,Damgard-2012-MC} in building multiparty quantum computing (MPQC). Here, the authors assume a completely classical offline phase, instantiated with computational assumptions, in order to demonstrate how MPQC is possible with a dishonest majority and an information-theoretic online phase. 
Ananth et al.~\cite{AQY22} studied cryptography from the assumption of pseudorandom quantum states. Their result is again explicitly computational: from pseudorandom states they obtain commitments and, consequently, maliciously secure MPC in the dishonest majority setting. 
Finally, Bartusek et al.~\cite{BKS23} demonstrated that shared EPR pairs, together with computational assumptions, suffice for strong primitives such as one-shot string OT and two-round MPC. This work explicitly assumes sub-exponential LWE for one-shot OT and uses the quantum-accessible random oracle model for their two-round MPC result. 
Our work is complementary to all of these results. We identify an information-theoretic reduction framework to show when an ideal entangled resource is equivalent to an ideal preprocessing correlation, without relying on pseudorandomness, one-way functions, or related computational assumptions.

\subsection{\label{sec:INT-PR}Primary Contributions}

\subsubsection{Defining non-interactive simulation-based reductions between quantum resources.}
We adapt the security framework of SNIR (and equivalently SNIS) to the quantum setting, which we term QSNIR, and provide rigorous distance-based definitions for correctness and simulation-based privacy. We prove several important properties of QSNIR, including parallel and sequential composability, and an equivalence between correctness and privacy under local unitary transformations. Any SNIR can be translated to a QSNIR between mixed states, but our framework further supports secure reductions from classical correlations to entanglement. 

\subsubsection{Operationalizing leakage in perfectly correct reductions from classical correlations.}
When the target state is classical, a QSNIR represents the problem of securely sampling from a target distribution, now with entanglement as a valid initial resource. The goal is to show that an adversary who possesses only a part of the initial resource learns no more about the target correlation, beyond what they could if the sample was instead generated by a trusted third party. With entanglement as a resource, we show that when a reduction produces samples of the correct correlation when both parties are honest, we can compute any leakage exactly via an SDP, which directly encodes the definition for single-copy privacy. 

\subsubsection{Computing optimal one-shot privacy errors for 2PC correlations.}
We apply our definition of QSNIR to reductions from universal 2PC correlations to the entangled states which coherently embed the relevant distribution. This lets us compute exact single-copy privacy errors for these correlations, which compose. To prove the optimality of each error, we exploit different symmetries that arise in the source entangled states, as a result of the structure of the embedded distribution.  

\subsection{\label{sec:INT-OC}Organization of Contents}
In Sec.~\ref{sec:OVR}, we give a technical overview of our results, before presenting notation and preliminaries in Sec.~\ref{sec:PRE}.  Next, in Sec.~\ref{sec:QSNIR}, we define the framework for QSNIR, and prove its composability. We then use this definition in Sec.~\ref{sec:BND}, and derive results surrounding privacy in reductions from classical correlations to their canonical coherent embeddings. We thereafter apply these results in Sec.~\ref{sec:2PC} to compute explicit one-shot optimal privacy errors for 2PC correlations.

\section{\label{sec:OVR}Technical Overview}
\subsection{Definitions of Properties of QSNIR}
We develop a simulation-based framework for secure non-interactive reductions from one bipartite quantum resource to another. Within this framework, we term each reduction a quantum secure non-interactive reduction (QSNIR). For a pair of parties $\{\mbf{A},\mbf{B}\}$, the goal of a QSNIR is to demonstrate that a sufficient requirement for distributing a target state, $\rho^{\mbf{AB}}_{\tgt}$, is to first distribute a source state, $\rho^{\mbf{AB}}_{\src}$, without any additional resources. 
\begin{definition}[$(\epsilon,\delta_{\mbf{A}},\delta_{\mbf{B}})$-QSNIR.  Informal, see Def.~\ref{def:eQSNIR}]
    An $(\epsilon,\delta_{\mbf{A}},\delta_{\mbf{B}})$-\textup{QSNIR} is a pair of local channels which transforms $\rho^{\mbf{AB}}_{\src}$ to $\rho^{\mbf{AB}}_{\tgt}$, with $\epsilon$-correctness, $\delta_{\mbf{A}}$-privacy when $\mbf{B}$ is honest, and $\delta_{\mbf{B}}$-privacy when $\mbf{A}$ is honest. 
\end{definition}
\noindent
In Sec.~\ref{sec:QSNIR}, we give corresponding distance-based definitions for these security parameters, where the goal is to minimize the distance between realizations of the problem in distinct real and ideal worlds. 
Any SNIR can be translated to a QSNIR between mixed states. The converse is generally not true as entanglement has no representation as a classical correlation, which can be seen from the existence of certain nonlocal correlations~\cite{Clauser-1969-PE,Fritz-2012-Bt}. 

With our general definition for QSNIR, we show that parallel and sequential composition is supported.
\begin{theorem}[Composition. Informal, See Thms.~\ref{thm:parallel} and~\ref{thm:sequential}]
    \textup{QSNIR}s support parallel and sequential composition, with at worst additive error.
\end{theorem}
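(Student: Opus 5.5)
Since the formal definitions (Def.~\ref{def:eQSNIR}) are not yet stated, I will work with the natural distance-based form. An $(\epsilon,\delta_{\mbf{A}},\delta_{\mbf{B}})$-QSNIR is a pair of local channels $(\Phi_{\mbf{A}},\Phi_{\mbf{B}})$ satisfying three conditions. Correctness requires $\td\bigl((\Phi_{\mbf{A}}\otimes\Phi_{\mbf{B}})(\rho_{\src}),\rho_{\tgt}\bigr)\le\epsilon$. Privacy against a cheating $\mbf{A}$ requires a local simulator channel $\Sigma_{\mbf{A}}$ with $\td\bigl((\id_{\mbf{A}}\otimes\Phi_{\mbf{B}})(\rho_{\src}),(\Sigma_{\mbf{A}}\otimes\id)(\rho_{\tgt})\bigr)\le\delta_{\mbf{A}}$. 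Privacy against a cheating $\mbf{B}$ is defined symmetrically. If the paper's definition also quantifies over environments holding purifications or references, the same argument goes through with identities tensored on.

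The whole proof runs on two facts about trace distance: the triangle inequality, and contractivity under CPTP maps, including maps of the form $\Lambda\otimes\id$. I will prove parallel and sequential composition separately.

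\emph{Parallel composition.} Take two QSNIRs with parameters $(\epsilon_i,\delta_{\mbf{A},i},\delta_{\mbf{B},i})$, acting $\rho_{\src,i}\to\rho_{\tgt,i}$. I use the product protocol $\Phi_{\mbf{A},1}\otimes\Phi_{\mbf{A},2}$, $\Phi_{\mbf{B},1}\otimes\Phi_{\mbf{B},2}$ and the product simulators $\Sigma_{\mbf{A},1}\otimes\Sigma_{\mbf{A},2}$. I then pass through the hybrid in which the first factor is real and the second is ideal. Each hybrid step swaps one factor while the other factor is held fixed as a tensor product. Applying the fixed channel to one tensor factor is contractive, so each step costs at most that factor's error. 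The total error is therefore additive: $\epsilon_1+\epsilon_2$, and similarly for each $\delta$. If the definition allows a correlated environment across the two instances, I will absorb the other instance into the reference system, which keeps the argument the same.

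\emph{Sequential composition.} Take $\rho_1\to\rho_2$ via $(\Phi^{(1)},\Sigma^{(1)})$ and $\rho_2\to\rho_3$ via $(\Phi^{(2)},\Sigma^{(2)})$. The composed protocol is $\Phi^{(2)}\circ\Phi^{(1)}$ on each side.

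For correctness, I insert the intermediate state $(\Phi^{(2)}_{\mbf{A}}\otimes\Phi^{(2)}_{\mbf{B}})(\rho_2)$ and apply the triangle inequality. Contractivity under $\Phi^{(2)}\otimes\Phi^{(2)}$ then gives a total of $\epsilon_1+\epsilon_2$.

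For privacy against $\mbf{A}$, the honest $\mbf{B}$ applies $\Phi^{(2)}_{\mbf{B}}\circ\Phi^{(1)}_{\mbf{B}}$ and $\mbf{A}$ keeps her raw system of $\rho_1$. I take the simulator $\Sigma_{\mbf{A}}=\Sigma^{(1)}_{\mbf{A}}\circ\Sigma^{(2)}_{\mbf{A}}$. The bound comes in two steps:
\begin{align*}
&\td\bigl((\id\otimes\Phi^{(2)}_{\mbf{B}}\Phi^{(1)}_{\mbf{B}})(\rho_1),\,(\Sigma^{(1)}_{\mbf{A}}\otimes\Phi^{(2)}_{\mbf{B}})(\rho_2)\bigr)\le\delta_{\mbf{A},1},\\
&\td\bigl((\Sigma^{(1)}_{\mbf{A}}\otimes\Phi^{(2)}_{\mbf{B}})(\rho_2),\,(\Sigma^{(1)}_{\mbf{A}}\Sigma^{(2)}_{\mbf{A}}\otimes\id)(\rho_3)\bigr)\le\delta_{\mbf{A},2}.
\end{align*}
The first line follows by applying $\id\otimes\Phi^{(2)}_{\mbf{B}}$ to the first-stage privacy guarantee. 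The second follows by applying $\Sigma^{(1)}_{\mbf{A}}\otimes\id$ to the second-stage privacy guarantee. The triangle inequality then gives $\delta_{\mbf{A},1}+\delta_{\mbf{A},2}$, and the bound for $\mbf{B}$ is symmetric.

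The main obstacle is the ordering of the simulators. $\Sigma^{(1)}$ maps $\mbf{A}$'s system of $\rho_2$ to her system of $\rho_1$, so it must be applied \emph{after} $\Sigma^{(2)}$. I also need to confirm that the formal definition does not require simulators to be of a restricted form, for instance tied to the honest map; if it does, I will check that compositions of admissible simulators remain admissible.

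One further point is whether the adversary's view in the middle stage needs to be simulated at all. It does not in the above: a malicious party is modeled as keeping the source system, and any intermediate processing is a post-processing of that system, which contractivity already handles.
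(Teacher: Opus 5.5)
Your proposal is correct and follows essentially the same route as the paper. For parallel composition, the paper takes the product simulator and cites subadditivity of trace distance under tensor products, which is exactly what your one-factor-at-a-time hybrid proves. For sequential composition, it uses the same reversed-order simulator $\mathcal{S}_1\circ\mathcal{S}_2$ and the same hybrid: the second-stage honest map on the honest side and the first-stage simulator on the corrupt side, applied to the intermediate target state. It then concludes with the triangle inequality and data processing, as you do.

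Your hedges about reference systems and restricted simulator classes are not needed. Def.~\ref{def:eQSNIR} has no external reference system and allows arbitrary simulator channels.
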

\noindent
In the context of a preprocessing phase for 2PC, parallel and sequential composition allows us to bound the total error when extracting multiple correlations from product quantum resource states and employing these extracted correlations in further computation. 

\subsection{Computing QSNIR Privacy from Perfect Correctness}
When the target state is a classical correlation, $\rho_{\tgt}^{\X\Y}\sim(\X,\Y)$, and we fix the source state to be pure, a QSNIR details the problem of securely extracting classically correlated randomness from entanglement. When $\epsilon=0$ in this setting, we have a QSNIR with perfect correctness, which is well modeled by a reduction from $(\X,\Y)$ to a state, $\ket{\mrm{\Psi}_{\X\Y}}^{\mbf{AB}}$, representing one of its coherent embeddings (see Def.~\ref{def:coh-embed}). When both parties measure their part of the embedding in the computational basis, they obtain of a sample of the correlation. These measurements are described on average by the completely dephasing channel. Fig.~\ref{fig:CQSNIR}(a) formally depicts this reduction.

Here we prove the necessary and sufficient conditions for this reduction to also have perfect privacy.
\begin{theorem}[Perfect Classical-QSNIRs. Informal, see Thm.~\ref{thm:perf-priv}]
    A reduction from a correlation to its coherent embedding has perfect privacy, if and only if the correlation is trivial.
\end{theorem}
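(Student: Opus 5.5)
We work in the setting of the statement. The source state is the coherent embedding $\ket{\mrm{\Psi}_{\X\Y}}^{\mbf{AB}} = \sum_{x,y}\sqrt{p(x,y)}\,e^{i\theta_{xy}}\ket{x}\ket{y}$. Each honest party applies the completely dephasing channel in the computational basis. By the symmetry of the definition it suffices to treat a cheating $\mbf{B}$ against an honest $\mbf{A}$.

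\textbf{Real and ideal worlds.} In the real world the environment holds the cq state
\[
\rho_{\mrm{real}} = \sum_x p(x)\,\op{x}{x}\otimes \op{\psi_x}{\psi_x},
\qquad
\ket{\psi_x} \propto \sum_y \sqrt{p(y|x)}\,e^{i\theta_{xy}}\ket{y}.
\]
In the ideal world the environment holds $\sum_{x} p(x)\op{x}{x}\otimes \mc{S}(\sigma_x)$, where $\sigma_x = \sum_y p(y|x)\op{y}{y}$ and $\mc{S}$ is the simulator channel. Perfect privacy means there is a channel $\mc{S}$ with $\mc{S}(\sigma_x)=\op{\psi_x}{\psi_x}$ for every $x$ in the support. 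This is equivalent to the trace-distance privacy error being zero, because the classical register $x$ is block diagonal.

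\textbf{Trivial correlations are perfectly private.} I take ``trivial'' to mean the zero-error / common-information structure of Wolf--Wullschleger that the paper refers to. Concretely, $(\X,\Y)$ decomposes into blocks indexed by the common part $k$, and within each block $\X$ and $\Y$ are independent. Here the phases can be taken as block-local products, possibly after local unitaries, which is where the earlier equivalence of privacy under local unitaries is used. Then $\ket{\psi_x}$ depends only on the block $k(x)$.

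The simulator is built directly. It measures which block $k$ the value $y$ lies in; this is possible because the blocks have disjoint supports on $\Y$. It then prepares $\ket{\psi_{k}}$. This reproduces the real state exactly, and the same construction works for a cheating $\mbf{A}$.

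\textbf{Necessity.} Suppose a channel $\mc{S}$ maps the diagonal states $\sigma_x$ to the pure states $\ket{\psi_x}$. I will use two facts, both from monotonicity of fidelity under $\mc{S}$.

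1. If $\sigma_x\perp\sigma_{x'}$, the images may be arbitrary; no constraint arises.
2. If $\sigma_x$ and $\sigma_{x'}$ have overlapping supports, write $\sigma_x = \lambda\,\omega + (1-\lambda)\,\tau$ with a common component $\omega$. The image of $\sigma_x$ is a pure state, and a pure state is extremal. Hence $\mc{S}(\omega)$ must equal $\ket{\psi_x}$, and for the same reason it must equal $\ket{\psi_{x'}}$.

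Consequently $\ket{\psi_x}$ is constant on each connected component of the bipartite support graph. I will run this argument using the fact that $\mc{S}(\op{y}{y})$ must equal $\op{\psi_x}{\psi_x}$ for every $y$ in the support of $p(\cdot|x)$, which follows from the decomposition of $\sigma_x$ together with purity.

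Write $\ket{\psi_C}$ for the common vector on a component $C$. Its squared amplitudes are $p(y|x)$, so $p(y|x)$ is independent of $x$ within $C$. This is exactly conditional independence given the common part, i.e.\ the correlation is trivial. Applying the same argument with the roles of the parties swapped handles cheating $\mbf{A}$.

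\textbf{Main obstacle.} I expect the difficulty to lie in matching the formal definition of ``trivial'' and of the coherent embedding, especially the phases $\theta_{xy}$. The fact that $\ket{\psi_x}$ is constant on a component forces the phases to factor; this must be reconciled with the definition, or absorbed by local unitaries. I would first check the equivalence using the pure-state extremality argument above.
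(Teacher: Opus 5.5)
Your necessity direction is the paper's argument. Purity of $\mc{S}(\sigma_x)=\sum_y p(y|x)\,\mc{S}(\op{y}{y})$ forces $\mc{S}(\op{y}{y})=\op{\psi_x}{\psi_x}$ on the support of $p(\cdot|x)$; this is extremality of pure states, not fidelity monotonicity. Hence $\psi_x$ is constant on components. You read off the diagonal $|\ip{y}{\psi_x}|^2=p(y|x)$ rather than the paper's Bhattacharyya overlap, which even makes this direction valid for arbitrary phases. Your block simulator is also the paper's $\mc{S}(\op{y}{y})=\op{\psi_x}{\psi_x}$ with $x\in\mc{X}_y$.

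The gap is in sufficiency as you set it up, with general phases $\theta_{xy}$. The step ``the phases can be taken as block-local products, possibly after local unitaries'' is false, and with phases triviality is \emph{not} sufficient. Take the uniform product distribution on $\{0,1\}^2$ with $\theta_{11}=\pi$ and all other phases zero. The correlation is trivial, but the embedding $\frac{1}{2}(\ket{00}+\ket{01}+\ket{10}-\ket{11})$ gives $\ket{\psi_0}=\ket{+}$ and $\ket{\psi_1}=\ket{-}$, while $\sigma_0=\sigma_1=\mbb{I}/2$. So no channel maps $\sigma_x\mapsto\op{\psi_x}{\psi_x}$ for both $x$. Lemma~\ref{lem:LU-QSNIR} cannot absorb this. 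The phased state is maximally entangled, whereas the canonical embedding $\ket{{+}{+}}$ is a product state, so they are not LU-equivalent at all. Diagonal local unitaries, which commute with the honest dephasing and so preserve reduction and target, only remove phases of the form $\alpha_x+\beta_y$.

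The repair is to prove what Thm.~\ref{thm:perf-priv} actually states, namely the case of the canonical embedding $\theta_{xy}\equiv 0$. Then $\ket{\psi_x}=\sum_y\sqrt{p(y|x)}\ket{y}$ depends only on the block of $x$ by triviality, and your simulator works verbatim, giving exactly the paper's proof. If you want to keep phases, you must add the extra hypothesis of Cor.~\ref{cor:perf-priv-phase}. On each block this amounts to $\theta_{xy}=\alpha_x+\beta_y$ on the support. Then $\op{\psi_x}{\psi_x}$ is again block-constant and the same simulator applies; the asymmetry between your two directions with phases is exactly this extra condition.
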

\noindent
Triviality refers to constraints on entropies for the common and dependent random variables constructed from the correlation, which we discuss in Sec.~\ref{sec:PRE-com-dep}.

When the correlation is nontrivial, $\delta>0$ necessarily. Interestingly, we can lower bound this $\delta$ by constructing an alternative definition for privacy based on MED optimal guessing probabilities. Here, we define this error, denoted $\delta_{\textup{MED}}$, as the difference between the optimal guessing probability for the adversary given either a sample of the ideal correlation, or access to the reduced state ensemble of the embedding, $\{\ket{\psi_{x}}\}_{x}$, which they see in the real world. A diagram for this privacy game is presented in Fig.~\ref{fig:CQSNIR}(b).
\begin{lemma}[MED Privacy $\leq$ QSNIR Privacy. Informal, see Lem.~\ref{lem:min-err-LB}]
    For any reduction from $(\X,\Y)$ to its embedding $\ket{\mrm{\Psi}_{\X\Y}}^{\mbf{AB}}$, $\delta_{\textup{MED}}\leq\delta_{\textup{QSNIR}}(\eqqcolon \delta)$ always holds.
\end{lemma}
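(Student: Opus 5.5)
The plan is to express both quantities as optimizations over the same cheating party's strategies and show that any MED-type distinguishing strategy yields a distinguisher in the QSNIR simulation game of at least the same advantage. Fix the cheating party, say $\mbf{B}$, with $\mbf{A}$ honest. In the real world the environment holds the cq state $\rho_{\textup{real}}=\sum_x p_X(x)\op{x}{x}\otimes\op{\psi_x}{\psi_x}$, obtained by dephasing $\mbf{A}$'s half of $\ket{\mrm{\Psi}_{\X\Y}}$. In the ideal world it holds $\rho_{\textup{ideal}}^{\mc{S}}=\sum_{x,y}p_{XY}(x,y)\op{x}{x}\otimes\mc{S}(\op{y}{y})$ for some simulator channel $\mc{S}$. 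Then
\[
\delta_{\textup{QSNIR}}=\min_{\mc{S}}\tfrac12\|\rho_{\textup{real}}-\rho_{\textup{ideal}}^{\mc{S}}\|_1 .
\]
The MED error is
\[
\delta_{\textup{MED}}=P_{\guess}(X|\{p_x,\psi_x\})-P_{\guess}(X|Y),
\]
where the first term is the optimal MED success over POVMs $\{M_x\}$ on $\mbf{B}$'s register and the second is $\sum_y\max_x p_{XY}(x,y)$.

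The key step is to take an optimal MED measurement $\{M_x\}$ and build from it a two-outcome test for the environment, $T=\sum_x\op{x}{x}\otimes M_x$, with $0\le T\le I$. By Helstrom's bound, for every $\mc{S}$,
\[
\tfrac12\|\rho_{\textup{real}}-\rho^{\mc{S}}_{\textup{ideal}}\|_1\ge\tr[T\rho_{\textup{real}}]-\tr[T\rho^{\mc{S}}_{\textup{ideal}}].
\]
The first term is exactly the optimal MED guessing probability. For the second, $\tr[T\rho^{\mc{S}}_{\textup{ideal}}]=\sum_{x,y}p_{XY}(x,y)\,\tr[M_x\mc{S}(\op{y}{y})]$, which is the success probability of guessing $X$ from $Y$ via the classical garbling $y\mapsto$ outcome distribution of $\mc{S}^\dagger(M_x)$ evaluated on $\op{y}{y}$. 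This is a valid stochastic guessing strategy from $Y$, so it is at most $P_{\guess}(X|Y)$; this is essentially Blackwell's garbling monotonicity. Hence every simulator incurs distance at least $\delta_{\textup{MED}}$, and minimizing over $\mc{S}$ gives the claim. The argument is symmetric for a cheating $\mbf{A}$.

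I expect the main obstacle to be matching the paper's formal Def.~\ref{def:eQSNIR}, in particular whether the environment also sees the honest party's output register or a purification. If the environment holds more, then the distance only increases by data processing, since tracing out extra registers cannot increase trace distance, so the bound survives. I would check this first, and also confirm that $\delta_{\textup{MED}}$ is defined without an extra normalization, for example a factor relative to $p_X$.
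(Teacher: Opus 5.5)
Your proposal is correct and follows essentially the same route as the paper's proof of Lem.~\ref{lem:min-err-LB}: fix a simulator $\mc{S}$, lower-bound the trace distance between the real and ideal cq states by the block-diagonal test $\sum_x\op{x}{x}\otimes M_x$ with $\{M_x\}_x$ an MED-optimal POVM, identify the real-world term as $P_{\guess}(\X|\mbf{B})_{\textup{real}}$, bound the ideal-world term by $P_{\guess}(\X|\Y)_{\textup{ideal}}$ via data processing, and minimize over $\mc{S}$; your explicit garbling $y\mapsto\bra{y}\mc{S}^{\dagger}(M_x)\ket{y}$ simply makes that data-processing step concrete. Your reading of the two cq states also matches Def.~\ref{def:eQSNIR} and Eq.~\ref{eq:QSNIR-priv-from-embedding}, so the extra-register concern does not arise.
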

\noindent
We prove this lemma via a continuity inequality for the conditional minimum-entropy of the cq states in the real and ideal worlds. To compute $\delta_{\textup{MED}}$ we can utilize the same strongly dual SDP that is used to compute the optimal guessing probability for MED (see Eq.~\ref{eq:sdp-med}).

We also demonstrate how the optimal $\delta_{\textup{QSNIR}}>0$ can be computed exactly via its own strongly dual SDP pair (see Eq.~\ref{eq:sdp-qsnir}). This SDP provides an operational interpretation to privacy related to our simulation-based framework. The task in this framework can be posed as follows, given a corrupt $\mbf{B}$: In the real world $\mbf{A}$ obtains a sample $x$ and $\mbf{B}$ obtains a reduced state $\ket{\psi_{x}}$ encoded on their registers. They output both resources, which together form a cq state, to a third party, the environment. In the ideal world $\mbf{A}$ again obtains $x$, but $\mbf{B}$ now obtains a state prepared by their simulator conditioned on $y$, and again they output the cq state to the environment. Privacy then quantifies the ability of this environment to distinguish the two cq states on average, which is exactly computed by the SDP pair that we present. This game is depicted in Fig.~\ref{fig:CQSNIR}(c). 

\begin{figure}[t]
    \centering
    \includegraphics[width=0.9\linewidth]{MEDvQSNIR.pdf}
    \caption{
    A QSNIR from a classical correlation, $(\X,\Y)$, to its coherent embedding, $\ket{\mrm{\Psi}_{\X\Y}}$, shared by parties $\{\mbf{A},\mbf{B}\}$. 
    (a) Perfect correctness via dephasing. 
    (b) The MED privacy game. 
    (c) The QSNIR privacy game. 
    Single-lined arrows denote the flow of quantum information, while double-lined arrows denote classical information. Real world operation are contained within green boxes, while ideal world operations are those in orange boxes.
    }
    \label{fig:CQSNIR}
\end{figure}

\subsection{Optimal QSNIR Privacy Errors for 2PC Correlations}
We apply our results for computing $\delta_{\textup{MED}}$ and $\delta_{\textup{QSNIR}}$ to several common preprocessing correlations for 2PC. This includes binary oblivious key (denoted $\msf{OK}$), as well as the one-parameter family of binary symmetric correlations (denoted $\msf{BSC}_{q}$) and binary erasure correlations (denoted $\msf{BEC}_{q}$), which are set by $q\in[0,1]$. We also compare these results to the case of symmetric key (denoted $\msf{SK}$), where perfect privacy is trivial. In Tab.~\ref{tab:privacy-errors} we report these results, and reference the relevant theorems where each optimal privacy error is proved.

On the one hand, these results, particularly in the case of QSNIR privacy, demonstrate operational leakages for primitive correlations that hold through non-interactive composition. Proving these exact errors necessitates a variety of technical insights into symmetries that emerge in the coherent embeddings of these correlations. Of note is that the value for $\delta_{\textup{QSNIR}}^{\msf{OK}}$ (which is the same given either a corrupt $\mbf{A}$ or $\mbf{B}$) is related to the golden ratio. This arises from how the proof for this error exploits a cyclic invariance in the bipartite connecting graph for this correlation.

On the other hand, the whole of Tab.~\ref{tab:privacy-errors} demonstrates a richness in the gap between $\delta_{\textup{MED}}$ and $\delta_{\textup{QSNIR}}$, which necessitates further exploration. We see examples where $\delta_{\textup{MED}}=\delta_{\textup{QSNIR}}=0$, where $\delta_{\textup{MED}}=\delta_{\textup{QSNIR}}>0$, where $\delta_{\textup{MED}}\neq\delta_{\textup{QSNIR}}$ with $\delta_{\textup{MED}}=0$ and $\delta_{\textup{QSNIR}}>0$, and where $\delta_{\textup{MED}}\neq\delta_{\textup{QSNIR}}$ and both are nonzero. This motivates future work to better understand the source of this gap, in relation to the operational decision-theoretic problems that MED and QSNIR address.

\newlength{\colw}
\setlength{\colw}{2.0cm}

\begin{table}[t]
    \centering
    \renewcommand{\arraystretch}{1.5}
    \begin{tabular}{|l||c|c||c|c|}
        \hline
        \multirow{2}{*}{$(\X,\Y)$} & \multicolumn{2}{c||}{$\delta_{\textup{MED}}$} & \multicolumn{2}{c|}{$\delta_{\textup{QSNIR}}$} \\
        \cline{2-5}
        & \makebox[\colw]{$\mbf{A}$} & \makebox[\colw]{$\mbf{B}$} & \makebox[\colw]{$\mbf{A}$} & \makebox[\colw]{$\mbf{B}$} \\
        \hline\hline
        $\msf{SK}$ (Cor.~\ref{cor:SK-priv})  & \multicolumn{2}{c||}{\makebox[2\colw]{$0$}} & \multicolumn{2}{c|}{\makebox[2\colw]{$0$}} \\
        \hline
        $\msf{OK}$ (Thm.~\ref{thm:priv-OK}) & \multicolumn{2}{c||}{\makebox[2\colw]{$(2\sqrt{2}-1)/8$}} & \multicolumn{2}{c|}{\makebox[2\colw]{$(\sqrt{5}-1)/4$}} \\
        \hline\hline
        $\msf{BSC}_{q}$ (Thm.~\ref{thm:BSC-priv}) & \multicolumn{2}{c||}{\makebox[2\colw]{$0$}} & \multicolumn{2}{c|}{\makebox[2\colw]{$\delta^{\msf{BSC}}_{\textup{QSNIR}}(q)$}} \\
        \hline
        $\msf{BEC}_{q}$ (Thm.~\ref{thm:BEC-priv}) & \makebox[\colw]{$\delta^{\msf{BEC}}_{\textup{MED},\mbf{A}}(q)$} & \makebox[\colw]{$\delta^{\msf{BEC}}_{\mbf{B}}(q)$} & \makebox[\colw]{$\delta^{\msf{BEC}}_{\textup{QSNIR},\mbf{A}}(q)$} & \makebox[\colw]{$\delta^{\msf{BEC}}_{\mbf{B}}(q)$} \\
        \hline
    \end{tabular}
    \setlength{\abovecaptionskip}{10pt}
    \caption{Optimal privacy errors for 2PC correlations. Cells are merged horizontally when the error is the same for both $\mbf{A}$ and $\mbf{B}$. For $\msf{BEC}_{q}$ the error is the same for MED and QSNIR privacy, when $\mbf{B}$ is corrupt. We therefore denote it by a single label.}
    \label{tab:privacy-errors}
\end{table}

\subsection{Beyond Canonical Embeddings}
The results in Tab.~\ref{tab:privacy-errors} are fixed for coherent embeddings, where there are explicitly no relative phases between the basis states supported on the correlation. Including these phases in our optimization problem renders it no longer convex. While we leave this larger optimization problem for future study, we do report two results, supporting the belief that these added phases can only increase the optimal QSNIR privacy error. For the case when $(\X,\Y)$ is a two-bit correlation, such that its embedding is a two-qubit state, we prove that $\delta_{\textup{QSNIR}}$ is upper-bounded by a quantity that is minimized when the relative phase is zero. We also report that our QSNIR privacy errors are numerically minimized when the phases are null, for several of the correlations we consider. These results are deferred to App.~\ref{app:BEY}.

\section{\label{sec:PRE}Preliminaries}
\subsection{\label{sec:PRE-corr}Classical Correlations}
Let $\Z$ denote a random variable ranging over the finite alphabet $\mc{Z}$, with a probability mass function (pmf) $p_{\Z}(z)\coloneqq\Pr[\Z=z]$ for $z\in\mc{Z}$.  We denote a sampling from $\Z$ as $z\samp \msf{Z}$. The Shannon entropy contained in $\Z$ is expressed as
\begin{equation}
    H(\Z)\coloneqq-\sum_{z\in\mc{Z}}p_\Z(z)\log p_\Z(z).
\end{equation}
When $\mc{Z}=\{0,1\}$, we can always choose $p_{\Z}(0)=1-q$ and $p_{\Z}(1)=q$, for some $q\in[0,1]$, and define the binary entropy function as $h(q)\coloneqq -q\log q-(1-q)\log(1-q)$.

Given two variables, $\Z$ and $\Z'$, taken over the same alphabet, their total variation distance is given by
\begin{equation}
    \tv(\Z,\Z')\coloneqq\frac{1}{2}\sum_{z\in\mc{Z}}\left|p_{\Z}(z)-p_{\Z'}(z)\right|, \label{eq:SD}
\end{equation}
where we say that $\Z=\Z'$ if and only if $\tv(\Z,\Z')=0$. Similarly, we define the Bhattacharyya coefficient between $\Z$ and $\Z'$ as
\begin{equation}
    \bc(\Z,\Z')\coloneqq\sum_{z\in\mc{Z}}\sqrt{p_{\Z}(z)p_{\Z'}(z)},
\end{equation}
where now $\Z=\Z'$ if and only if $\bc(\Z,\Z')=1$. In either case, these measures are symmetric and equality occurs when both random variables have identical pmfs.

Classical processing of the random variable $\Z$ into another, $\Z'$, over a possibly distinct alphabet, $\mc{Z}'$, refers to any stochastic map (i.e. a channel) $\mrm{N}\colon\mc{Z}\to\mc{Z}'$ that transforms $\Z\mapsto\mrm{N}(\Z)=\Z'$. For samples $z\samp\Z$ and $z'\samp\Z'$, the channel $\mrm{N}$ is well defined by the conditional pmf, $p_{\mrm{N}}(z'|z)\coloneqq\Pr[\Z'=z'|\Z=z]$. 

Let $\X$ and $\Y$ be random variables over respective alphabets $\mc{X}$ and $\mc{Y}$. We denote by $(\X,\Y)$ a pair of random variables composing a \textit{correlation}, over $\mc{X}\times\mc{Y}$, governed by a parent pmf $p_{\X\Y}(x,y)\coloneqq\Pr[\X=x,\Y=y]$. From this parent, the marginal distributions of $\X$ and $\Y$ are computed as $p_{\X}(x)=\sum_{y}p_{\X\Y}(x,y)$ and $p_{\Y}(y)=\sum_{x}p_{\X\Y}(x,y)$, respectively. Such a correlation is product when $p_{\X\Y}(x,y)=p_{\X}(x)p_{\Y}(y)$. Given a sample $x\samp \X$, we define a conditional random variable, $\Y|\X=x$, according to the conditional pmf $p_{\Y|\X}(y|x)=p_{\X\Y}(x,y)/p_{\X}(x)$. The conditional variable $\X|\Y=y$ is defined accordingly. Note that when discussing a single correlation, $(\X,\Y)$, we will often drop the subscript denoting the correlation by renaming $p\coloneqq p_{\X\Y}$, and likewise for its marginals and conditionals.

The joint entropy of a correlation is denoted by $H(\X,\Y)$. The entropy in $\Y$ conditioned on $\X$, $H(\Y|\X)$, is the average value of the entropy in $\Y|\X=x$, which is equivalent to the difference in entropy between $(\X,\Y)$ and $\X$ (i.e. $H(\Y|\X)=H(\X,\Y)-H(\X)$), and is similarly defined for $H(\X|\Y)$. The mutual information within a correlation, $I(\X\colon\Y)$, is a symmetric quantity associated with the amount of information that $\Y$ carries over $\X$, and vice versa, with asymptotically vanishing error.\footnote{Formally the mutual information is the Kullback-Leibler divergence between $(\X,\Y)$ and the product of its marginals.  It can be computed from the joint and marginal entropy as $I(\X\colon\Y)=H(\X)+H(\Y)-H(\X,\Y)$, or from the marginal and conditional entropy as $I(\X\colon\Y)=H(\X)-H(\X|\Y)=H(\Y)-H(\Y|\X)$.}

Lastly, the minimum-entropy of $\Y$ given $\X$, denoted $H_{\min}(\Y|\X)$, directly quantifies the one-shot minimum-error probability for guessing $y$ given $x$, and is computed as $H_{\min}(\Y|\X)=-\log\sum_{x}\max_{y}p(x,y)$. We also define
\begin{equation}
    P_{\textup{guess}}(\Y|\X)\coloneqq\sum_{x}\max_{y}p(x,y),
\end{equation}
as the optimal guessing probability. This definition is based on how the optimal single-copy strategy for guessing $y$ given $x$, is to simply guess the value that maximizes $p(x,y)$~\cite{Chor-1988-UB,Nisan-1996-RL}. $H_{\min}(\X|\Y)$ is defined and computed accordingly. 

\subsection{\label{sec:PRE-com-dep}Support Graphs, Common and Dependent Random Variables}
We next define ways to deterministically construct local random variables from $(\X,\Y)$, whose entropies quantify one-shot \textit{zero-error} information-theoretic properties of the underlying correlations~\cite{Wolf-2004-Zi}. These entropies serve as useful operational metrics for what is and isn't private given only a marginal of the correlation.

In order to define the common random variable of a correlation, we first define a graphical representation of the correlation.
\begin{definition}[Support Graph of $(\X,\Y)$]\label{def:supp-g}
    Let $(\X,\Y)$ be a correlation, over $\mc{X}\times\mc{Y}$, fixed by $p(x,y)$. Its support graph, $G_{\X\Y}=(V,E)$, is a bipartite graph, defined such that $V=\mc{X}\cup\mc{Y}$, and $ E=\{(x,y)\colon x\in\mc{X}, y\in\mc{Y},p(x,y)>0\}$.
\end{definition}
\noindent
From the support graph, we construct the common random variable as follows.
\begin{definition}[Common Random Variable ($\X\wedge\Y$)~\cite{Wolf-2004-Zi}]
    For $(\X,\Y)$, its common random variable, $\X\wedge\Y$, is the largest random variable that can be constructed deterministically from either $\X$ or $\Y$. Explicitly, $\X\wedge\Y\coloneqq f_{\X}(\X)=h_{\Y}(\Y)$, where
    \begin{itemize}
        \item $f_{\X}\colon\mc{X}\to2^{\mc{X}\cup\mc{Y}}$, such that $f_{\X}(x)$ is the set of $v\in V$ that  are in the connected component\footnote{The \textit{connected component} of a graph, $G$, with respect to a vertex $v\in V$, is the set of all vertices in the vertex-disjoint subgraph of $G$ that contains $v$.} of $G$, containing $x\in\mc{X}\subset V$.
        \item $h_{\Y}\colon\mc{Y}\to2^{\mc{X}\cup\mc{Y}}$, such that $h_{\Y}(y)$ is the set of $v\in V$ that are in the connected component of $G$, containing $y\in\mc{Y}\subset V$.
    \end{itemize}
\end{definition}
Operationally, $\X\wedge\Y$ is related to a one-shot local data compression problem. While $I(\X\colon\Y)$ quantifies the amount of common randomness obtainable from $(\X,\Y)$ per copy in the asymptotic limit, $H(\X\wedge\Y)$ (also referred to as the zero-error mutual information\footnote{This is also equivalent to the G\'{a}cs-K\"{o}rner common information~\cite{Gacs-1973-Ci}.}) quantifies the exact amount of common randomness obtained from each individual sample of the correlation. 

In contrast to the common random variable, we also define the dependent part of one random variable on another.
\begin{definition}[Dependent Random Variable ($\X\searrow\Y$)~\cite{Wolf-2004-Zi}]
    For $(\X,\Y)$, the dependent part of $\X$ with respect to $\Y$, $\X\searrow\Y$, is the smallest random variable that can be constructed deterministically from $\X$, such that $\X\longleftrightarrow \X\searrow\Y\longleftrightarrow\Y$ forms a Markov chain. Explicitly, $\X\searrow\Y\coloneqq  f_{\X}(\X)$, where $f_{\X}(x)=\Y|\X=x$ is a distribution.
\end{definition}
\noindent
The entropic quantity of interest for the dependent random variable is $H(\X\searrow\Y|\Y)$ (and the corresponding quantity with reversed labels). Unlike the common random variable, the operational interpretation of this quantity can be more opaque. In some sense, $\X\searrow\Y$ is the part of $\X$ that is correlated with $\Y$, after we have removed all the parts of $\X$ which are explicitly independent.\footnote{One way to see this is from the identity $I(\X\colon\Y)=H(\X\searrow\Y)-H(\X\searrow\Y|\Y)$ (Cor. 2 in~\cite{Wolf-2004-Zi}). $H(\X\searrow\Y)$ therefore is the mutual information if and only if there is no uncertainty in $\X\searrow\Y$ given $\Y$, i.e. that $H(\X\searrow\Y|\Y)=0$.}  
Of note is that when $H(\X\searrow\Y|\Y)=0$, for any $(x,x')$ which are supported on the correlation with \textit{any} $y$, we must then have that $\bc(\Y|\X=x,\Y|\X=x')=1$. Another way to say this is that, for any $(\X,\Y)$ satisfying $H(\X\searrow\Y|\Y)=0$, the conditional $\Y|\X=x$ is either a point distribution, or the same distribution for all $x$ that are supported with more than a single $y$. 

These one-shot entropies are useful in the context of non-interactive reductions, as they satisfy data processing inequalities.
\begin{lemma}[Data Processing for Zero-Error Entropies~\cite{Wolf-2005-NM}]
    Let $(\X,\Y)$ be a correlation, and let $(\mrm{M},\mrm{N})$ be local channels. The following data processing inequalities must hold:
    \begin{subequations}
        \begin{align}
            I(\X\colon\Y)-H(\X\wedge\Y) &\geq I(\mrm{M}(\X)\colon\mrm{N}(\Y))- H(\mrm{M}(\X)\wedge\mrm{N}(\Y)), \\
            H(\X\searrow\Y|\Y) & \geq H(\mrm{M}(\X)\searrow\mrm{N}(\Y)|\mrm{N}(\Y)), \\
            H(\Y\searrow\X|\X) & \geq H(\mrm{N}(\Y)\searrow\mrm{M}(\X)|\mrm{M}(\X)).
        \end{align}
    \end{subequations}
\end{lemma}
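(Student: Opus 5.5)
The plan is to reduce a general pair of local channels $(\mrm{M},\mrm{N})$ to elementary steps, and to prove monotonicity for each step. Every classical channel $\mrm{M}\colon\mc{X}\to\mc{X}'$ can be written as $\mrm{M}(\X)=g(\X,\mathsf{R})$, where $\mathsf{R}$ is local randomness independent of $(\X,\Y)$ and $g$ is deterministic. The same holds for $\mrm{N}$ with independent local randomness $\mathsf{S}$. Processing on $\mbf{A}$'s and $\mbf{B}$'s sides commutes, so it suffices to process one side at a time. There are then two elementary moves on each side: (i) appending independent local randomness, $\X\mapsto(\X,\mathsf{R})$; and (ii) applying a deterministic map, $\X\mapsto g(\X)$. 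Inequality (c) is (b) with the roles of $\X$ and $\Y$ exchanged, so I only treat (a) and (b).

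For (a), I would first rewrite the left-hand side as $I(\X\colon\Y)-H(\X\wedge\Y)=I(\X\colon\Y\,|\,\X\wedge\Y)$. This holds because $\X\wedge\Y$ is a deterministic function of both $\X$ and $\Y$.

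Next I would show that the common variable of the processed pair is a function of the old one: $\X'\wedge\Y'=\varphi(\X\wedge\Y)$. The reason is that all edges of $G_{\X'\Y'}$ come from edges of $G_{\X\Y}$ pushed through the channels. Two outputs therefore lie in a common component whenever their inputs did, so processing can merge components of the support graph but never split them.

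Let $C=\X\wedge\Y$. Conditioned on $C=c$, the channels still act locally on $(\X,\Y)|_{C=c}$, so ordinary data processing gives $I(\X'\colon\Y'\,|\,C)\le I(\X\colon\Y\,|\,C)$. It remains to show $I(\X'\colon\Y'\,|\,\varphi(C))\le I(\X'\colon\Y'\,|\,C)$. For this I would expand $I(\X'\colon \Y' C\,|\,\varphi(C))$ by the chain rule in both orders. Then I would use the fact that, within a single merged component, the old label $C$ cannot be recovered from both sides simultaneously in a way that creates extra dependence. The aim is to show the surplus term $I(\X'\colon C\,|\,\Y',\varphi(C))$ dominates $I(\X'\colon C\,|\,\varphi(C))$.

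This last comparison is the step I expect to be delicate. If a direct chain-rule argument fails, my fallback is to handle move (i) and move (ii) separately. Under move (i) the support graph is just a product with the randomness' alphabet, so both $\wedge$ and $I$ are unchanged. Under move (ii) components only merge, and I would prove the bound by induction, merging one pair of components at a time.

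For (b), under move (i) on $\X$ we have $\Y|(\X,\mathsf{R})=(x,r)\;=\;\Y|\X=x$, so $(\X,\mathsf{R})\searrow\Y=\X\searrow\Y$ and nothing changes. Under move (i) on $\Y$, the conditional distributions become $P_{\Y|x}\otimes P_{\mathsf{S}}$, so the partition defining $\X\searrow\Y$ is unchanged; moreover $H(\X\searrow\Y\,|\,\Y,\mathsf{S})=H(\X\searrow\Y\,|\,\Y)$ because $\mathsf{S}$ is independent.

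The deterministic moves are the main obstacle. Here I would use the minimality characterization of the dependent part: $\X\searrow\Y$ is the coarsest function $U$ of $\X$ making $\X - U - \Y$ a Markov chain.

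For $\Y'=h(\Y)$, the conditional $P_{\Y'|x}=h_*P_{\Y|x}$ is a function of $P_{\Y|x}$, so $\X\searrow\Y'=\psi(\X\searrow\Y)$. I would then bound $H(\psi(U)\,|\,h(\Y))$ by $H(U\,|\,\Y)$. To do this I would compare, for each value $u'$, the posterior over the fiber $\psi^{-1}(u')$ given $\Y$ with the posterior given $h(\Y)$. The key fact to exploit is that distinct $u$ in the same fiber induce identical distributions of $h(\Y)$, so $h(\Y)$ carries no information distinguishing them.

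For $\X'=g(\X)$, the new conditional is the mixture $P_{\Y|x'}=\sum_{x\in g^{-1}(x')}p(x|x')P_{\Y|x}$. I would show that the map $\X\searrow\Y\mapsto\X'\searrow\Y$ loses entropy relative to $\Y$ by splitting $H(\X'\searrow\Y\,|\,\Y)$ through the Markov chain $\X'-\X-\X\searrow\Y-\Y$.
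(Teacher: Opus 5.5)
\paragraph{The lemma is false as stated.} The steps you flag as ``delicate'' are not merely delicate: they are false. In fact the lemma cannot hold for arbitrary local channels, so your plan cannot be completed. Take $(\X,\Y)=\msf{SK}$, i.e.\ $\X=\Y=K$ for a uniform bit $K$. Let $\mrm{M}$ be a binary symmetric channel with crossover probability $q\in(0,1/2)$, and let $\mrm{N}=\id$. Then $(\mrm{M}(\X),\mrm{N}(\Y))=\msf{BSC}_q$. By the values in Tab.~\ref{tab:entropies}, all three left-hand sides vanish, while the right-hand sides equal $1-h(q)$, $h(q)$ and $h(q)$, all strictly positive. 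Shared randomness plus local noise already produces a nontrivial correlation.

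So the statement is missing a hypothesis. These quantities are monotone along processing that never discards a party's view: appending independent local randomness (your move (i)) and public messages. They then bound \emph{secure} reductions, in which a party may discard only what is simulatable from what it keeps; that is the setting of \cite{Wolf-2005-NM}. Any correct proof must invoke such a hypothesis exactly at your move (ii), where information is thrown away.

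\paragraph{Where your steps break.} Your decomposition shows this concretely. Write $\mrm{M}$ as appending a bit $E$ (equal to $1$ with probability $q$, independent of $K$), then applying $g(k,e)=k\oplus e$. Several of your claims are correct:
\begin{itemize}
\item move (i) leaves all three quantities unchanged;
\item $I(\X\colon\Y)-H(\X\wedge\Y)=I(\X\colon\Y|\X\wedge\Y)$;
\item $\X'\wedge\Y'=\varphi(C)$ for $C=\X\wedge\Y$.
\end{itemize}

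For (a), however, $C=K$ and $\varphi(C)$ is constant. Then $I(\X'\colon\Y'|C)=0$ while $I(\X'\colon\Y'|\varphi(C))=1-h(q)$. Equivalently, your surplus term $I(\X'\colon C|\Y',\varphi(C))=0$ does not dominate $I(\X'\colon C|\varphi(C))=1-h(q)$. The single merge of two components performed by $g$ raises $I-H(\wedge)$ from $0$ (for the pair $((K,E),K)$) to $1-h(q)$. So an induction over merges cannot work.

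For (b), take $\X=K$, $\Y=(K,E)$ and $h(k,e)=k\oplus e$. Then $\psi$ is injective and $H(\psi(U)|h(\Y))=h(q)>0=H(U|\Y)$. Your fiber argument only controls uncertainty \emph{within} a fiber of $\psi$. Coarsening $\Y$ can also destroy information that distinguishes \emph{different} fibers, and nothing in the unconstrained statement prevents this. The case $\X'=g(\X)$ fails the same way with $\X=(K,E)$, $\Y=K$: there $H(\X\searrow\Y|\Y)=0$, but $H(g(\X)\searrow\Y|\Y)=h(q)$.
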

\noindent
In addition to being nonincreasing under local stochastic processing, these entropies are also nonincreasing under lossless communication, such that they are general monotones for privacy in 2PC, beyond the non-interactive setting. Consequently, $H(\X\searrow\Y|\Y)=0$ if and only if $H(\Y\searrow\X|\X)=0$, and if and only if $ I(\X\colon\Y)=H(\X\wedge\Y)$. When $H(\X\searrow\Y|\Y)=0$ holds, we say the correlation is \textit{trivial}. Conversely, when $H(\X\wedge\Y)$ is large enough, any correlation can be constructed non-interactively~\cite{Gacs-1973-Ci,Witsenhausen-1975-SP,Wyner-1975-ci}.

\subsection{\label{sec:PRE-SNIR}Secure Non-Interactive Reductions (SNIR)}
Formally, SNIR exists within the UC model~\cite{Canetti-2001-Uc}, simplified for non-interactive, input-independent functionalities. In this restricted setting, UC security coincides with the simpler standalone model for simulation-based security~\cite{Kushilevitz-2009-IS}. Here, there is no distinction between malicious and honest-but-curious security, as well as static and adaptive corruption. Hence, a SNIR requires demonstrating only a single straight-line (non-rewinding) simulator for each of $\mbf{A}$ and $\mbf{B}$ to prove privacy.

Intuitively, the goal of a SNIR is to demonstrate that the adversary's marginal view of a \textit{source} correlation reveals nothing more than the correct marginal view of the \textit{target} correlation. This is proved by demonstrating that a simulator can reconstruct this view of the \textit{real world} source correlation, given only the corresponding view of the \textit{ideal world} target correlation. Let $(\X_{\src},\Y_{\src})$ and $(\X_{\tgt},\Y_{\tgt})$ be the source and target correlation, respectively. 
\begin{definition}[$(\epsilon,\delta_{\mbf{A}},\delta_{\mbf{B}})$-SNIR]\label{def:eSNIR}
    A pair of local channels $(\mrm{N}^{\mbf{A}},\mrm{N}^{\mbf{B}})$, where $\mrm{N}^{\mbf{A}}\colon\mc{X}_{\src}\to\mc{X}_{\tgt}$ and $\mrm{N}^{\mbf{B}}\colon\mc{Y}_{\src}\to\mc{Y}_{\tgt}$, forms an $(\epsilon,\delta_{\mbf{A}},\delta_{\mbf{B}})$-SNIR from $(\X_{\tgt},\Y_{\tgt})$ to $(\X_{\src}, \Y_{\src})$ if the following conditions are satisfied:
    \begin{enumerate}
        \item[\textup{1.}] \textup{\textbf{Correctness}}. When both parties are honest,
        \begin{equation}
            \tv\left(\left(\mrm{N}^{\mbf{A}}\left(\X_{\src}\right),\mrm{N}^{\mbf{B}}\left(\Y_{\src}\right)\right),\left(\X_{\tgt},\Y_{\tgt}\right)\right)\leq\epsilon. \label{eq:snir-cor-static}
        \end{equation}
        \item[\textup{2.}] \textup{\textbf{Simulation-based privacy against $\mbf{A}$.}} When $\mbf{B}$ is honest, there exists a simulator, given by the channel $\mrm{S}^{\mbf{A}}\colon\mc{X}_{\tgt}\to \mc{X}_{\src}$, satisfying
        \begin{equation}
           \tv\left(\left(\X_{\src},\mrm{N}^{\mbf{B}}\left(\Y_{\src}\right)\right),\left(\mrm{S}^{\mbf{A}}\left(\X_{\tgt}\right),\Y_{\tgt}\right)\right)\leq\delta_{\mbf{A}}.\label{eq:snir-priv-static-A}
        \end{equation}
        \item[\textup{3.}] \textup{\textbf{Simulation-based privacy against $\mbf{B}$.}} When $\mbf{A}$ is honest, there exists a simulator, given by the channel $\mrm{S}^{\mbf{B}}\colon\mc{Y}_{\tgt}\to \mc{Y}_{\src}$, satisfying
        \begin{equation}
            \tv\left(\left(\mrm{N}^{\mbf{A}}\left(\X_{\src}\right),\Y_{\src}\right),\left(\X_{\tgt},\mrm{S}^{\mbf{B}}\left(\Y_{\tgt}\right)\right)\right)\leq\delta_{\mbf{B}}.\label{eq:snir-priv-static-B}
        \end{equation}
    \end{enumerate}
\end{definition}
\noindent When $\epsilon=0$ we say the reduction has \textit{perfect correctness}. When $ \delta_{\mbf{A}}=\delta_{\mbf{B}}\eqqcolon\delta=0$, we say that the reduction has \textit{perfect privacy}. When both $\epsilon=0$ and $\delta=0$, we say the reduction is perfect, or express that it is a $0$-SNIR.

In~\cite{Agarwal-2022-SN}, the authors explore a variety of criteria for the existence of two-party SNIRs, analyzing the spectral properties of correlations and reductions between them.\footnote{For example, $0$-SNIRs exist between correlations with uniform marginals, only when the eigenvalues of the classical Gram matrix, with respect to either side of the distribution, for the target are contained within the source.}
Within the present scope, we focus specifically on extracting correlations from entanglement. The coherent nature of how these correlations are embedded in quantum systems admits new nontrivialities when deriving bounds on security, such that we do not expect results of classical SNIRs to directly carry over. 

\subsection{\label{sec:PRE-quant}Quantum Resources and Coherent Embeddings} 
The physical states of a quantum system, $\rho$, are represented by trace-one, positive-semidefinite operators, known as density matrices, acting on a complex Hilbert space $\mc{H}$. We let $\D(\mc{H})=\{\rho:\rho\succeq0,\tr[\rho]=1\}$ denote the set of all density matrices on $\mc{H}$. Pure states are rank-one density matrices, i.e. $\rho=\op{\psi}{\psi}$, which are often more simply expressed by normalized vectors $\ket{\psi}\in\mc{H}$.\footnote{At times we will refer to a pure density matrix simply as $\psi$.}  In this work, we only consider finite-dimensional Hilbert spaces. It is also conceptually helpful to represent a classical random variable $\Z$ as a density matrix that is diagonal in the computational basis, with eigenvalues equal to the pmf fixing $\Z$. We write $\rho\sim\Z$ when this correspondence holds:
\begin{align}
    \rho\sim\Z\quad\Leftrightarrow\quad \rho=\sum_{z\in\mc{Z}}p_{}(z)\op{z}{z}.
\end{align}

We define symmetric distances between quantum states as follows.
The trace distance between two states $\rho,\sigma\in\D(\mc{H})$ is defined as
\begin{equation}
    \td\left(\rho,\sigma\right)\coloneqq\frac{1}{2}\left\lVert\rho-\sigma\right\rVert_{1}=\frac{1}{2}\tr\left[\left|\rho-\sigma\right|\right]. 
\end{equation}
Again we say that $\rho=\sigma$ if and only if $\td(\rho,\sigma)=0$. The fidelity between $\rho$ and $\sigma$ is defined as
\begin{equation}
    \f(\rho,\sigma)\coloneqq\left\|\sqrt{\rho}\sqrt{\sigma}\right\|_{1}=\tr\left[\sqrt{\sqrt{\rho}\sigma\sqrt{\rho}}\right],
\end{equation}
where $\rho=\sigma$ if and only if $\f(\rho,\sigma)=1$. Note that $\td(\rho,\sigma)=\tv(\Z,\Z')$ and  $\f(\rho,\sigma)=\bc(\Z,\Z')$, whenever $\rho\sim\Z$ and $\sigma\sim\Z'$.

Physical transformations performed on a quantum system are represented by completely-positive trace-preserving (CPTP) maps, also known as quantum channels.  A CPTP map $\mc{N}\colon\D(\mc{H})\to \D(\mc{H}')$ sends every density matrix $\rho\in\D\left(\mc{H}\right)$ to another density matrix $\rho'=\mc{N}(\rho)\in\D(\mc{H}')$.  We let $\id$ denote the trivial map on $\D\left(\mc{H}\right)$ such that $\id(\rho)=\rho $ for all $\rho\in\D\left(\mc{H}\right)$. As with the case of states, it is useful to similarly represent a classical channel $\textup{N}\colon\mc{Z}\to\mc{Z}'$ as a CPTP map $\mc{N}\colon\D(\mbb{C}^{|\mc{Z}|})\to \D(\mbb{C}^{|\mc{Z}'|})$. We write $\mc{N}\sim \textup{N}$, such that for some $\rho\sim\Z$, we have the correspondence:
\begin{equation}
    \mc{N}\sim \textup{N} \quad\Leftrightarrow\quad  \mc{N}(\rho) = \sum_{z'\in\mc{Z}'}\sum_{z\in\mc{Z}}p_{\mrm{N}}\left(z'\middle|z\right)\op{z'}{z}\rho\op{z}{z'},
\end{equation}
which outputs a state $\rho'\sim\Z'$. 

A special class of CPTP maps are \textit{measurement maps}, $\mc{M}\colon\D(\mc{H})\to\D(\mbb{C}^{|\mc{Z}|})$, which have the form
\begin{align}
   \mc{M}\colon\quad \rho \mapsto \sum_{z\in\mc{Z}} \tr[M_{z}\rho]\op{z}{z},
\end{align}
where the $\{M_{z}\}_{z\in\mc{Z}}$ form a positive operator-valued measure (POVM), meaning that the $M_{z}\succeq0$ for all $z$ and $\sum_{z}M_{z}=\mbb{I}$.  When the POVM elements are furthermore orthogonal, satisfying $M_{z}M_{z'}=\delta_{z,z'}M_{z}$ for any pair of $z,z'\in\mc{Z}$ (where $\delta_{z,z'}$ is the Kronecker delta), we further specify the POVM as a projection-valued measure (PVM), and each element as an orthogonal projector. From Born's rule, the POVM associates with every quantum state $\rho$ a pmf for $z\in\mc{Z}$ given by $p(z)=\tr[M_{z}\rho]$.  We can thus interpret every measurement map $\mc{M}$ as transforming a given $\rho$ into the random variable $\Z$, ranging over $\mc{Z}$, for which $\mc{M}(\rho)\sim\Z$.

The opposite of a measurement map is a preparation map $\mc{P}\colon\D\left(\mbb{C}^{|\mc{Z}|}\right)\to\D\left(\mc{H}\right)$, which has the form
\begin{align}
    \mc{P}\left(\op{z}{z}\right)=\sigma_z,\qquad z\in\mc{Z};\;\; \sigma_z\in\D\left(\mc{H}\right).
\end{align}
Intuitively, the map $\mc{P}$ prepares a state, $\sigma_z$, whenever it is fed a sample $z\samp\Z$.

A bipartite quantum system, for parties $\{\mbf{A},\mbf{B}\}$, is represented within a tensor product space $\mc{H}^{\mbf{AB}}\coloneqq\mc{H}^{\mbf{A}}\otimes\mc{H^{\mbf{B}}}$.  
If $\rho^{\mbf{AB}}$ is the global state shared by the parties, then the state of an individual party, say $\mbf{A}$, is obtained by taking a partial trace over the other system, $\rho^{\mbf{A}}=\tr_{\mbf{B}}[\rho^{\mbf{AB}}]$. 
A bipartite state is product, when the joint state is the tensor product of its marginals, i.e. $\rho^{\mbf{AB}}=\rho^{\mbf{A}}\otimes\rho^{\mbf{B}}$. 

We say that a bipartite CPTP map, $\mc{N}^\mbf{AB}$, is a \textit{local map} if it decomposes into a tensor product of CPTP maps, $\mc{N}^\mbf{AB}=\mc{N}^{\mbf{A}}\otimes\mc{N}^{\mbf{B}}$. A \textit{local measurement} similarly has the form $\mc{M}^\mbf{AB}=\mc{M}^{\mbf{A}}\otimes\mc{M}^{\mbf{B}}$, defined by a pair of POVMs, $\{M_{x}^{\mbf{A}}\}_{x\in\mc{X}}$ and $\{M_{y}^{\mbf{B}}\}_{y\in\mc{Y}}$.
For the input state $\rho^{\mbf{AB}}$, we thus have $\mc{M}^\mbf{AB}(\rho^{\mbf{AB}})\sim (\X,\Y)$, where $ (\X,\Y)$ is a correlation with parent pmf
\begin{align}
    p\left(x,y\right)=\tr\left[\left(M_{x}^{\mbf{A}}\otimes M_{y}^{\mbf{B}}\right)\rho^{\mbf{AB}}\right]
\end{align}
over the alphabet $\mc{X}\times\mc{Y}$.

Local maps play a central role in this work.
Of particular interest are local maps in which $\mbf{A}$ or $\mbf{B}$ acts trivially. For example, consider a pair of measurement channels, $\mc{M}^{\mbf{A}}\otimes\id^{\mbf{B}}$. $\mbf{A}$'s measurement channel will transform $\rho^{\mbf{AB}}\in\D(\mc{H}^{\mbf{AB}})$ into a classical-quantum (cq) state
\begin{equation}
    \rho^{\X\mbf{B}}=\mc{M}^{\mbf{A}}\otimes\id^{\mbf{B}}(\rho^{\mbf{AB}})=\sum_{x}\op{x}{x}^{\X}\otimes \tr_{\mbf{A}}\left[\left(M_{x}^{\mbf{A}}\otimes\mbb{I}^{\mbf{B}}\right)\rho^{\mbf{AB}}\right],
\end{equation}
where to be explicit we write that $\rho^{\X\mbf{B}}\in\D(\mbb{C}^{|\mc{X}|}\otimes\mc{H}^{\mbf{B}})$. Here, $\mbf{B}$ can still hold a coherent quantum system, while $\mbf{A}$ has measured their system and just holds the outcome $x\samp\X$, which is denoted as a classical register carrying the superscript $\X$ in the cq state. Along these lines, a correlation $(\X,\Y)$, initially shared between $\{\mbf{A},\mbf{B}\}$, is transformed into a cq state by applying a preparation map to one of the subsystems in the state $\rho^{\X\Y}\sim(\X,\Y)$, e.g. $\id^{\mbf{A}}\otimes\mc{P}^{\mbf{B}}(\rho^{\X\Y})=\rho^{\X\mbf{B}}$. 

While we will not make use of the von Neumann entropies that are defined for quantum systems, we will refer to the quantum minimum-entropy, denoted $H_{\min}(\mbf{A}|\mbf{B})_{\rho}$ for $\rho^{\mbf{AB}}$. Formally,
\begin{equation}
    H_{\min}(\mbf{A}|\mbf{B})_{\rho}:=\sup_{\sigma^{\mbf{B}},\lambda}\{\lambda\in\mbb{R}, \sigma_{\mbf{B}}\in \D(\mc{H}^{\mbf{B}}):\rho^{\mbf{AB}}\succeq 2^{-\lambda}\mathbb{I}^{\mbf{A}}\otimes\sigma^{\mbf{B}}\},
\end{equation}
however, in the context of cq states, $\rho^{\X\mbf{B}}$, it picks up an operational interpretation as the ability of $\mbf{B}$ to guess an $x\samp\X$, given their quantum side-information~\cite{Konig-2009-OM}. 

\subsubsection{Coherent Embeddings.}  We define a special class of states which encodes a fixed correlation.
\begin{definition}[Embeddings]\label{def:coh-embed}
    Let $(\X,\Y)$ be a correlation, fixed by a parent pmf $p(x,y)$, and let $\varphi(x,y)$ be a real function, where  $\varphi:\mc{X}\times\mc{Y}\to[0,2\pi)$.
    We define the coherent embedding of $(\X,\Y)$ as the pure state
    \begin{equation}
        \ket{\mrm{\Psi}_{\X\Y}}^{\mbf{AB}}\coloneqq\sum_{x,y}\sqrt{p(x,y)}e^{i\varphi(x,y)}\ket{x}^{\mbf{A}}\otimes\ket{{y}}^{\mbf{B}}.
    \end{equation}
    We refer to $\varphi(x,y)$ as the additional \textup{phase information} encoded in the embedding.
    When $\varphi(x,y)=0$, for all $x$ and $y$, we say that $\ket{\mrm{\Psi}_{\X\Y}}^{\mbf{AB}}$ is the \textup{canonical} embedding. Note that unless $(\X,\Y)$ is a product distribution, $\ket{\mrm{\Psi}_{\X\Y}}^{\mbf{AB}}$ is necessarily entangled. 
\end{definition}
\noindent
Clearly given such a global state, the local measurement channels $\mc{M}^{\mbf{A}}\otimes\mc{M}^{\mbf{B}}$, given by PVMs $\{\op{x}{x}\}_{x}$ and $\{\op{y}{y}\}_{y}$, satisfy $p(x,y)=|\ip{x,y}{\mrm{\Psi}_{\X\Y}}|^{2}$.

With these states, we will be particularly interested in the reduced state of one party, given the computational basis measurement of the other. When $\mbf{A}$ measures their part of the embedding via the PVM, $\{\op{x}{x}\}_{x}$, this transforms the system into a cq state
\begin{equation}\label{eq:cq-embedding}
    \mc{M}^{\mbf{A}}\otimes\id^{\mbf{B}}(\op{\mrm{\Psi}_{\X\Y}}{\mrm{\Psi}_{\X\Y}}^{\mbf{AB}})=\sum_{x}\op{x}{x}^{\X}\otimes p(x)\op{\psi_{x}}{\psi_{x}}^{\mbf{B}},
\end{equation}
where
\begin{equation}
    \ket{\psi_{x}}\coloneqq \sum_{y}\sqrt{p(y|x)}e^{i\varphi(x,y)}\ket{y}
\end{equation}
is the reduced state of $\mbf{B}$'s system, conditioned on $x\samp\X$. 

\subsubsection{Minimum-Error State Discrimination (MED).}
Let $\{\rho_{x}\}_{x\in\mc{X}}$ be an ensemble of states, with prior probabilities $\{p(x)\}_{x\in\mc{X}}$, sourced by a classical random variable $\X$. As its name suggests, the goal of MED is to discern a particular $x\samp \X$, with minimum-error, by making a single measurement on a single copy of $\rho_{x}$. The optimal success probability of MED serves as a one-shot estimate for learning $x$, given some amount of quantum side information. We later use this success probability to construct a one-shot privacy definition. We reference~\cite{Barnett-2009-Qs} for a larger survey on state discrimination. In App.~\ref{app:med} we detail the explicit MED results and lemmas that appear in our proofs.

\section{\label{sec:QSNIR}Quantum Secure Non-Interactive Reductions (QSNIR)}
The definition we give for QSNIR handles non-interactive reductions between quantum states in general, beyond the case when the target state is a classical correlation. We allow the adversary arbitrary computational power (both quantum and classical) in their attack, giving them access to unlimited quantum side information, which they can entangle with their local registers of the initial resource state. As this framework is non-interactive, these malicious actions will appear indistinguishable to the honest party, from when the adversary is passive. Hence, the adversary in this setting is equivalent to the \textit{quantum honest-but-curious} setting of \cite{Salvail-2009-PT}.
\begin{definition}[$(\epsilon,\delta_{\mbf{A}},\delta_{\mbf{B}})$-QSNIR]\label{def:eQSNIR}
    A pair of local channels, $\mc{N}^{\mbf{AB}}\coloneqq\mc{N}^{\mbf{A}}\otimes\mc{N}^{\mbf{B}}$, where $\mc{N}^{\mbf{A}}\colon\D(\mc{H}_{\src}^{\mbf{A}})\to\D(\mc{H}_{\tgt}^{\mbf{A}})$ and $\mc{N}^{\mbf{B}}\colon\D(\mc{H}_{\src}^{\mbf{B}})\to\D(\mc{H}_{\tgt}^{\mbf{B}})$, forms an $(\epsilon,\delta_{\mbf{A}},\delta_{\mbf{B}})$-QSNIR from $\rho^{\mbf{AB}}_{\textup{tgt}}$ to $\rho^{\mbf{AB}}_{\textup{src}}$ if the following conditions are satisfied:
    \begin{enumerate}
        \item[\textup{1.}] \textup{\textbf{Correctness}}. When both parties are honest,
        \begin{equation}
            \td\left(\left(\mc{N}^{\mbf{A}}\otimes\mc{N}^{\mbf{B}}\right)\left(\rho^{\mbf{AB}}_{\textup{src}}\right),\rho^{\mbf{AB}}_{\textup{tgt}}\right)\leq\epsilon. \label{eq:qsnir-cor-static}
        \end{equation}
        \item[\textup{2.}] \textup{\textbf{Simulation-based privacy against $\mbf{A}$.}} When $\mbf{B}$ is honest, there exists a simulator, given by the channel $\mc{S}^{\mbf{A}}\colon\D(\mc{H}_{\tgt}^{\mbf{A}})\to \D(\mc{H}_{\src}^{\mbf{A}})$, satisfying
        \begin{equation}
            \td\left(\left(\id^{\mbf{A}}\otimes \mc{N}^{\mbf{B}}\right)\left(\rho^{\mbf{AB}}_{\textup{src}}\right), \left(\mc{S}^{\mbf{A}}\otimes \id^{\mbf{B}}\right)\left(\rho^{\mbf{AB}}_{\textup{tgt}}\right)\right)\leq \delta_{\mbf{A}}.\label{eq:qsnir-priv-static-A}
        \end{equation}
        \item[\textup{3.}] \textup{\textbf{Simulation-based privacy against $\mbf{B}$.}} When $\mbf{A}$ is honest, there exists a simulator, given by the channel $\mc{S}^{\mbf{B}}\colon\D(\mc{H}_{\tgt}^{\mbf{B}})\to \D(\mc{H}_{\src}^{\mbf{B}})$, satisfying
        \begin{equation}
            \td\left(\left(\mc{N}^{\mbf{A}}\otimes \mrm{id}^{\mbf{B}}\right)\left(\rho^{\mbf{AB}}_{\textup{src}}\right), \left(\mrm{id}^{\mbf{A}}\otimes \mc{S}^{\mbf{B}}\right)\left(\rho^{\mbf{AB}}_{\textup{tgt}}\right)\right)\leq \delta_{\mbf{B}}.\label{eq:qsnir-sec-static}
        \end{equation}
    \end{enumerate}
\end{definition}
\noindent 
Just as with the classical definition, when $\epsilon=0$, the reduction has \textit{perfect correctness}, and when $\delta=0$, the reduction has \textit{perfect privacy}. When both $\epsilon=0$ and $\delta=0$, we express that the reduction is perfect, or that it is a $0$-QSNIR. Note that when we choose $\rho^{\msf{XY}}_{\tgt}\sim(\X_{\tgt},\Y_{\tgt})$ and $\rho^{\msf{XY}}_{\src}\sim(\X_{\src},\Y_{\src})$ as the target and source states, and $(\mc{N}^{\mbf{A}}\sim\mrm{N}^{\mbf{A}},\mc{N}^{\mbf{B}}\sim\mrm{N}^{\mbf{B}})$ as our pair of local channels, we exactly recover the definition for SNIR.

Generally, we prove that security is maintained through the parallel and sequential composition of two QSNIRs, with at worst additive error. We first bound the total error through parallel composition.
\begin{theorem}[Parallel Composition]\label{thm:parallel}
    Let $\mc{N}^{\mbf{AB}}_{1}$ be an $(\epsilon_{1},\delta_{\mbf{A},1},\delta_{\mbf{B},1})$-QSNIR from $\rho^{\mbf{AB}}_{1,\tgt}$ to $\rho^{\mbf{AB}}_{1,\src}$, and let $\mc{N}^{\mbf{AB}}_{2}$ be an $(\epsilon_{2},\delta_{\mbf{A},2},\delta_{\mbf{B},2})$-QSNIR from $\rho^{\mbf{AB}}_{2,\tgt}$ to $\rho^{\mbf{AB}}_{2,\src}$. Then, the local channel $\mc{N}^{\mbf{AB}}=\mc{N}^{\mbf{AB}}_{1}\otimes\mc{N}^{\mbf{AB}}_{2}$ satisfies an $(\epsilon_{1}+\epsilon_2,\delta_{\mbf{A},1}+\delta_{\mbf{A},2},\delta_{\mbf{B},1}+\delta_{\mbf{B},2})$-QSNIR, when reducing $\rho^{\mbf{AB}}_{1,\tgt}\otimes\rho^{\mbf{AB}}_{2,\tgt}$ to $\rho^{\mbf{AB}}_{1,\src}\otimes\rho^{\mbf{AB}}_{2,\src}$.
\end{theorem}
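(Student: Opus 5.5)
The plan is to verify each of the three conditions of Def.~\ref{def:eQSNIR} separately for the product channel. Every case reduces to one elementary fact: trace distance is subadditive under tensor products,
\begin{equation}
    \td\left(\rho_{1}\otimes\rho_{2},\sigma_{1}\otimes\sigma_{2}\right)\leq\td\left(\rho_{1},\sigma_{1}\right)+\td\left(\rho_{2},\sigma_{2}\right).
\end{equation}
This follows from the triangle inequality through the hybrid state $\sigma_{1}\otimes\rho_{2}$, together with multiplicativity of the trace norm, $\|X\otimes\rho\|_{1}=\|X\|_{1}\|\rho\|_{1}=\|X\|_{1}$ for any density matrix $\rho$. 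I would state and prove this first as a short auxiliary fact.

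Throughout, the subsystems must be ordered consistently. The global space is $\mc{H}^{\mbf{A}_{1}\mbf{B}_{1}}\otimes\mc{H}^{\mbf{A}_{2}\mbf{B}_{2}}$. Party $\mbf{A}$ holds $\mbf{A}_{1}\mbf{A}_{2}$ and applies $\mc{N}^{\mbf{A}}=\mc{N}^{\mbf{A}}_{1}\otimes\mc{N}^{\mbf{A}}_{2}$; likewise $\mbf{B}$ applies $\mc{N}^{\mbf{B}}=\mc{N}^{\mbf{B}}_{1}\otimes\mc{N}^{\mbf{B}}_{2}$. Reordering tensor factors is a unitary permutation, which preserves trace distance. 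So $\mc{N}^{\mbf{AB}}_{1}\otimes\mc{N}^{\mbf{AB}}_{2}$ is a legitimate local channel for the bipartition $\mbf{A}_{1}\mbf{A}_{2}|\mbf{B}_{1}\mbf{B}_{2}$.

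\textbf{Correctness.} The product channel acts as $(\mc{N}^{\mbf{A}}_{1}\otimes\mc{N}^{\mbf{B}}_{1})(\rho_{1,\src})\otimes(\mc{N}^{\mbf{A}}_{2}\otimes\mc{N}^{\mbf{B}}_{2})(\rho_{2,\src})$. Subadditivity against $\rho_{1,\tgt}\otimes\rho_{2,\tgt}$ then gives the bound $\epsilon_{1}+\epsilon_{2}$.

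\textbf{Privacy.} Against $\mbf{A}$, take the simulator $\mc{S}^{\mbf{A}}\coloneqq\mc{S}^{\mbf{A}}_{1}\otimes\mc{S}^{\mbf{A}}_{2}$, built from the simulators guaranteed by each QSNIR. This is a valid CPTP map from $\mbf{A}$'s target registers to their source registers. The real state $(\id^{\mbf{A}}\otimes\mc{N}^{\mbf{B}})(\rho_{1,\src}\otimes\rho_{2,\src})$ factorizes as $(\id\otimes\mc{N}^{\mbf{B}}_{1})(\rho_{1,\src})\otimes(\id\otimes\mc{N}^{\mbf{B}}_{2})(\rho_{2,\src})$. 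The ideal state factorizes the same way with the simulators. Subadditivity yields $\delta_{\mbf{A},1}+\delta_{\mbf{A},2}$. The case against $\mbf{B}$ is symmetric.

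I do not expect a real obstacle. The only step needing care is the subadditivity lemma, together with bookkeeping that the factor reordering does not change the distance. The content of the theorem is that product simulators suffice, which is exactly what the factorization exhibits.
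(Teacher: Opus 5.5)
Your proposal is correct and takes the same route as the paper: product simulators $\mc{S}^{\mbf{A}}_{1}\otimes\mc{S}^{\mbf{A}}_{2}$ (and likewise for $\mbf{B}$), combined with subadditivity of the trace distance under tensor products, applied in the same way to correctness and to both privacy conditions. You also spell out the hybrid-argument proof of subadditivity and the subsystem-reordering bookkeeping, both of which the paper leaves implicit.
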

\noindent 
Similarly, we prove the error bound on the sequential composition of two QSNIRs.
\begin{theorem}[Sequential Composition]\label{thm:sequential}
      Let $\mc{N}^{\mbf{AB}}_{1}$ be an $(\epsilon_{1},\delta_{\mbf{A},1},\delta_{\mbf{B},1})$-QSNIR from $\rho^{\mbf{AB}}_{1,\tgt}$ to $\rho^{\mbf{AB}}_{0,\textup{src}}$, and let $\mc{N}^{\mbf{AB}}_{2}$ be an $(\epsilon_{2},\delta_{\mbf{A},2},\delta_{\mbf{B},2})$-QSNIR from $\rho^{\mbf{AB}}_{2,\tgt}$ to $\rho^{\mbf{AB}}_{1,\tgt}$. Then, the local channel $\mc{N}^{\mbf{AB}}=\mc{N}^{\mbf{AB}}_{2}\circ\mc{N}^{\mbf{AB}}_{1}$ satisfies an $(\epsilon_{1}+\epsilon_2,\delta_{\mbf{A},1}+\delta_{\mbf{A},2},\delta_{\mbf{B},1}+\delta_{\mbf{B},2})$-QSNIR, when reducing from $\rho^{\mbf{AB}}_{2,\tgt}$ to $\rho^{\mbf{AB}}_{0,\src}$.
\end{theorem}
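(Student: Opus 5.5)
The proof is a triangle inequality through a hybrid state, with simulators formed by composition. Write $\mc{N}_i=\mc{N}_i^{\mbf{A}}\otimes\mc{N}_i^{\mbf{B}}$, so that $\mc{N}=\mc{N}_2\circ\mc{N}_1=(\mc{N}_2^{\mbf{A}}\circ\mc{N}_1^{\mbf{A}})\otimes(\mc{N}_2^{\mbf{B}}\circ\mc{N}_1^{\mbf{B}})$ is again a local channel. Two tools do all the work: the triangle inequality for $\td$, and its contractivity under any CPTP map, in particular under local maps of the form $\mc{E}\otimes\id$.

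\textbf{Correctness.} Use the hybrid $\mc{N}_2(\rho_{1,\tgt})$. Then
\begin{equation*}
\td\left(\mc{N}_2\mc{N}_1(\rho_{0,\src}),\rho_{2,\tgt}\right)\leq \td\left(\mc{N}_2\mc{N}_1(\rho_{0,\src}),\mc{N}_2(\rho_{1,\tgt})\right)+\td\left(\mc{N}_2(\rho_{1,\tgt}),\rho_{2,\tgt}\right).
\end{equation*}
By contractivity under $\mc{N}_2$, the first term is at most $\td(\mc{N}_1(\rho_{0,\src}),\rho_{1,\tgt})\leq\epsilon_1$. The second term is at most $\epsilon_2$ by hypothesis.

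\textbf{Privacy against $\mbf{A}$.} Let $\mc{S}_1^{\mbf{A}}$ and $\mc{S}_2^{\mbf{A}}$ be the simulators of the two QSNIRs, and take $\mc{S}^{\mbf{A}}\coloneqq\mc{S}_1^{\mbf{A}}\circ\mc{S}_2^{\mbf{A}}$. This maps $\D(\mc{H}^{\mbf{A}}_{2,\tgt})\to\D(\mc{H}^{\mbf{A}}_{1,\tgt})\to\D(\mc{H}^{\mbf{A}}_{0,\src})$, as required. The real state is $(\id\otimes\mc{N}_2^{\mbf{B}}\mc{N}_1^{\mbf{B}})(\rho_{0,\src})$ and the ideal state is $(\mc{S}_1^{\mbf{A}}\mc{S}_2^{\mbf{A}}\otimes\id)(\rho_{2,\tgt})$. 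Use the hybrid $(\mc{S}_1^{\mbf{A}}\otimes\mc{N}_2^{\mbf{B}})(\rho_{1,\tgt})$.

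Distance from real to hybrid: write the real state as $(\id\otimes\mc{N}_2^{\mbf{B}})$ applied to $(\id\otimes\mc{N}_1^{\mbf{B}})(\rho_{0,\src})$, and the hybrid as $(\id\otimes\mc{N}_2^{\mbf{B}})$ applied to $(\mc{S}_1^{\mbf{A}}\otimes\id)(\rho_{1,\tgt})$. Contractivity under $\id\otimes\mc{N}_2^{\mbf{B}}$ bounds this distance by $\delta_{\mbf{A},1}$.

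Distance from hybrid to ideal: write both states as $(\mc{S}_1^{\mbf{A}}\otimes\id)$ applied to, respectively, $(\id\otimes\mc{N}_2^{\mbf{B}})(\rho_{1,\tgt})$ and $(\mc{S}_2^{\mbf{A}}\otimes\id)(\rho_{2,\tgt})$. Contractivity under $\mc{S}_1^{\mbf{A}}\otimes\id$ bounds this distance by $\delta_{\mbf{A},2}$.

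The triangle inequality then gives $\delta_{\mbf{A},1}+\delta_{\mbf{A},2}$. Both rewritings rely on the fact that maps acting on different tensor factors commute, $(\mc{E}\otimes\id)(\id\otimes\mc{F})=\mc{E}\otimes\mc{F}=(\id\otimes\mc{F})(\mc{E}\otimes\id)$.

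\textbf{Privacy against $\mbf{B}$.} This is symmetric: take $\mc{S}^{\mbf{B}}=\mc{S}_1^{\mbf{B}}\circ\mc{S}_2^{\mbf{B}}$ and the hybrid $(\mc{N}_2^{\mbf{A}}\otimes\mc{S}_1^{\mbf{B}})(\rho_{1,\tgt})$, which gives $\delta_{\mbf{B},1}+\delta_{\mbf{B},2}$.

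The only real obstacle is choosing the right hybrid state and composing the simulators in the correct order (inner simulator $\mc{S}_2$, outer simulator $\mc{S}_1$). Once that is fixed, every step is monotonicity of the trace distance.
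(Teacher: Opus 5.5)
Your proof is correct and follows essentially the same route as the paper. It uses the composed simulator $\mc{S}_1\circ\mc{S}_2$, the hybrid $(\mc{N}^{\mbf{A}}_2\otimes\mc{S}^{\mbf{B}}_1)(\rho^{\mbf{AB}}_{1,\tgt})$ (or its mirror image for a corrupt $\mbf{A}$), the triangle inequality, and contractivity of $\td$ under $\mc{N}^{\mbf{A}}_2\otimes\id$ and $\id\otimes\mc{S}^{\mbf{B}}_1$. The paper writes out only the corrupt-$\mbf{B}$ case, so your explicit correctness and corrupt-$\mbf{A}$ arguments fill in the cases it leaves as analogous.
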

\noindent
Proofs for these theorems can be found in App.~\ref{app:thm-para} and App.~\ref{app:thm-seq}, respectively.

Lastly, in App.~\ref{app:QSNIR-LUequi}, we prove that QSNIRs described by local unitary channels will always satisfy $\epsilon=\delta$, and when both quantities are $0$, we have \textit{local-unitary} equivalent resources.

\section{\label{sec:BND}Privacy Bounds for Reductions from Classical Correlations to Embeddings}
Our motivation for the QSNIR framework is the problem of disseminating correlated randomness for preprocessing in 2PC. Specifically, a QSNIR can demonstrate that a sufficient assumption for distributing a target correlation is to first distribute some source quantum state. Suppose $\rho_{\tgt}^{\X\Y}\sim(\X,\Y)$. Correctness is then related to how faithfully the correlation $(\X,\Y)$ can be extracted from the source quantum state, $\rho_{\src}^{\mbf{AB}}$, when both parties are honest. Each privacy error quantifies what a cheating party is able to learn about the other's honestly obtained samples, beyond what they would infer from their own ideal samples.

From a reduction that solely promises perfect correctness ($\epsilon=0$), what is the optimal $\delta$?
To answer this question we further restrict to a setting where the source state is a coherent embedding of $(\X,\Y)$, $\rho_{\textup{src}}^{\mbf{AB}}=\op{\mrm{\Psi_{\X\Y}}}{\mrm{\Psi}_{\X\Y}}^{\mbf{AB}}$, from Def.~\ref{def:coh-embed}. These states naturally admit reductions with perfect correctness. When honest, $\mbf{A}$ and $\mbf{B}$ measure in the computational basis and obtain samples $x\samp\X$ and $y\samp\Y$, respectively, recalling that $|\ip{x,y}{\mrm{\Psi}_{\X\Y}}|^{2}=p(x,y)$. 

Finally, in this section we explicitly fix $\varphi(x,y)=0$, such that we only consider reductions to canonical embeddings. As the goal is to make $\delta$ as small as possible, there may not seem any reason \textit{a priori} to assume that $\varphi(x,y)=0$ minimizes the privacy error for each party.\footnote{In~\cite{Salvail-2009-PT}, when defining the leakage of correlations with high enough dimension in terms of von Neumann entropies, there exist ensembles which are easier to distinguish pairwise, but have smaller entropy when the priors are chosen correctly. This result was first demonstrated in~\cite{Jozsa-2000-Ds}.} 
On the other hand, there is some intuition to believe that these phases can only increase $\delta$, as they are often free parameters that can only increase how well the adversary's pure state ensemble can be discriminated, increasing $\delta_{\textup{MED}}$.\footnote{For MED between pure state ensembles, the success probability generally depends on fewer parameters than the number of angles needed to fully specify each pure state in the ensemble within $\mbb{C}^{d}$~\cite{Barnett-2009-Qs,Bae-2013-Sm}.} We return to this problem in App.~\ref{app:BEY}.

For now, suppose the pair $\{\mbf{A},\mbf{B}\}$ possess the embedding $\ket{\mrm{\Psi}_{\X\Y}}^{\mbf{AB}}$, which they are guaranteed is pure, and suppose that $\mbf{A}$ is honest. As the correlation is arbitrary here, we will only focus on the case of a corrupt $\mbf{B}$. The optimal privacy error is given by
\begin{equation}
    \delta_{\mbf{B}} \coloneqq \min_{\mc{S}} \sum_{x} p(x)\textup{TD}\left(\op{\psi_{x}}{\psi_{x}}^{\mbf{B}},\mc{S}^{\mbf{B}}\left(\sum_{y}p(y|x)\op{y}{y}^{\Y}\right)\right), \label{eq:QSNIR-priv-from-embedding}
\end{equation}
where we have explicitly written out the reduced quantum state of $\mbf{B}$ in the real and ideal worlds. First note that 
\begin{equation}
    \mc{D}\left(\op{\psi_{x}}{\psi_{x}}\right)=\sum_{y}p(y|x)\op{y}{y},\label{eq:dephasing-rw-ens}
\end{equation}
where $\mc{D}$ is the completely dephasing channel, $\mc{D}(\rho)=\sum_{y}\ip{y}{\rho|y}\op{y}{y}$, which also describes the honest channel that $\mbf{B}$ would implement on average in the reduction. Therefore, QSNIR privacy in this context can be seen as a statement about how well we can invert this honest map. The linearity of $\mc{S}$ lets us phrase this as the following recovery problem: \textit{Given a coherent embedding, $\ket{\mrm{\Psi}_{\X\Y}}^{\mbf{AB}}$, and an honest $\mbf{A}$, find a simulator, which on average best inverts the projective measurement that an honest $\mbf{B}$ would implement}.

\subsection{\label{sec:BND-PERF}Only Trivial Correlations Admit Perfect Privacy}
We derive the necessary and sufficient conditions for $\delta=0$ privacy to exist, which demonstrates equivalently to~\cite{Salvail-2009-PT} that the correlation must be trivial. 
\begin{theorem}\label{thm:perf-priv}
    A reduction from $(\X,\Y)$ to its canonical embedding $\ket{\mrm{\Psi}_{\X\Y}}^{\mbf{AB}}$ has perfect privacy (against either a corrupt $\mbf{A}$ or $\mbf{B}$), if and only if $H(\Y\searrow\X|\X)=0$, or equivalently  $H(\X\searrow\Y|\Y)=0$, i.e. if and only if the correlation is trivial.
\end{theorem}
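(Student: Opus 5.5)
By symmetry of the setup it suffices to treat a corrupt $\mbf{B}$ with $\mbf{A}$ honest. The paper states that $H(\Y\searrow\X|\X)=0$ iff $H(\X\searrow\Y|\Y)=0$, so the case of a corrupt $\mbf{A}$ follows by exchanging labels. The plan is to work directly with Eq.~\eqref{eq:QSNIR-priv-from-embedding}. Perfect privacy means there is a channel $\mc{S}$ with $\mc{S}(\sigma_x)=\op{\psi_x}{\psi_x}$ for every $x$ with $p(x)>0$, where $\sigma_x\coloneqq\sum_y p(y|x)\op{y}{y}$. Since $\mc{D}(\op{\psi_x}{\psi_x})=\sigma_x$ by Eq.~\eqref{eq:dephasing-rw-ens}, the question is exactly whether the dephasing channel can be inverted on the ensemble $\{\psi_x\}_x$. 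I would use the characterization of triviality from Sec.~\ref{sec:PRE-com-dep}: $H(\X\searrow\Y|\Y)=0$ iff, within each connected component of $G_{\X\Y}$, every $x$ either has a point conditional $\Y|\X=x$ or all the non-point $x$ in that component share the same conditional.

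\textbf{Sufficiency.} Assume the correlation is trivial, and note that the canonical embedding has no phases. Within each connected component, the $x$ fall into two types. If $\Y|\X=x$ is a point distribution $\delta_{y_0}$, then $\psi_x=\op{y_0}{y_0}=\sigma_x$. Otherwise all such $x$ share a common conditional $q$, and the component is the set $C$ of $y$ with $q(y)>0$. These $y$ are connected only through these $x$: no point-type $x$ can attach to a $y$ in $C$, because that would merge the component while keeping different conditionals adjacent, which triviality forbids (I would check this carefully from the Bhattacharyya statement). So I would define $\mc{S}$ as a measure-and-prepare channel. It measures $y$, determines the component $h_\Y(y)$, and prepares $\ket{\psi_C}=\sum_{y\in C}\sqrt{q(y)}\ket{y}$ if the component is of the non-point type, or returns $\op{y}{y}$ otherwise. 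Then $\mc{S}(\sigma_x)=\psi_x$ for every $x$, so $\delta_{\mbf{B}}=0$.

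\textbf{Necessity.} Suppose $\mc{S}(\sigma_x)=\psi_x$ exactly for all $x$. Since $\psi_x$ is pure and $\sigma_x=\sum_y p(y|x)\op{y}{y}$ is a convex combination, purity forces $\mc{S}(\op{y}{y})=\psi_x$ for every $y$ in the support of $\Y|\X=x$. Now take $x,x'$ that share some $y$ with $p(x,y),p(x',y)>0$. Then $\psi_x=\mc{S}(\op{y}{y})=\psi_{x'}$, so $|\ip{\psi_x}{\psi_{x'}}|=\bc(\Y|\X=x,\Y|\X=x')=1$, and the two conditionals are equal; with canonical (nonnegative) amplitudes this is immediate. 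Propagating along paths in $G_{\X\Y}$, all $x$ in a connected component have the same conditional, which gives $H(\X\searrow\Y|\Y)=0$. I would also reconcile this with the point-versus-non-point phrasing in the paper's description of triviality: a point-type $x$ sharing its $y$ with a non-point $x'$ is excluded by the same argument.

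The main obstacle is making the combinatorial characterization of $H(\X\searrow\Y|\Y)=0$ precise enough for both directions. In particular, the paper's informal description (``point distribution, or the same distribution for all $x$ supported with more than a single $y$'') must be matched with ``the conditional is constant on connected components.'' I would prove this equivalence first as a small lemma using $\X\searrow\Y=f_\X(\X)$ and the condition that $f_\X(\X)$ is determined by $\Y$. Once that is in place, both directions reduce to the purity argument and the explicit measure-and-prepare simulator.
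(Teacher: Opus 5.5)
Your proposal is correct and follows essentially the same route as the paper. Necessity rests on the fact that $\sum_y p(y|x)\,\mathcal{S}(\op{y}{y})$ can equal the pure state $\psi_x$ only if every term with positive weight equals $\psi_x$, which forces equal conditionals for any $x,x'$ sharing a $y$. Sufficiency uses the measure-and-prepare simulator $\mathcal{S}(\op{y}{y})=\psi_x$ for any $x\in\mathcal{X}_y$, which is exactly your component-wise construction. Your characterization of triviality, that $\Y|\X=x$ is constant on each $\mathcal{X}_y$ (hence on connected components), is the one the paper's proof actually relies on. The paper's informal ``point distribution or one common distribution'' phrasing is looser and fails for correlations with several non-point product blocks, so your preliminary lemma is a worthwhile tightening rather than a deviation.
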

\begin{proof}
    From Eq.~\ref{eq:QSNIR-priv-from-embedding}, we see that $\delta=0$ implies that there exists an optimal simulator such that
    \begin{equation}\label{eq:perf-priv-avg}
        \sum_{y}p(y|x)\mc{S}_{\textup{opt}}\left(\op{y}{y}\right)= \op{\psi_{x}}{\psi_{x}}, \quad \forall x .
    \end{equation}
    When there exists a $y$ such that $p(y|x)>0$ and $p(y|x')>0$, we must then have that  $\ip{\psi_{x}}{\psi_{x'}}=1$, as a convex combination of states is pure if and only if every state in the ensemble is the same pure state.\footnote{This is exactly the proof of Cor. 3, in~\cite{Chefles-2004-ep}.} Generally,
    \begin{equation}
        \ip{\psi_{x}}{\psi_{x'}}=\sum_{y}\sqrt{p(y|x)p(y|x')}=\textup{BC}\left(\Y|\X=x,\Y|\X=x'\right),
    \end{equation}
    which is the Bhattacharyya coefficient between the conditionals. Therefore, for any pair of $x,x'$ which are in the support of $(\X,\Y)$, for the same $y$, the corresponding conditionals $\Y|\X=x$ and $\Y|\X=x'$ must be identical for all $y$,  which is equivalent to the condition that $H(\X\searrow\Y|\Y)=0$. 

    Now we prove the reverse direction, by demonstrating that if $H(\X\searrow\Y|\Y)=0$, we can always construct $\mc{S_{\textup{opt}}}$, explicitly. 
    Let $\mc{X}_{y}\coloneqq\{x\in\mc{X}\colon p(x,y)>0\}$, be the set of $x$ supported with $y$.
    Given a correlation $(\X,\Y)$, where $H(\X\searrow\Y|\Y)=0$, consider a simulator $\mc{S}_{\textup{opt}}$, such that
    \begin{equation}
        \mc{S}_{\textup{opt}}(\op{y}{y})=\op{\psi_{x}}{\psi_{x}},
    \end{equation}
    where $x$ is any element of $\mc{X}_{y}$. Suppose the simulator receives $y_{\textup{act}}\samp\Y$ and an honest $\mbf{A}$ receives $x_{\textup{act}}\samp\X$, denoting a fixed instance of ``actual'' samples from the ideal correlation. In the construction above, the simulator can arbitrarily choose any $x\in\mc{X}_{y_{\textup{act}}}$, and prepare $\ket{\psi_{x}}$ on $\mbf{B}$'s registers. Importantly, $x_{\textup{act}}\in\mc{X}_{y_{\textup{act}}}$, as well by the entropy constraint. Therefore, on average
    \begin{equation}
        \op{\psi_{x_{\textup{act}}}}{\psi_{x_\textup{act}}}=\sum_{y}p(y|x_{\textup{act}})\mc{S}_{\textup{opt}}(\op{y}{y})=\sum_{y}p(y|x)\op{\psi_{x}}{\psi_{x}}=\op{\psi_{x}}{\psi_{x}},
    \end{equation}
    as $p(y|x_{\textup{act}})=p(y|x)$ for all $y$.
    
    Lastly, $H(\X\searrow\Y|\Y)=0$ if and only if $H(\Y\searrow\X|\X)=0$~\cite{Wolf-2004-Zi}, which completes the proof.
\end{proof}
\noindent
Note the optimal simulator here is not produced such that when $\mbf{B}$ chooses to measure honestly they obtain that same $y$ received by the simulator (unless $|\mc{X}_{y}|=1$). Instead it just needs to produce a state that would yield $y'\in\mc{Y}_{x}\coloneqq\{y\in\mc{Y}\colon p(x,y)>0\}$, where $x$ is any element of $\mc{X}_{y}$. The constraint that $H(\X\searrow\Y|\Y)=H(\Y\searrow\X|\X)=0$ ensures that any of these samples are feasible given $p(x,y)$.

\subsection{\label{sec:BND-ME}QSNIR Privacy Upper Bounds Distinguishing Advantage}
More generally, we can lower bound $\delta$, via an operational distinguishing advantage. Let $P_{\guess}(\X|\Y)_{\textup{ideal}}$
denote the classical average guessing probability for $\X$, given $\Y$, derived from the classical minimum-entropy, $H_{\min}(\X|\Y)$. Likewise, let $P_{\textup{guess}}(\X|\mbf{B})_{\textup{real}}$ be the optimal MED success probability, given the ensemble formed from the reduced state of a coherent embedding with one half measured in the computational basis.
\begin{lemma}\label{lem:min-err-LB}
    For a perfectly correct ($\epsilon=0$) reduction from  $(\X,\Y)$ to the embedding $\op{\mrm{\Psi}_{\X\Y}}{\mrm{\Psi}_{\X\Y}}^{\mbf{AB}}$, the optimal $\delta$-privacy error is lower bounded by the following continuity-type inequality:
    \begin{equation}
       \delta_{\mbf{B}}\geq \delta_{\textup{MED},\mbf{B}}\coloneqq P_{\guess}(\X|\mbf{B})_{\textup{real}}-P_{\guess}(\X|\Y)_{\textup{ideal}}\geq 0.
    \end{equation}
\end{lemma}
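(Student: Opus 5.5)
We prove two inequalities: $\delta_{\mbf{B}}\geq P_{\guess}(\X|\mbf{B})_{\textup{real}}-P_{\guess}(\X|\Y)_{\textup{ideal}}$ and $P_{\guess}(\X|\mbf{B})_{\textup{real}}\geq P_{\guess}(\X|\Y)_{\textup{ideal}}$. Both rest on the observation that guessing probabilities of cq states are Lipschitz in trace distance and monotone under processing of the side-information register. Concretely, for any cq state $\rho^{\X\mbf{B}}=\sum_x p(x)\op{x}{x}\otimes\rho_x$ we write $P_{\guess}(\X|\mbf{B})_\rho=\max_{\{E_x\}}\sum_x p(x)\tr[E_x\rho_x]$ over POVMs $\{E_x\}$ on $\mbf{B}$. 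This is the operational form of $H_{\min}$ from~\cite{Konig-2009-OM}, and it coincides with the MED SDP of App.~\ref{app:med}.

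For the nonnegativity, we take the honest dephasing $\mc{D}$ from Eq.~\ref{eq:dephasing-rw-ens}. Applied to $\mbf{B}$'s register of the real cq state $\sum_x p(x)\op{x}{x}\otimes\op{\psi_x}{\psi_x}$, it produces exactly $\rho^{\X\Y}\sim(\X,\Y)$. Any classical guessing rule $y\mapsto \hat x(y)$ composed with the computational-basis measurement is then a valid POVM on $\mbf{B}$, so $P_{\guess}(\X|\mbf{B})_{\textup{real}}\geq P_{\guess}(\X|\Y)_{\textup{ideal}}$. This is just data processing for $H_{\min}$ on cq states.

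For the main inequality, we fix an optimal simulator $\mc{S}$ achieving $\delta_{\mbf{B}}$ in Eq.~\ref{eq:QSNIR-priv-from-embedding}. Write $\sigma_x\coloneqq\mc{S}(\sum_y p(y|x)\op{y}{y})$ and let $\{E_x\}$ be an optimal MED POVM for the real ensemble. Since $0\preceq E_x\preceq\mbb{I}$, each term satisfies $\tr[E_x\op{\psi_x}{\psi_x}]-\tr[E_x\sigma_x]\leq\td(\op{\psi_x}{\psi_x},\sigma_x)$, by the variational characterization of trace distance. Averaging with weights $p(x)$ gives
\begin{equation*}
P_{\guess}(\X|\mbf{B})_{\textup{real}}\leq \sum_x p(x)\tr[E_x\sigma_x]+\delta_{\mbf{B}}.
\end{equation*}
It remains to bound the first term by $P_{\guess}(\X|\Y)_{\textup{ideal}}$. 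The operator $\sum_x p(x)\op{x}{x}\otimes\sigma_x$ equals $(\id\otimes\mc{S})(\rho^{\X\Y})$, and the POVM $\{\mc{S}^\dagger(E_x)\}$ in the Heisenberg picture is a valid measurement on the classical register $\Y$. Hence $\sum_x p(x)\tr[E_x\sigma_x]=\sum_{x,y}p(x,y)\langle y|\mc{S}^\dagger(E_x)|y\rangle\leq\sum_y\max_x p(x,y)=P_{\guess}(\X|\Y)_{\textup{ideal}}$, where the last step holds because $\sum_x\langle y|\mc{S}^\dagger(E_x)|y\rangle=1$ for each $y$. Rearranging gives the claim.

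Once the right ingredients are chosen, none of these steps is expected to be difficult. The only point needing care is combining the per-$x$ trace-distance bound with a single global POVM. One could instead bound the trace distance of the full cq states, but the definition in Eq.~\ref{eq:QSNIR-priv-from-embedding} is already the average of conditional trace distances, which equals the cq-state trace distance because the $\X$ register is classical. So either route closes the argument.
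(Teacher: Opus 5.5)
Your proposal is correct and follows essentially the same route as the paper. Your per-$x$ variational bound, using the MED-optimal POVM elements as test operators, is exactly the paper's block-diagonal choice $M^{\X\mbf{B}}=\sum_x\op{x}{x}^{\X}\otimes M_x^{\mbf{B}}$, your Heisenberg-picture bound on $\sum_x p(x)\tr[E_x\sigma_x]$ makes explicit the data-processing step the paper invokes through $P_{\guess}(\X|\mc{S}(\Y))_{\textup{ideal}}$, and your nonnegativity argument via the honest dephasing $\mc{D}$ matches the paper's.
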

\begin{proof}
    Let $\mc{S}$ be a fixed simulator. The corresponding privacy error is
    \begin{equation}
        \delta_{\mc{S},\mbf{B}}\coloneqq \td\left(\rho^{\X\mbf{B}}_{\textup{real}},\rho^{\X\mbf{B}}_{\textup{ideal}}\right)=\max_{0\preceq M\preceq \mbb{I}}\tr\left[M^{\X\mbf{B}}\left(\rho^{\X\mbf{B}}_{\textup{real}}-\rho^{\X\mbf{B}}_{\textup{ideal}}\right)\right],\label{eq:deltaS-as-max}
    \end{equation}
    with $\delta_{\mbf{B}}=\min_{\mc{S}}\delta_{\mc{S},\mbf{B}}$.
    Any feasible choice of $M^{\X\mbf{B}}$ lower bounds $\delta_{\mc{S},\mbf{B}}$. Consider the Block diagonal operator $M^{\X\mbf{B}}=\sum_{x}\op{x}{x}^{\X}\otimes M_{x}^{\mbf{B}}$, such that
    \begin{equation}
        \delta_{\mc{S},\mbf{B}}\geq\sum_{x}p(x)\left(\ip{\psi_{x}}{M_{x}|\psi_{x}}-\tr\left[M_{x}\mc{S}\left(\sum_{y}p(y|x)\op{y}{y}\right)\right]\right).
    \end{equation}
    If we further restrict $\{M_{x}\}_{x}$ to form a valid POVM, this solution remains feasible for the maximization in Eq.~\ref{eq:deltaS-as-max}, as $\sum_{x}M_{x}=\mbb{I}$ ensures that each $M_{x}\preceq \mbb{I}$. A valid choice is to fix this POVM such that it maximizes the first term in the sum above, and define 
    \begin{equation}
        P_{\textup{guess}}(\X|\mbf{B})_{\textup{real}}\coloneqq\sum_{x}p(x)\ip{\psi_{x}}{M_{x}|\psi_{x}},
    \end{equation}
    as the real world MED optimal guessing probability. While $\{M_{x}\}_{x}$ may be optimal to discriminate the ensemble $\{\{p(x),\ket{\psi_{x}}\}\}_{x}$, this need not be the case in the ideal world.
    In a mild abuse of notation, let $\mc{S}(\Y)$ denote the label for the state space output by the simulator and $ P_{\textup{guess}}(\X|\mc{S}(\Y))_{\textup{ideal}}$ the optimal MED guessing probability in the ideal world. We have that
    \begin{equation}
         P_{\textup{guess}}(\X|\mc{S}(\Y))_{\textup{ideal}}\geq \sum_{x}p(x)\tr\left[M_{x}\mc{S}\left(\sum_{y}p(y|x)\op{y}{y}\right)\right].
    \end{equation}
    Furthermore, by data processing, $P_{\textup{guess}}(\X|\Y)_{\textup{ideal}}\geq P_{\textup{guess}}(\X|\mc{S}(\Y))_{\textup{ideal}}$, such that $\delta_{\mc{S},\mbf{B}}\geq\delta_{\textup{MED},\mbf{B}}$. As this holds for any simulator, it must hold for the optimum, giving the lower bound $\delta_{\mbf{B}}\geq\delta_{\textup{MED},\mbf{B}}$. Finally, from Eq.~\ref{eq:dephasing-rw-ens}, $P_{\guess}(\X|\Y)_{\textup{ideal}}=P_{\guess}(\X|\mc{D}(\mbf{B}))_{\textup{real}}$ (in another abuse of notation). Therefore, again by data processing, $P_{\guess}(\X|\mbf{B})_{\textup{real}}\geq P_{\guess}(\X|\Y)_{\textup{ideal}}$ and $\delta_{\textup{MED},\mbf{B}}\geq0$.
\end{proof}
\noindent
This lemma demonstrates that $\delta_{\mbf{B}}$ upper bounds the following distinguishing task: \textit{With what advantage can $\mbf{B}$ determine, with minimum error, $\mbf{A}$'s sample of $\X$ from their reduced state of the embedding, given $\mbf{A}$'s honest measurement, over a classical sampling of $\Y$?} Conveniently, when an explicit simulator exists satisfying $\delta_{\mbf{B}}$, then $\mbf{B}$'s reduced state in the embedding provides no distinguishing advantage beyond this quantity, and an attack based on MED does not exist.  

In contrast, without defining an explicit simulator, Lem.~\ref{lem:min-err-LB} enables us to begin to analyze a lower bound on $\delta_{\mbf{B}}$ for specific target correlations, where we already know that perfect privacy is impossible. Let $\{M_{x}\}_{x}$ denote a POVM on $\mc{H}^{\mbf{B}}$ and $\mrm{\Lambda}$ be Hermitian, such that $\mrm{\Lambda}=\mrm{\Lambda}^{\dagger}$.
We can compute $\delta_{\textup{MED},\mbf{B}}$ explicitly, via the following SDP pair:
\begin{equation}\label{eq:sdp-med}
    \begin{aligned}
        &\underline{\textbf{Primal}} \\
        \textup{max} \quad & \sum_{x}p(x)\ip{\psi_{x}}{M_{x}|\psi_{x}} -\sum_{y}\max_{x}p(x,y), \\
        \textup{s.t.} \quad & \sum_{x}M_{x}=\mbb{I}, \\
        & M_{x}\succeq0,  & \quad \forall x\in\mc{X}.
        \\
        &\underline{\textbf{Dual}} \\
        \textup{min} \quad & \tr[\mrm{\Lambda}]-\sum_{y}\max_{x}p(x,y), \\
        \textup{s.t.} \quad & \mrm{\Lambda}\succeq p(x)\op{\psi_{x}}{\psi_{x}}, & \quad \forall x\in\mc{X} \\
        \quad & \mrm{\Lambda}=\mrm{\Lambda}^{\dagger}. 
    \end{aligned}
\end{equation}
This is the usual SDP formulation for MED, offset by a constant factor equal to the ideal optimal guessing probability.

It would be nice if the necessary and sufficient conditions on $p(x,y)$, such that $\delta_{\textup{MED}}=0$, aligned with the one-shot entropic quantities that satisfy $\delta_{\textup{QSNIR}}=0$. While $H(\X\searrow\Y|\Y)=0$ is a sufficient condition to fix $\delta_{\textup{MED}}=0$, we demonstrate in the next section that there exist gaps between these two privacy errors for certain correlations. In particular, for the family of correlations, $\msf{BSC}_{q}$, we demonstrate that $\delta_{\textup{MED}}(q)=0$, even when $H(\X\searrow\Y|\Y)>0$. We leave open the problem of constraining when $\delta_{\textup{MED}}=0$ solely in terms of $p(x,y)$. For the explicit optimal MED privacy errors that we compute in Sec.~\ref{sec:2PC}, we instead utilize lemmas discussed in App~\ref{app:med}, where applicable. 

\subsection{\label{sec:BND-SDP}Computing Nonzero Optimal QSNIR Privacy}
While MED gives a lower bound for QSNIR privacy, $\delta(\eqqcolon\delta_{\textup{QSNIR}})$ can be formulated as a convex optimization problem that can be determined exactly. The SDP pair that computes this privacy error is given as follows:
\begin{equation}\label{eq:sdp-qsnir}
    \begin{aligned}
        &\underline{\textbf{Primal}} \\
        \textup{min} \quad & \frac{1}{2}\sum_{x}\tr[Q_{x}^{+}+Q_{x}^{-}], \\
        \textup{s.t.} \quad & Q_{x}^{+}-Q_{x}^{-}=p(x)\op{\psi_{x}}{\psi_{x}}-\sum_{y}p(x,y)J_{y}, & Q_{x}^{+},Q_{x}^{-}\succeq0, & \quad \forall x\in\mc{X}, \\
        & \tr[J_{y}]=1,
        & J_{y}\succeq0, & \quad \forall y\in\mc{Y}. \\
        &\underline{\textbf{Dual}} \\
        \textup{max} \quad &  \sum_{x}p(x)\ip{\psi_{x}}{W_{x}|\psi_{x}}-\sum_{y}\lambda_{y},  \\
        \textup{s.t.} \quad & 0 \preceq W_{x}\preceq \mbb{I}, & & \quad \forall x\in\mc{X}, \\
        & \lambda_{y}\mbb{I}\succeq\sum_{x}p(x,y)W_{x}, & \lambda_{y}\in\mbb{R}, & \quad \forall y\in\mc{Y}.
    \end{aligned}
\end{equation}
Below we detail how to interpret the primal and dual problem in the context of QSNIR, and prove that strong duality holds.

The goal of the primal problem in the SDP is to determine the simulator that optimally recovers from the dephasing map applied by the honest party.
We define the simulator as a preparation channel, $\mc{S}\colon\D(\mbb{C}^{|\mc{Y}|})\to\D(\mc{H}^{\mbf{B}})$, fixed by the Choi matrix~\cite{Choi-1975-Cp}
\begin{equation}
    J_{\mc{S}}^{\mbf{B}\Y}=\sum_{y}J_{y}^{\mbf{B}}\otimes\op{y}{y}^{\Y},
\end{equation}
such that
\begin{equation}
    \mc{S}^{\mbf{B}}\left(\op{y}{y}^{\Y}\right)=\tr_{\Y}\left[\left(\mbb{I}^{\mbf{B}}\otimes\op{y}{y}^{\Y}\right)J_{\mc{S}}^{\mbf{B}\Y}\right]=J_{y}^{\mbf{B}}, \quad \forall y.
\end{equation}
In words, the simulator receives a classical sample $y\samp\Y$ and outputs the state $J_{y}$ on the corrupted party's registers. In general, computing $\min_{\sigma}\td(\rho,\sigma)$, where $\sigma$ is constrained to be in some set, requires finding the state that minimizes the trace of the positive (or, equivalently, the negative) part of the operator $\mrm{\Delta}\coloneqq\rho-\sigma$. When computing QSNIR privacy errors, we need to perform this minimization over a whole set of distances, one for each element in $\{\op{\psi_{x}}{\psi_{x}}\}_{x}$, minimized against averages over all the possible outputs from the simulator, given by $\{J_{y}\}_{y}$. This introduces another set of convex constraints into the SDP, simply to ensure that each $J_{y}$ is a valid quantum state. 

Conversely, in the dual SDP, any mention of the simulator drops out of the problem completely.\footnote{In App.~\ref{app:QSNIR-SDP-dual}, we formally demonstrate how the dual is constructed.} 
We can interpret each $W_{x}$ as a \textit{witness} of 
a particular distinguishing (or cheating) strategy, for the sample, $x\samp\X$, that the honest party obtains.  
How well this strategy scores is offset by what an adversary would already know about $x$ in the ideal world given $y$. This offset is minimized when each $\lambda_{y}=\lambda_{\max}(\sum_{x}p(x,y)W_{x})$, where $\lambda_{\max}(\cdot)$ is the maximum eigenvalue function. Where the true optimum of the dual differs from MED, is that in general $\sum_{x}W_{x}\neq\mbb{I}$, such that this ``cheating strategy'' does not describe a valid POVM for the corrupted party, as is the case with MED.
Notably, if we do enforce that the witnesses together form a POVM, which we redefine as $\{M_{x}\}_{x}$, then
\begin{equation}
    \lambda_{\max}\left(\sum_{x}p(x,y)M_{x}\right)\leq\max_{x}p(x,y),
\end{equation}
such that the true optimum is lower-bounded as 
\begin{align}
    \delta &\geq \max_{\{0\preceq W_{x}\preceq \mbb{I}\}_{x}} \sum_{x}p(x)\ip{\psi_{x}}{W_{x}|\psi_{x}}-\sum_{y} \lambda_{\max}\left(\sum_{x}p(x,y)W_{x}\right)\nonumber\\
    &\geq \max_{\{0\preceq M_{x}\}_{x},\sum_{x}M_{x}=\mbb{I}} \sum_{x}p(x)\ip{\psi_{x}}{M_{x}|\psi_{x}}-\sum_{y}\max_{x}p(x,y)  = \delta_{\textup{MED}},
\end{align}
recovering Lem.~\ref{lem:min-err-LB}. 
Instead, we group $\{\{W_{x},\mbb{I}-W_x\}\}_{x}$ for each $x$ to form a Helstrom measurement that optimizes the trace distance between $\op{\psi_{x}}{\psi_{x}}$ and $\sum_{y}p(y|x)J_{y}$, the quantum state held by the corrupt party in the real and ideal world, respectively, conditioned on $x\samp\X$. Rather than describing a measurement made by this corrupt party, QSNIR privacy quantifies the average over all $x$ of the binary discriminating power of the environment to distinguish between the real and ideal worlds. The reason why an explicit simulator drops out of the dual problem is essentially by data processing. Instead of asking how well we can recover from the dephasing of the honest party, such that cheating is not possible, the dual problem asks how well we can cheat given all possible recovery maps.

If the primal problem and dual problem have strong duality, then $\delta$ becomes an operationally meaningful quantifier of simulation-based (and therefore composable) privacy. The simulator in the primal problem works to prevent the environment from distinguishing whether a deviation occurred between the real and ideal worlds. The dual problem shows that the inability of the environment to make this distinction, bounds how well an adversary can cheat. The following lemma states this duality for any correlation.
\begin{lemma}[Strong Duality and Complementary Slackness]\label{lem:QSNIRsdp-duality}
    Strong duality holds for the SDP pair that computes $\delta_{\textup{QSNIR}}$. Given primal-feasible $\{Q_{x}^{\pm}\}_{x}$ and $\{J_{y}\}_{y}$, along with dual-feasible $\{W_{x}\}_{x}$ and $\{\lambda_{y}\}_{y}$, complementary slackness is expressed through the constraints
    \begin{subequations}
        \begin{align}
             Q_{x}^{+}(\mbb{I}-W_{x})=0, \quad Q_{x}^{-}W_{x}&=0, \quad \forall x, \\
             \left(\lambda_{y}\mbb{I}-\sum_{x}p(x,y)W_{x}\right)J_{y}&=0, \quad \forall y.
        \end{align}
    \end{subequations}
\end{lemma}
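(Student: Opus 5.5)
The plan is to derive the dual of Eq.~\ref{eq:sdp-qsnir} via a Lagrangian, check weak duality together with the slackness identities, and then establish strong duality by Slater's condition applied to the dual (or, equivalently, to the primal). First I will write the Lagrangian for the primal. I attach a Hermitian multiplier $W_x$ to each equality constraint $Q_x^+-Q_x^- = p(x)\op{\psi_x}{\psi_x}-\sum_y p(x,y)J_y$. I attach a scalar $\lambda_y$ to each constraint $\tr[J_y]=1$. Finally, I attach positive semidefinite multipliers to $Q_x^\pm\succeq0$ and $J_y\succeq0$. Minimizing over $Q_x^{+}$ forces $\tfrac12\mbb{I}-W_x\succeq0$, and minimizing over $Q_x^{-}$ forces $\tfrac12\mbb{I}+W_x\succeq0$. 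After shifting $W_x\to W_x+\tfrac12\mbb{I}$, these give $0\preceq W_x\preceq\mbb{I}$; the shift contributes constant terms that cancel against $\sum_x p(x)$ and $\sum_{x,y}p(x,y)\tr[J_y]/2$. Minimizing over $J_y$ forces $\lambda_y\mbb{I}-\sum_x p(x,y)W_x\succeq0$. This reproduces the stated dual objective $\sum_x p(x)\ip{\psi_x}{W_x|\psi_x}-\sum_y\lambda_y$. Presumably this is also what App.~\ref{app:QSNIR-SDP-dual} does.

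Next I will verify weak duality directly, because this computation also yields complementary slackness. Take primal-feasible $\{Q_x^\pm\},\{J_y\}$ and dual-feasible $\{W_x\},\{\lambda_y\}$. Write
\begin{equation*}
\tfrac12\sum_x\tr[Q_x^++Q_x^-]\;\ge\;\sum_x\tr[Q_x^+W_x]-\sum_x\tr[Q_x^-W_x]+\dots
\end{equation*}
Concretely, the gap between the primal and dual objectives decomposes as a sum of nonnegative terms:
\begin{equation*}
\sum_x\tr\big[Q_x^+(\mbb{I}-W_x)\big]+\sum_x\tr\big[Q_x^-W_x\big]+\sum_y\tr\Big[\big(\lambda_y\mbb{I}-\textstyle\sum_x p(x,y)W_x\big)J_y\Big].
\end{equation*}
To obtain this decomposition I substitute the equality constraint into $\sum_x\tr[(Q_x^+-Q_x^-)W_x]$ and use $\tr[J_y]=1$. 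The primal objective that appears in the gap is $\sum_x\tr[Q_x^+]$. To reconcile this with $\tfrac12\sum_x\tr[Q_x^++Q_x^-]$, I note that $\tr[Q_x^+-Q_x^-]=p(x)-p(x)=0$ for every $x$, so $\tr Q_x^+=\tr Q_x^-$. Each term in the gap is the trace of a product of two positive semidefinite operators, hence nonnegative. The trace of a product of PSD operators vanishes iff the product is zero. So once strong duality is established, zero gap at optimal points is equivalent to the three stated operator equations.

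For strong duality I will use Slater's condition on the dual, which is a maximization. Take $W_x=\tfrac12\mbb{I}$, so that $0\prec W_x\prec\mbb{I}$ strictly. Take $\lambda_y=\sum_x p(x,y)$, which gives $\lambda_y\mbb{I}-\sum_x p(x,y)W_x = \tfrac12 p(y)\mbb{I}\succ0$ when $p(y)>0$; for $p(y)=0$ I simply take $\lambda_y=1$. This point is strictly feasible. In addition, the primal is feasible, for example with $J_y$ the maximally mixed state and $Q_x^\pm$ the positive and negative parts of the difference, and its objective is bounded below by $0$. Slater's condition then gives zero duality gap with primal attainment. Dual attainment follows from compactness of the dual feasible set once $\lambda_y$ is restricted to $[0,\sum_x p(x,y)]$, and it can also be argued from primal strict feasibility: choose full-rank $J_y$ and add $\eta\mbb{I}$ to both $Q_x^\pm$, which handles the equality-constrained cone properly.

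The main obstacle is the bookkeeping in the Lagrangian: tracking the constant offsets produced by the $\tfrac12$ in the primal objective and by the shift of $W_x$, and showing they cancel exactly to the stated dual. The identity $\tr Q_x^+=\tr Q_x^-$ is the step that makes this cancellation work.
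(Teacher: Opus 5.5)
Your proposal is correct and follows essentially the same route as the paper. The dual is obtained from the Lagrangian with a shift of $W_x$ by $\tfrac12\mbb{I}$ and of $\lambda_y$ by $\tfrac12 p(y)$ (App.~\ref{app:QSNIR-SDP-dual}), and strong duality comes from Slater's condition with essentially the same strictly feasible points: $W_x=\mbb{I}/2$ on the dual side, and maximally mixed $J_y$ with $Q_x^{\pm}$ padded by a multiple of $\mbb{I}$ on the primal side. Your explicit gap identity, which uses $\tr[Q_x^+]=\tr[Q_x^-]$, is a useful addition: the paper only states the complementary slackness conditions and leaves their derivation implicit in its rearranged Lagrangian.
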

\noindent
See App.~\ref{app:lem-duality} for a proof via Slater's condition.

\section{\label{sec:2PC}One-Shot Optimal Privacy Errors for Universal 2PC Correlations}
We now employ the results of the previous section to compute the optimal privacy errors in Tab.~\ref{tab:privacy-errors}. 
In App.~\ref{app:comp-ent}, we detail how to compute $H(\X\wedge\Y)$ and $H(\X\searrow \Y|\Y)$ (along with its reverse quantity) for each correlation. These entropies serve as a guiding tool for understanding where perfect privacy is and isn't admitted. 
For $\msf{BSC}_{q}$ and $\msf{BEC}_{q}$ correlations, the values of $q$ that satisfy perfect privacy correspond to fixed points in the behavior of the respective channel that the correlation is derived from.  As we focus on binary correlations in this section, we denote addition and multiplication mod 2 by $\oplus$ and $\wedge$, respectively.

\subsection{\label{sec:2PC-SK}Binary Symmetric Key}
A binary symmetric key~\cite{Shannon-1949-Ct} is represented by a correlation $\msf{SK}=(\X,\Y)$, fixed by the parent pmf
\begin{equation}
    p_{\msf{SK}}(x,y)\coloneqq
    \begin{cases}
        1/2, & x\oplus y=0, \\
        0, & \textup{else}.
    \end{cases}
\end{equation}
With $\msf{SK}$, it is straightforward to see that
\begin{equation}
    I(\X\colon\Y)-H(\X\wedge\Y)=H(\X\searrow\Y|\Y)=H(\Y\searrow\X|\X)=0,
\end{equation}
as $\X=\Y=\X\wedge\Y$, explicitly.
\begin{corollary}[to Thm.~\ref{thm:perf-priv}]\label{cor:SK-priv}
    There exists a simple $0$-QSNIR from $\msf{SK}$ to its embedding, $\ket{\mrm{\Psi}_{\X\Y}}^{\mbf{AB}}=\ket{\mrm{\Phi}^{+}}^{\mbf{AB}}$, which is the maximally entangled state $\ket{\mrm{\Phi}^{+}}\coloneqq(\ket{00}+\ket{11})/\sqrt{2}$.
\end{corollary}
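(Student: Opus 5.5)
Since the statement is a corollary of Thm.~\ref{thm:perf-priv}, the plan is to check the triviality hypothesis for $\msf{SK}$, identify its canonical embedding, and then write down the reduction and simulators explicitly so the claimed $0$-QSNIR is concrete rather than only asserted to exist.

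\textbf{Step 1: the embedding.} By Def.~\ref{def:coh-embed} with $\varphi=0$, the canonical embedding of $p_{\msf{SK}}$ is
\begin{equation*}
\sum_{x,y}\sqrt{p_{\msf{SK}}(x,y)}\ket{x}\ket{y}=\tfrac{1}{\sqrt{2}}(\ket{00}+\ket{11})=\ket{\mrm{\Phi}^{+}}.
\end{equation*}

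\textbf{Step 2: triviality and correctness.} The support graph of $\msf{SK}$ has the two connected components $\{0_{\X},0_{\Y}\}$ and $\{1_{\X},1_{\Y}\}$, so $\X=\Y=\X\wedge\Y$ and $H(\X\searrow\Y|\Y)=0$, as already noted before the corollary. Thm.~\ref{thm:perf-priv} then gives perfect privacy against either party. Correctness is immediate: honest computational-basis measurements on both sides give $p(x,y)=|\ip{x,y}{\mrm{\Phi}^{+}}|^2=p_{\msf{SK}}(x,y)$, so $\epsilon=0$.

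\textbf{Step 3: explicit simulators.} For a corrupt $\mbf{B}$, $\mbf{A}$'s measurement yields the cq state $\tfrac12\sum_x\op{x}{x}\otimes\op{\psi_x}{\psi_x}$ with $\ket{\psi_x}=\ket{x}$. Since $p(y|x)=\delta_{x,y}$, the simulator $\mc{S}(\op{y}{y})=\op{y}{y}$, i.e.\ the identity preparation, reproduces this state exactly, so Eq.~\ref{eq:QSNIR-priv-from-embedding} vanishes. The case of a corrupt $\mbf{A}$ follows by the symmetry of $\ket{\mrm{\Phi}^{+}}$ under swapping parties.

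There is no real obstacle here. The only point that needs care is confirming that $\ket{\psi_x}$ is already diagonal in the computational basis, which follows because each conditional $\Y|\X=x$ is a point distribution. That is exactly the case $|\mc{X}_y|=1$ in the proof of Thm.~\ref{thm:perf-priv}, so the simulator there reduces to this trivial one.
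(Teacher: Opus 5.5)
Your proof is correct and follows the paper's route: the triviality of $\msf{SK}$ ($\X=\Y=\X\wedge\Y$) gives perfect privacy via Thm.~\ref{thm:perf-priv}, and the identity is the explicit optimal simulator. The paper justifies the identity simulator by noting that the corrupted party's marginal $\mbb{I}/2$ is invariant under dephasing, while you check directly that the conditional states $\ket{\psi_x}=\ket{x}$ are already classical; yours is the more direct form of the same observation, since privacy is a statement about the joint cq state.
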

\noindent
In fact, the simulator that achieves optimality is the identity, as the reduced state of the corrupted party is the maximally mixed state, $\mbb{I}/2$, which is invariant under the dephasing channel they would perform if honest. 

\subsection{\label{sec:2PC-OK}Binary Oblivious Key}
Binary oblivious key, denoted by $\msf{OK}=((\msf{W}_{0},\msf{W}_{1}),(\msf{V},\msf{S}))$, is a four bit correlation which enables binary $\binom{2}{1}$-OT, via only two rounds of public communication~\cite{Wolf-2004-OT}.\footnote{One can similarly interpret $\msf{OK}$ as the output of $\binom{2}{1}$-OT with randomized inputs, where $(w_{0},w_{1})$ are $\mbf{A}$'s message bits, $s$ is $\mbf{B}$'s selection bit, and $v$ is the transferred bit that $\mbf{B}$ receives. In setting $w_{s}\coloneqq v$, it is evident that the correct bit is transferred.}  Explicitly, this correlation is fixed by the parent pmf
\begin{equation}\label{eq:ok-dist}
    p_{\msf{OK}}(w_0,w_1,v,s)\coloneqq
    \begin{cases}
        1/8, & (w_{0}\wedge(s\oplus1))\oplus(w_{1}\wedge s)=v, \\
        0, & \textup{else}.
    \end{cases}
\end{equation}
For $\msf{OK}$, we cannot have perfect privacy as, 
\begin{equation}
    H\left((\msf{W}_{0},\msf{W}_{1})\searrow(\msf{V},\msf{S})|(\msf{V},\msf{S})\right)=1,
\end{equation}
and likewise for the reverse case. This also has a nice interpretation. As the correlation represents the outputs of binary randomized $\binom{2}{1}$-OT, $\mbf{A}$ knows that $\mbf{B}$ possesses one of their two bits, but there is a bit of uniform randomness preventing them from guessing which one. Likewise, $\mbf{B}$ knows the value of one of $\mbf{A}$'s bits, but the other remains uniformly random. 

We demonstrate in the theorem below that the MED and QSNIR privacy errors for this correlation are the same for either a corrupt $\mbf{A}$ or $\mbf{B}$. This symmetry is easier to see when considering an alternative four bit correlation, a binary multiplication triple,\footnote{Triples are a universal resource that enables private multiplication in the pre-processing model~\cite{Beaver-1992-EM}.}
which is equivalent to $\msf{OK}$ by an invertible $0$-SNIR~\cite{Wolf-2005-Ot}.
\begin{theorem}[Optimal Privacy Errors for $\msf{OK}$ and $\msf{Trip}$]\label{thm:priv-OK}
    The optimal MED privacy error for $\msf{Trip}$ (and consequently $\msf{OK}$) is
    \begin{equation}
        \left(\delta^{\msf{OK}}_{\textup{MED}} = \right)\delta^{\msf{Trip}}_{\textup{MED}} = \frac{1}{8}\left(2\sqrt{2}-1\right) \approx 0.229.
    \end{equation}
    In contrast, the optimal QSNIR privacy error that guarantees composability is
    \begin{equation} 
        \left(\delta_{\textup{QSNIR}}^{\msf{OK}} = \right)\delta_{\textup{QSNIR}}^{\msf{Trip}} = \frac{1}{4}\left(\sqrt{5}-1\right) \approx 0.309.\footnotemark
    \end{equation}
    By symmetry, these errors are the same for either a corrupt $\mbf{A}$ or $\mbf{B}$.
\end{theorem}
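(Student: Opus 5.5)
The plan is to work with the $4$-bit triple $\msf{Trip}$: $\mbf{A}$ holds $x=(a,c_0)$, $\mbf{B}$ holds $y=(b,c_1)$, and $(a,b,c_0)$ is uniform with $c_0\oplus c_1=a\wedge b$. The statement already records that this is equivalent to $\msf{OK}$ by an invertible $0$-SNIR. I would still check that this equivalence transports both privacy errors, since a $0$-SNIR between classical targets is a relabeling that commutes with the canonical embedding up to local unitaries. The swap $(a,c_0)\leftrightarrow(b,c_1)$ preserves $p_{\msf{Trip}}$, so a corrupt $\mbf{A}$ and a corrupt $\mbf{B}$ give the same problem, and it suffices to treat a corrupt $\mbf{B}$. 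Conditioned on $x=(a,c_0)$, $\mbf{B}$'s reduced state is $\ket{\psi_{a,c_0}}=\tfrac{1}{\sqrt2}\sum_b\ket{b,c_0\oplus ab}$. Each of the four states has prior $1/4$, and their Gram matrix is the circulant of the $4$-cycle, with rows $(1,\tfrac12,0,\tfrac12)$ in the cyclic order $(0,0),(1,0),(0,1),(1,1)$. This $4$-cycle is exactly the cyclic invariance of the support graph mentioned in the overview.

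\textbf{MED part.} The ideal term in Eq.~\ref{eq:sdp-med} is $\sum_y\max_x p(x,y)=4\cdot\tfrac18=\tfrac12$, because given $(b,c_1)$ the bit $a$ is uniform. The ensemble is geometrically uniform under the $\mbb{Z}_4$ generated by a local unitary on $\mbf{B}$ that cycles the four states. I would therefore invoke the MED lemma of App.~\ref{app:med}: the square-root measurement is optimal for such ensembles, with success probability $\tfrac{1}{N^2}\bigl(\sum_k\sqrt{\lambda_k}\bigr)^2$ over the Gram eigenvalues. Here the Gram eigenvalues are $2,1,0,1$, which gives $(2+\sqrt2)^2/16=(3+2\sqrt2)/8$. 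Subtracting $\tfrac12$ yields $(2\sqrt2-1)/8$. If the App.~\ref{app:med} lemma is unavailable, I would certify optimality through the dual of Eq.~\ref{eq:sdp-med} with the symmetric choice $\mrm{\Lambda}=\tfrac14\bigl(\sum_k\sqrt{\lambda_k}/4\bigr)\rho^{1/2}$-type ansatz.

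\textbf{QSNIR part.} First I would symmetrize the SDP in Eq.~\ref{eq:sdp-qsnir}. The group generated by the $\mbb{Z}_4$ cycle and the relabelings $c_0\mapsto c_0\oplus1$, $c_1\mapsto c_1\oplus1$ (realized by local unitaries on $\mbf{B}$ together with permutations of $x$ and $y$) preserves the objective and the constraints. By convexity, averaging any optimal simulator over this group gives an equally good covariant simulator. A covariant $J_y$ is fixed by a few real parameters: a $4\times4$ state commuting with the stabilizer of $y$.

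Second, I would solve the reduced primal. For each $x$, the ideal state $\tfrac12(J_{(0,c)}+J_{(1,c')})$ is compared with the pure state $\ket{\psi_x}$. For a pure state, the trace distance is controlled by a $2\times2$ eigenproblem of $\op{\psi_x}{\psi_x}-\sigma_x$, which makes the objective an explicit algebraic function of the parameters. I expect the minimum to occur where a quadratic such as $t^2+t-1=0$ appears, consistent with the value $(\sqrt5-1)/4=\tfrac{1}{2\varphi}$.

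Third, I would construct a matching dual feasible point. The $W_x$ are the projectors onto the positive part of each optimal $\Delta_x$, and $\lambda_y=\lambda_{\max}\bigl(\sum_x p(x,y)W_x\bigr)$. I would then check complementary slackness from Lem.~\ref{lem:QSNIRsdp-duality}, in particular that each $J_y$ lies in the top eigenspace of $\sum_x p(x,y)W_x$. By strong duality this certifies optimality.

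The main obstacle is this last step: guessing the exact optimal covariant ansatz for the $J_y$ (likely rank one, tilted from $\ket{b,c_1}$ toward $\ket{\psi}$-directions by a golden-ratio angle) and verifying that the induced witnesses make the $J_y$ top eigenvectors. I would first get numerical SDP optima to identify the structure, then prove it symbolically.
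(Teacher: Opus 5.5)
Your MED part is correct and follows the paper's argument: the $4$-cycle ensemble is geometrically uniform, the pretty-good measurement is optimal (App.~\ref{app:med}), the Gram spectrum is $(2,1,0,1)$, and the ideal term is $\tfrac12$.

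The QSNIR part, however, is a plan rather than a proof. The framework is the right one, and it is also the paper's: use the cyclic symmetry and exhibit a primal simulator and a dual witness whose values coincide, linked by complementary slackness. But you never produce either certificate, and you say yourself that finding them is the main obstacle. As written, $(\sqrt5-1)/4$ is not derived, and your guessed quadratic $t^2+t-1=0$ is read off from the target $1/(2\varphi)$.

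The missing idea is an explicit one-parameter family of rank-one witnesses.
\begin{itemize}
\item Take $W_x=\op{w_x}{w_x}$ with $\ket{w_x}=(\ket{\psi_x}-t\ket{\psi_{\bar x}})/\sqrt{1+t^2}$. Here $\psi_{\bar x}$ is the state orthogonal to $\psi_x$ in the cycle; in your labels, $\bar x=(a,c_0\oplus1)$.
\item Then $|\ip{w_x}{\psi_x}|^2=1/(1+t^2)$.
\item For the two $x,x'$ supported with a given $y$, one finds $\ip{w_x}{w_{x'}}=(1-t)^2/(2(1+t^2))$, so $\lambda_{\max}(W_x+W_{x'})=\tfrac32-\tfrac{t}{1+t^2}$.
\item The dual value is therefore $f(t)=\frac{2+t}{2(1+t^2)}-\frac34$. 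It is maximized at the root $t=\sqrt5-2$ of $t^2+4t-1=0$, giving $(\sqrt5-1)/4$.
\item The matching simulator then comes for free: $J_y$ is the projector onto the normalized $\ket{w_x}+\ket{w_{x'}}$, which is the top eigenvector of $W_x+W_{x'}$, so slackness holds by construction.
\item One then only has to evaluate the primal objective there, not optimize it.
\end{itemize}
Going dual-first in this way avoids your hardest step. The optimum comes from a one-variable calculus problem, and the primal side becomes a single verification.

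Two details of your plan would also need repair.
\begin{itemize}
\item The flips $c_0\mapsto c_0\oplus1$, $c_1\mapsto c_1\oplus1$ are the square of the cyclic generator, so your group is just $\mbb{Z}_4$. It acts freely on the $y$ labels, so the stabilizer of $y$ is trivial. Covariance then leaves $J_y$ an arbitrary $4\times4$ state, not ``a few real parameters''. To cut this down you need the reflections of the $8$-cycle support graph together with realness of the canonical embedding. An example is the permutation $\ket{b,c_1}\mapsto\ket{b,c_1\oplus b}$ on $\mbf{B}$, which sends $\psi_{a,c_0}\mapsto\psi_{a\oplus1,c_0}$ and fixes $y=(0,c_1)$.
\item $\op{\psi_x}{\psi_x}-\sigma_x$ has at most one positive eigenvalue, but with $\sigma_x$ of rank two it lives in a three-dimensional span. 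It reduces to a $2\times2$ problem only because one eigenvector of $\sigma_x$, namely $\ket{s_y}-\ket{s_{y'}}$, is orthogonal to $\ket{\psi_x}$. In the paper this follows from $\ip{\psi_0}{s_0}=\ip{\psi_0}{s_3}$.
\end{itemize}
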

\footnotetext{While further context is found within the proof, the golden ratio has reared its head. The polynomial $\varphi^{2}-\varphi-1=0$ has roots $\varphi_{+}=(1+\sqrt{5})/2$, the golden ratio, and $\varphi_{-}=(1-\sqrt{5})/2$, which are related by $\varphi_{+}=-1/\varphi_{-}$. In our context, the optimal QSNIR privacy error that we compute for $\msf{OK}$ is proportional to the negative root.}
\noindent
See App.~\ref{app:thm-priv-OK} for a detailed proof.

Understanding the root of these privacy errors relates directly to a cyclic invariance encoded in the underlying correlation. In Fig.~\ref{fig:OK-TRIP}, we detail the corresponding connecting graphs for either correlation, depicted both in a bipartite and cyclic format. This forces the Gram matrix of the adversary's real world ensemble to be circulant, which lets us employ the pretty-good measurement (PGM) in order to compute $\delta^{\msf{Trip}}_{\textup{MED}}$. The same symmetry informs our proof of $\delta^{\msf{Trip}}_{\textup{QSNIR}}$, where we construct families of primal and dual feasible solutions whose optima align.

\begin{figure}[t]
    \centering
    \includegraphics[width=0.95\linewidth]{ConGraphsOK.pdf}
    \caption{Connecting Cycle Graphs for (a) $\msf{OK}$ and (b) $\msf{Trip}$. Red vertices denote samples belonging to $\mbf{A}$, while blue vertices belong to $\mbf{B}$.}
    \label{fig:OK-TRIP}
\end{figure}

\subsection{\label{sec:2PC-BSC}Binary Symmetric Correlations} This family of correlations describes the joint distribution over the input and output of a binary symmetric channel, where $x$ is the input and the output is $y=x$ with probability $1-q$.
We denote $\msf{BSC}_{q}\coloneqq(\X,\Y)$ as a two-bit correlation, given by the parent pmf
\begin{equation}
    p_{\msf{BSC}_{q}}(x,y)\coloneqq
    \begin{cases}
        (1-q)/2, & x\oplus y=0, \\
        q/2, & \textup{else}.
    \end{cases}
\end{equation}
Note that $\msf{BSC}_{0}=\msf{SK}$ (and $\msf{BSC}_{1}$ is equal to antisymmetric key). Similarly, $\msf{BSC}_{1/2}$ is just a uniform product distribution. Therefore, $\delta_{\textup{QSNIR}}^{\msf{BSC}}(q)=0$ only when $q\in\{0,1/2,1\}$, in line with when $H(\X\searrow\Y|\Y)=0$.

\begin{theorem}[Optimal Privacy Errors for $\msf{BSC}_{q}$]\label{thm:BSC-priv}
    The family of correlations, $\msf{BSC}_{q}$, is perfectly private against MED-based attacks ($\delta_{\textup{MED}}=0$), but the optimal QSNIR privacy error that guarantees composability can be reduced to the following scalar optimization problem:
    \begin{equation}\label{eq:bsc-qsnir-priv-err}
          \delta^{\msf{BSC}}_{\textup{QSNIR}}(q) = \frac{1}{2}\min_{0\le\theta\le\pi/2}\bigg[\sqrt{
        (s-\sin\theta)^{2}+t^{2}(1-\cos\theta)^{2}}\bigg],
    \end{equation}
    where $t=1-2\bar{q}$ and $s=2\sqrt{\bar{q}(1-\bar{q})}$, with $\bar{q}=\min\{q,1-q\}\in[0,1/2]$.
    These errors are the same for either a corrupt $\mbf{A}$ or $\mbf{B}$. 
\end{theorem}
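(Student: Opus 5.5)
By the symmetry of $p_{\msf{BSC}_{q}}$ under exchanging $\X\leftrightarrow\Y$, the canonical embedding is invariant under swapping $\mbf{A}$ and $\mbf{B}$, so both privacy errors are the same for a corrupt $\mbf{A}$ or $\mbf{B}$. I would therefore treat only a corrupt $\mbf{B}$. I would also first reduce to $q\le 1/2$. The local relabeling $y\mapsto y\oplus 1$ on $\mbf{B}$'s side maps $\msf{BSC}_{q}$ to $\msf{BSC}_{1-q}$, and it maps the embedding to the corresponding embedding by a local unitary that can be absorbed into the simulator. This explains the appearance of $\bar q$. The real-world states are
\[
\ket{\psi_{0}}=\sqrt{1-\bar q}\ket{0}+\sqrt{\bar q}\ket{1},\qquad \ket{\psi_{1}}=\sqrt{\bar q}\ket{0}+\sqrt{1-\bar q}\ket{1},
\]
with equal priors. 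In Bloch coordinates $(r_{x},r_{y},r_{z})$ these are $(s,0,t)$ and $(s,0,-t)$, where $s=2\sqrt{\bar q(1-\bar q)}$ and $t=1-2\bar q$.

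\textbf{MED part.} I would use the Helstrom formula for two equiprobable pure states: $P_{\guess}(\X|\mbf{B})_{\textup{real}}=\tfrac12\bigl(1+\sqrt{1-|\ip{\psi_0}{\psi_1}|^{2}}\bigr)$. Here $\ip{\psi_0}{\psi_1}=s$, so $\sqrt{1-s^{2}}=t$ and the real guessing probability is $(1+t)/2=1-\bar q$. The ideal guessing probability is $\sum_{y}\max_{x}p(x,y)=2\cdot\frac{1-\bar q}{2}=1-\bar q$. Hence $\delta_{\textup{MED}}=0$ by the definition in Lem.~\ref{lem:min-err-LB}.

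\textbf{QSNIR part.} I would work with the primal of Eq.~\ref{eq:sdp-qsnir}, written as $\min_{J_0,J_1}\tfrac12\sum_x \|\op{\psi_x}{\psi_x}-\sum_y p(y|x)J_y\|_1$.

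The first step is symmetrization. The objective is jointly convex in $(J_0,J_1)$. The problem is invariant under two operations: (i) conjugating by $\sigma_X$ while exchanging $J_0\leftrightarrow J_1$ (this maps $\psi_0\leftrightarrow\psi_1$), and (ii) complex conjugation (the $\psi_x$ are real). Averaging any feasible simulator over this group therefore does not increase the objective. So I may assume $J_1=\sigma_X J_0\sigma_X$ and that $J_0$ is real. In Bloch form, $J_0=(a,0,b)$ and $J_1=(a,0,-b)$ with $a^2+b^2\le 1$.

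The ideal state for $x=0$ is then $(1-\bar q)J_0+\bar qJ_1$, whose Bloch vector is $(a,0,tb)$. The $x=1$ term is the mirror image and gives the same distance. For qubits the trace distance equals half the Euclidean distance of Bloch vectors, so
\[
\delta=\tfrac12\min_{a^2+b^2\le1}\sqrt{(s-a)^2+t^2\bigl(1-b\bigr)^2}.
\]
This is the distance from the pure point $(s,t)$ to the ellipse region $\{(a,tb)\}$.

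The remaining steps, which I expect to be the main (if mild) obstacle, are to show the minimizer is pure and lies in the first quadrant. For purity: the point $(s,t)$ satisfies $s^2+(t/t)^2=s^2+1>1$ whenever $0<\bar q<1/2$, so it lies strictly outside the convex ellipse. The nearest point of a convex set to an exterior point lies on its boundary, so $a^2+b^2=1$ and I can write $(a,b)=(\sin\theta,\cos\theta)$. For the quadrant: since $s,t\ge0$, reflecting a candidate with $a<0$ or $b<0$ to $|a|,|b|$ does not increase either squared term, so $\theta\in[0,\pi/2]$. This yields exactly Eq.~\ref{eq:bsc-qsnir-priv-err}. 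The edge cases $\bar q\in\{0,1/2\}$ give $0$, consistent with Thm.~\ref{thm:perf-priv}.

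Strong duality (Lem.~\ref{lem:QSNIRsdp-duality}) is not needed, because the primal minimum is attained directly. Optionally, I would cross-check the minimizer against complementary slackness using the Helstrom witnesses $W_x$.
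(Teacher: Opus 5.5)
Your proof is correct and follows the same route as the paper. You use Helstrom to get $\delta_{\textup{MED}}=0$, then the Bloch-vector parametrization $j_{0,1}=(a,0,\pm b)$ to reduce the primal to the distance from $(s,t)$ to the contracted ellipse, with the optimum forced onto pure states in the first quadrant. Your convexity and group-averaging argument (the $\sigma_X$-swap together with complex conjugation) actually justifies the restriction to real simulators with $J_1=XJ_0X$, which the paper only asserts ``without loss of generality''. One small slip: your displayed primal objective should carry the priors, $\tfrac12\sum_x p(x)\|\cdot\|_1$, although your final formula already accounts for them.
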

\noindent
In App.~\ref{app:thm-BSC-priv}, we detail the proof for each of these privacy errors.  See Fig.~\ref{fig:BSC-BEC} for a plot.

Interestingly, this theorem tells us that while MED-based attacks reveal no information beyond the ideal setting, guaranteeing simulation-based privacy necessitates allowing for nonzero error, for all nontrivial $q$. Eq.~\ref{eq:bsc-qsnir-priv-err} defines the problem of finding the shortest distance in $\mbb{R}^{2}$ between the point $(s,t)$, which lies on the boundary of the $XZ$-plane of the Bloch sphere, and an ellipse formed by $\sin^{2}\theta+t^{-2}\cos^{2}\theta=1$, inside the Bloch sphere. The simulator prepares pure states with Bloch vectors, $(\sin\theta,\pm\cos\theta)$, but when averaged in the ideal world, this vector gets contracted along the $Z$-axis by a factor of $t$, related to how well the classical bit-flip error can be corrected in the channel. 
The pair of vectors on the boundary of this ellipse that are closest to $(s,\pm t)$ fixes a particular $\theta$ in the solution, giving the angle between the states that the simulator needs to prepare.

\begin{figure}[t]
    \centering
    \includegraphics[width=0.9\linewidth]{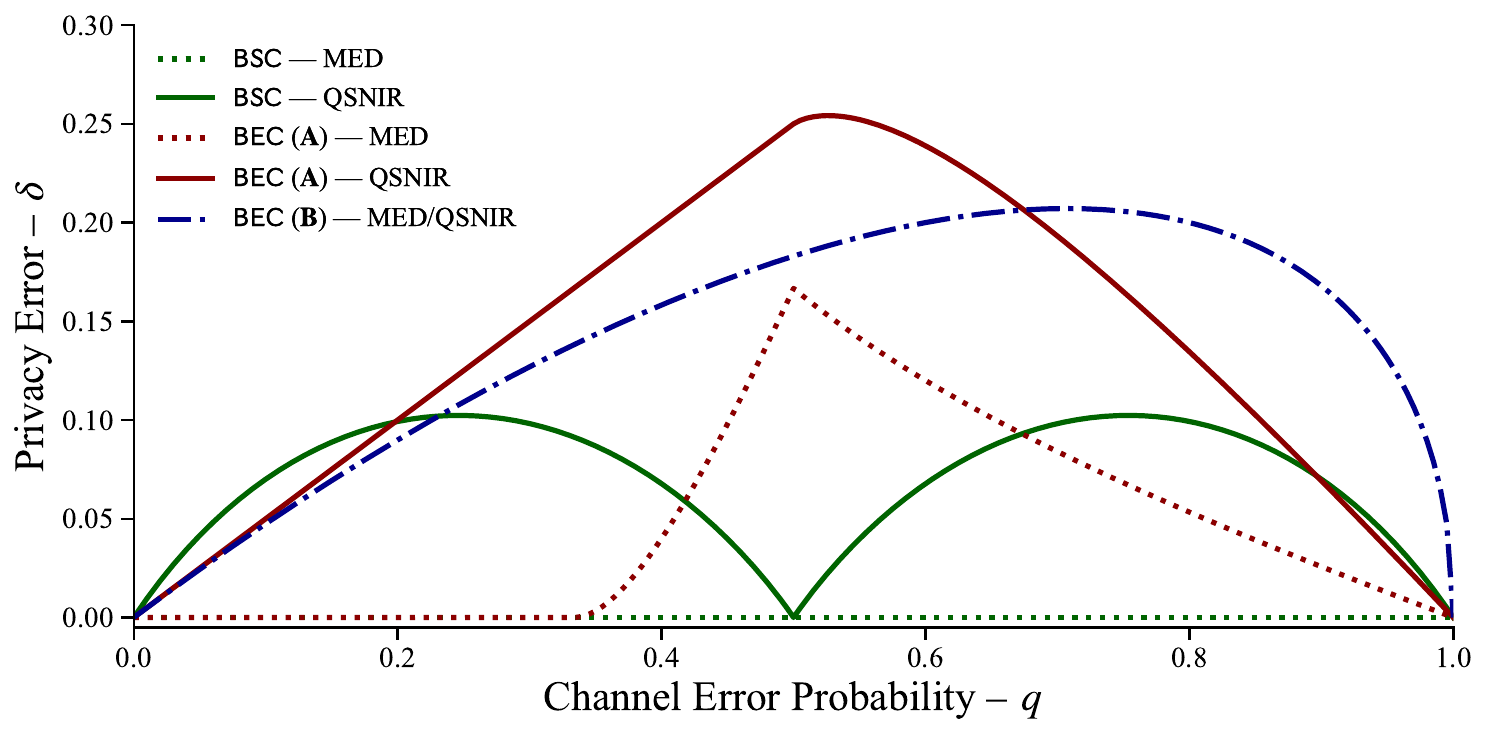}
    \caption{Optimal MED (dotted lines) and QSNIR (solid lines) privacy errors for the families of correlations $\msf{BSC}_{q}$ and $\msf{BEC}_{q}$, over $q\in[0,1]$. Both privacy errors for $\msf{BSC}_{q}$ are equivalent between $\mbf{A}$ and $\mbf{B}$ (green), while distinct errors exist for $\msf{BEC}_{q}$ when either $\mbf{A}$ is corrupt (red), or $\mbf{B}$ is corrupt (blue). (As the MED and QSNIR errors are equal for $\msf{BEC}_{q}$ when $\mbf{B}$ is corrupt, we denote it by a dash-dot line.)}
    \label{fig:BSC-BEC}
\end{figure}

\subsection{\label{sec:2PC-BEC}Binary Erasure Correlations}
The family of correlations $\msf{BEC}_{q}\coloneqq(\X,\Y)$ describes the joint distribution over the inputs and outputs of a binary erasure channel. Similar to the family of symmetric correlations, $x=y$ with probability $(1-q)/2$, however the rest of the weight in the distribution corresponds to $y=\perp$, independent of $x$, describing the case when $x$ is input into the erasure channel but lost in the output (and, hence, $|\mc{Y}|=3$ here). The pmf that describes this family is
\begin{equation}
    p_{\msf{BEC}_{q}}(x,y)\coloneqq
    \begin{cases}
        (1-q)/2, & x\oplus y=0\\
        q/2, &  y=\perp \\
        0, & \textup{else}.
    \end{cases}
\end{equation}
Note again that $\msf{BEC}_{0}=\msf{SK}$. Perfect QSNIR privacy can only hold when $q\in\{0,1\}$, i.e. where there is no error or where the bit is always lost in the channel, such that there is no correlation to begin with.

Unlike the previous correlations we discussed, the MED and QSNIR privacy errors that we compute are distinct, when either $\mbf{A}$ or $\mbf{B}$ is corrupt.
\begin{theorem}[Optimal Privacy Errors for $\msf{BEC}_{q}$]\label{thm:BEC-priv}
    For the family of correlations, $\msf{BEC}_{q}$, when $\mbf{A}$ is corrupt, we have distinct MED and QSNIR privacy errors for all $q\in(0,1)$. In the context of MED-based attacks, the optimal distinguishing advantage that $\mbf{A}$ can achieve is
    \begin{equation}
        \delta^{\msf{BEC}}_{\textup{MED},\mbf{A}}(q) \coloneqq \begin{cases}
            0, &  0 \leq q \leq 1/3,\\
            g(q), & 1/3 < q \leq 1,
        \end{cases}     
    \end{equation}
    where
    \begin{equation}
        g(q)\coloneqq 4q^2/(5q-1)-\max\{q,1-q\}.
    \end{equation}
    For QSNIR privacy, we find a distinct piecewise solution
    \begin{equation}
        \delta^{\msf{BEC}}_{\textup{QSNIR},\mbf{A}}(q) \coloneqq
        \begin{cases}
            q/2, & 0\leq q\leq1/2,\\
            f(q), & 1/2< q\leq1,
        \end{cases}
    \end{equation}
    where
    \begin{equation}
        f(q)\coloneqq (1-q)\sin\frac{\theta_{0}}{2}+\frac{q}{2}(1-\sin\theta_{0}), \quad\theta_{0}\coloneqq4\tan^{-1}\sqrt{3q-\sqrt{1+q(9q-2)}}.
    \end{equation}

   Instead, when $\mbf{B}$ is corrupt, we have that $\delta^{\msf{BEC}}_{\textup{MED},\mbf{B}}(q)=\delta^{\msf{BEC}}_{\textup{QSNIR},\mbf{B}}(q) \eqqcolon \delta^{\msf{BEC}}_{\mbf{B}}(q)$, with
    \begin{equation}
        \delta^{\msf{BEC}}_{\mbf{B}}(q) \coloneqq \frac{1}{2}\left(\sqrt{1-q^{2}}-(1-q)\right),
    \end{equation}
    such that the gap between MED and QSNIR privacy is tight for all $q\in[0,1]$.
\end{theorem}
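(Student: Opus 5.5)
The plan is to handle the two corruption cases separately. In each case I first write down the adversary's real-world ensemble from Eq.~\eqref{eq:cq-embedding}, then compute $\delta_{\textup{MED}}$ with the SDP in Eq.~\eqref{eq:sdp-med}, and then pin down $\delta_{\textup{QSNIR}}$ with the SDP pair in Eq.~\eqref{eq:sdp-qsnir}. For each QSNIR value I will give an explicit simulator $\{J_y\}_y$, which yields a primal upper bound, and a matching dual witness $\{W_x,\lambda_y\}$, which yields a lower bound. Both will be guided by the complementary slackness conditions of Lem.~\ref{lem:QSNIRsdp-duality}. Throughout I will exploit the reflection symmetry $0\leftrightarrow1$ (with $\perp$ fixed) of $\msf{BEC}_q$. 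This symmetry lets me restrict every optimization to real, reflection-symmetric operators.

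\textbf{Corrupt $\mbf{B}$.} Here $\mbf{A}$ measures $x$, so $\mbf{B}$ holds $\ket{\psi_x}=\sqrt{1-q}\ket{x}+\sqrt{q}\ket{\perp}$ with $p(x)=1/2$. The overlap is $\ip{\psi_0}{\psi_1}=q$. The Helstrom bound gives $P_{\guess}(\X|\mbf{B})_{\textup{real}}=\frac12(1+\sqrt{1-q^2})$. The ideal guessing probability is $\sum_y\max_x p(x,y)=1-q/2$, so $\delta_{\textup{MED},\mbf{B}}=\frac12(\sqrt{1-q^2}-(1-q))$. By Lem.~\ref{lem:min-err-LB} this is already a lower bound on $\delta_{\textup{QSNIR},\mbf{B}}$, so only a matching simulator remains to be found. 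I will take $J_\perp=\op{\perp}{\perp}$ and pure $J_x$ lying in $\textup{span}\{\ket{x},\ket{\perp}\}$, or possibly tilted toward the other basis vector, with a single angle parameter. I will then tune that angle so that the Helstrom projectors $W_x$ of the MED problem satisfy $Q_x^+(\mbb{I}-W_x)=Q_x^-W_x=0$ and $(\lambda_y\mbb{I}-\sum_x p(x,y)W_x)J_y=0$. This certifies that the primal value equals the MED value. Tightness of the gap for all $q$ then follows.

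\textbf{Corrupt $\mbf{A}$.} Now $\mbf{B}$ measures $y$, and $\mbf{A}$ holds $\ket{0}$ or $\ket{1}$ (each with prior $(1-q)/2$), or $\ket{+}$ (with prior $q$). The ideal guessing probability is $1-q/2$.

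For MED on this three-state real qubit ensemble, I will use the reflection symmetry together with the MED lemmas of App.~\ref{app:med}. The optimal POVM is symmetric and consists of elements along $\ket{0}$-, $\ket{1}$- and $\ket{+}$-tilted directions. For $q\le1/3$ the computational-basis measurement is optimal, giving $\delta=0$; I will verify this with the dual $\mrm{\Lambda}=\frac{1-q}{2}\mbb{I}$, which is feasible exactly when $q\le 1/3$. For $q>1/3$ a genuine three-outcome POVM becomes optimal. I will solve for it via the usual stationarity condition $\mrm{\Lambda}=\sum_x p(x)\op{\psi_x}{\psi_x}M_x$ and obtain $4q^2/(5q-1)$. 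Subtracting $\max\{q,1-q\}$, which arises from the ideal term once the dual/primal bookkeeping is written out, gives $g(q)$.

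For QSNIR, the simulator receives $x\in\{0,1\}$ and outputs $J_0,J_1$. The ideal states are then $J_y$ given $y\in\{0,1\}$, and $(J_0+J_1)/2$ given $\perp$. For $q\le1/2$, I choose $J_x=\op{x}{x}$. This gives error $q\,\td(\op{+}{+},\mbb{I}/2)=q/2$. The matching dual takes $W_\perp=\op{+}{+}$ and $W_0=W_1=0$, with $\lambda_x=\frac{q}{2}\lambda_{\max}(\op{+}{+})$ adjusted per $x$; I will check that its value is $q/2$ precisely when $q\le 1/2$.

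For $q>1/2$, I parametrize $J_0,J_1$ as pure real states symmetric about $\ket{+}$, at Bloch angle $\theta$ from the $Z$-axis. The primal value is then $(1-q)\sin(\theta/2)+\frac q2(1-\sin\theta)$. Setting the derivative to zero gives a quartic in $\tan(\theta/4)$, whose relevant root yields $\theta_0$ and $f(q)$. Finally, I will build the dual witness from the Helstrom projectors of each pair $\op{\psi_x}{\psi_x}$ versus its ideal state at $\theta_0$, and check that it is feasible.

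The main obstacle I expect is this last step: exhibiting a feasible dual certificate for the $f(q)$ regime, in particular checking $\lambda_y\mbb{I}\succeq\sum_x p(x,y)W_x$ for $y\in\{0,1\}$. The analogous feasibility check in the $g(q)$ MED regime is a secondary difficulty. In both, I would first try rank-one $W_x$ forced by slackness and reduce each feasibility condition to a $2\times2$ eigenvalue inequality.
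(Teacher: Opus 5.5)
\textbf{Corrupt $\mbf{B}$.} Your simulator ansatz cannot work. With $J_\perp=\op{\perp}{\perp}$, no choice of $J_0,J_1$ attains $\delta_{\textup{MED},\mbf{B}}$ for any $q\in(0,1)$, so the complementary-slackness tuning you describe has no solution.

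Suppose some simulator attains the MED value. Then every inequality in the proof of Lem.~\ref{lem:min-err-LB} is tight for \emph{every} MED-optimal POVM.
\begin{itemize}
\item The ideal-world term is $\frac{1-q}{2}(\tr[M_0J_0]+\tr[M_1J_1])+\frac q2$, and it must equal $1-\frac q2$. Write $\ket{\phi_\pm}=(\ket{s}\pm\ket{t})/\sqrt2$, and let $\ket{u}$ span the kernel of $\psi_0-\psi_1$ in $\mbb{C}^3$. Apply this to both Helstrom completions, $M_0=\op{\phi_+}{\phi_+}$ and $M_0=\op{\phi_+}{\phi_+}+\op{u}{u}$. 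This forces $J_0=\op{\phi_+}{\phi_+}$ and $J_1=\op{\phi_-}{\phi_-}$.
\item Then $\psi_0-\sigma_0=q\op{s}{s}+\delta(\op{s}{t}+\op{t}{s})-qJ_\perp$, with $\delta=\delta^{\msf{BEC}}_{\mbf{B}}(q)$. Tightness of the trace-distance step, again for both completions, forces this operator to be proportional to $\op{s}{t}+\op{t}{s}$. By positivity of $J_\perp$ this gives $J_\perp=\op{s}{s}$, where $\ket{s}\propto\ket{\psi_0}+\ket{\psi_1}$.
\end{itemize}
Since $\ket{\perp}$ has a nonzero component along $\ket{u}$, your choice is excluded. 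Concretely, at $q=1/2$, taking $J_{0,1}=\op{\phi_\pm}{\phi_\pm}$ and $J_\perp=\op{\perp}{\perp}$ gives a trace distance of about $0.36$ per branch, versus $\delta\approx0.18$.

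The missing idea, which is the paper's construction, is this: let the simulator output the MED Helstrom eigenvectors $\ket{\phi_\pm}$ on $y\in\{0,1\}$ and the symmetric state $\ket{s}$ on $\perp$. Then $\psi_x-\sigma_x=\pm\delta(\op{s}{t}+\op{t}{s})$ has eigenvalues $\pm\delta,0$. The primal value therefore equals the MED lower bound, and no dual certificate is needed.

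\textbf{Corrupt $\mbf{A}$.} Your dual witness for $q\le1/2$ fails. With $W_0=W_1=0$ and $W_\perp=\op{+}{+}$, the constraints force $\lambda_0,\lambda_1\ge q/2$, so the objective is $q-q=0$, not $q/2$. A working witness is $W_0=W_1=\frac12\mbb{I}-\frac{q}{2(1-q)}X$ together with $W_\perp=\op{+}{+}$, where $X=\op{0}{1}+\op{1}{0}$. Then $\sum_y p(x,y)W_y=\frac14\mbb{I}$, so $\lambda_0=\lambda_1=\frac14$ and the value is $\frac{1-q}{2}+q-\frac12=\frac q2$. The constraint $0\preceq W_0\preceq\mbb{I}$ holds exactly when $q\le\frac12$.

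The paper avoids duals here altogether. By convexity, and invariance under $X$-conjugation and complex conjugation, one may take $J_0$ real and $J_1=XJ_0X$. This reduces the problem to the scalar minimization of $(1-q)\sin\frac{\theta}{2}+\frac q2(1-\sin\theta)$.

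By contrast, the step you flagged as the main obstacle is automatic:
\begin{itemize}
\item Projectors satisfy $0\preceq W\preceq\mbb{I}$.
\item Setting $\lambda_x=\lambda_{\max}(\sum_y p(x,y)W_y)$ is always feasible.
\item Take $W_\perp=\op{+}{+}$ and $W_{0,1}$ the Helstrom projectors at angle $\theta$. The dual value then coincides with the primal value at the stationary point $q\cos\theta_0=(1-q)\cos\frac{\theta_0}{2}$.
\end{itemize}
This yields an explicit optimality certificate for $q>1/2$, which the paper does not state.

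Finally, the ideal term for a corrupt $\mbf{A}$ is $\sum_x\max_y p(x,y)=\max\{q,1-q\}$. Your value $1-\frac q2$ is the corrupt-$\mbf{B}$ one, and it would make $\delta_{\textup{MED},\mbf{A}}$ negative for $q\le\frac13$.
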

\noindent
A proof is given in App.~\ref{app:them-BEC-priv}, and Fig.~\ref{fig:BSC-BEC} depicts a plot for each error.  Below, we comment on these solutions to motivate how this family, and other correlation families like it, may be used to further understand the gap between MED and QSNIR privacy. 

The simpler case is when $\mbf{B}$ is corrupt, and the MED and QSNIR privacy errors align. This follows from how projectors which form for the PVM that maximizes the MED guessing probability are among those that the simulator prepares for optimal QSNIR privacy. We might expect this solution to be concave, with endpoints where the privacy errors vanish. It is interesting, however, that the maximum skews past $q=1/2$, to $q=1/\sqrt{2}$. Operationally, this says $\mbf{B}$ can optimally cheat when it's more likely than not that the bit $y$ is uncorrelated from $x$ (i.e. $x$ is lost in the erasure channel). While this is indeed where the most information is leaked by the embedding, it may not be the place that the correlation is the most operationally useful for 2PC.

Instead, when $\mbf{A}$ is corrupt, each privacy error is given by a distinct piecewise solution, where even the intervals of each piece don't line up between the errors. Here, the MED privacy is also no longer concave over the interval for $q$, but is maximized at exactly $q=1/2$. The QSNIR privacy error is maximized at a point just past $q=1/2$ (at $q\approx0.526$). Where a cheating $\mbf{A}$ might be able to discriminate their real world ensembles, exactly where it is just as likely as not that $x$ is lost, the environment is able to distinguish further if it is slightly more likely than not that the bit is lost. 

\section{\label{sec:DSC}Discussion}
In this work, we introduced a simulation-based framework that handles transforming quantum states into classical cryptographic correlations from only local measurements. QSNIRs extend classical SNIRs to the quantum setting while preserving statistical security guarantees against a dishonest party with quantum side-information.
Within this framework, we showed the limitations of entanglement as a non-interactive cryptographic resource, with an exactly computable one-shot error in privacy. Several directions are opened by this work, including further quantifying the gap between MED and QSNIR privacy, and formally minimizing these privacy errors over encoded phase information. 

With respect to composability, we proved that QSNIRs compose in parallel and sequence with each other. We believe the simulation-based structure of this framework can then be made amenable to composition within large composable quantum cryptographic frameworks~\cite{Portmann-2022-SQ,Broadbent-2023-Cc}, such as the \textit{abstract cryptography} framework~\cite{Maurer-2012-CC}. This motivates studying the construction of quantum protocols that distribute correlations ``from scratch'', by leveraging \textit{self-testing} in conjunction with a QSNIR. 
Device-independent self-testing describes the process of certifying a quantum resource from statistics alone, making no assumptions about the devices used~\cite{Coladangelo-2017-Ap,Bowles-2018-DE,Baccari-2020-SB,Murta-2023-SD,Balanzó-Juandó-2026-Ap}, and serves as the basis for device-independent QKD (DIQKD)~\cite{Ekert-2014-up,Vazirani-2014-Fd}.  
While in the 2PC setting, statistically vanishing privacy errors are impossible for universal correlations, our QSNIR framework can be adapted to an honest majority multipartite setting (where self-testing has been shown to remain possible~\cite{Baccari-2020-SB,Balanzó-Juandó-2026-Ap}) in order to construct feasible reductions from these universal correlations, e.g. using multipartite resources such as graph states~\cite{Gold-2026-Hp}. 

Finally, we also comment that the QSNIR privacy problem admits a statistical interpretation through Blackwell's comparison of experiments~\cite{Blackwell1951,Blackwell1953,LeCam96}. For either corruption, the honest party's output indexes a statistical experiment. In this context, the ideal functionality gives the adversary a classical experiment, whereas the coherent implementation gives a quantum experiment. Perfect privacy asks whether the latter is a post-processing (garbling) of the former, while the optimal QSNIR privacy error measures an approximate quantum deficiency between the two. The MED privacy error corresponds to one particular decision problem in this comparison, explaining both the universal lower bound in guessing advantage and why, for some correlations, the bound is not tight. Further understanding QSNIR privacy through this lens may enable us to answer open questions posed by this work.

\subsubsection*{Acknowledgments} This work was supported by the NSF Quantum Leap Challenge Institute on Hybrid Quantum Architectures and Networks (NSF Award No. 2016136). 

\clearpage

\printbibliography

\clearpage

\appendix
\section{\label{app:med}Computing MED Guessing Probabilities}
Let $\mbf{B}$ be the label for the discerning party.
In general the success probability of MED for a particular POVM, $\{M_{x}\}_{x}$, is given as
\begin{equation}
    P_{\textup{succ}}(\X|\mbf{B})_{\rho} \coloneqq \sum_{x}p(x)\tr\left[M_{x}\rho_{x}\right],
\end{equation}
which we aim to maximize. Formally, this maximization is given by an SDP pair with strong duality:
\begin{equation}\label{eq:sdp-med-gen}
    \begin{aligned}
        &\underline{\textbf{Primal}} \\
        \textup{max} \quad & \sum_{x}p(x)\ip{\psi_{x}}{M_{x}|\psi_{x}}, \\
        \textup{s.t.} \quad & \sum_{x}M_{x}=\mbb{I}, \\
        & M_{x}\succeq0,  & \quad \forall x\in\mc{X}.
        \\
        &\underline{\textbf{Dual}} \\
        \textup{min} \quad & \tr[\mrm{\Lambda}], \\
        \textup{s.t.} \quad & \mrm{\Lambda}\succeq p(x)\op{\psi_{x}}{\psi_{x}}, & \quad \forall x\in\mc{X} \\
        \quad & \mrm{\Lambda}=\mrm{\Lambda}^{\dagger}. 
    \end{aligned}
\end{equation}
For a POVM that achieves the optimum, we relabel $P_{\textup{guess}}\coloneqq P_{\textup{succ}}$ as the optimal probability to guess $x\samp\X$ given $\mbf{B}$'s quantum side information. In~\cite{Konig-2009-OM}, it was shown that this optimum is related to $H_{\min}(\X|\mbf{B})_{\rho}$, as
\begin{equation}
    P_{\textup{guess}}(\X|\mbf{B})_{\rho}=2^{-H_{\min}(\X|\mbf{B})}.
\end{equation}
This yields an interpretation to the quantum min-entropy for cq states, as a measure of the optimal success probability for MED.

Complementary slackness between the SDP pair in Eq.~\ref{eq:sdp-med-gen} allows for the existence of necessary and sufficient conditions for a POVM to be optimal. These are the well-known \textit{Kennedy-Yuen-Lax conditions}, which we detail in the following lemma.
\begin{lemma}[Kennedy-Yuen-Lax~\cite{Yuen-1975-Ot}]\label{lem:kyl}
    The POVM $\{M_{x}\}_{x}$ maximizes the MED success probability for the ensemble $\{\{p(x),\rho_{x}\}\}_{x}$, if and only if
    \begin{equation}
        \sum_{x'}p(x')M_{x'}\rho_{x'}-p(x)\rho_{x}\succeq0, \quad \forall x.
    \end{equation}
\end{lemma}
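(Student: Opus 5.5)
The plan is to read both directions off the primal--dual pair in Eq.~\ref{eq:sdp-med-gen}, using complementary slackness as in Lem.~\ref{lem:QSNIRsdp-duality}. Strong duality for the MED pair is standard; I will take it as given, or recheck it quickly with Slater's condition: the dual has the strictly feasible point $\mrm{\Lambda}=2\,\mbb{I}$, and the primal has the interior point $M_{x}=\mbb{I}/|\mc{X}|$. Throughout I write the ensemble generally as $\{p(x),\rho_{x}\}$; pure states are a special case.

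For sufficiency, suppose $\{M_{x}\}_{x}$ satisfies the stated condition and define $\mrm{\Lambda}\coloneqq\sum_{x'}p(x')M_{x'}\rho_{x'}$. Summing the condition against $M_{x}$ gives nothing new, so I first argue that $\mrm{\Lambda}$ is Hermitian. The condition says $\mrm{\Lambda}-p(x)\rho_{x}\succeq0$. A positive semidefinite operator is Hermitian, and $p(x)\rho_{x}$ is Hermitian, so $\mrm{\Lambda}=\mrm{\Lambda}^{\dagger}$. Then $\mrm{\Lambda}$ is dual feasible. Its objective is $\tr[\mrm{\Lambda}]=\sum_{x'}p(x')\tr[M_{x'}\rho_{x'}]$, which is exactly the primal value of $\{M_{x}\}$. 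By weak duality, primal value $\le$ dual value, so equality certifies that $\{M_{x}\}$ is optimal.

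For necessity, let $\{M_{x}\}$ be optimal. By strong duality there is a dual optimal $\mrm{\Lambda}^{\star}$ with
\[
\tr[\mrm{\Lambda}^{\star}]=\sum_{x}p(x)\tr[M_{x}\rho_{x}].
\]
Using $\sum_{x}M_{x}=\mbb{I}$, this rewrites as $\sum_{x}\tr[M_{x}(\mrm{\Lambda}^{\star}-p(x)\rho_{x})]=0$. Each summand is the trace of a product of two positive semidefinite operators, so each summand is nonnegative. Since they sum to zero, each is zero. For positive semidefinite $A,B$, $\tr[AB]=0$ implies $AB=0$, because $\tr[AB]=\|A^{1/2}B^{1/2}\|_{2}^{2}$. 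Hence
\[
M_{x}\bigl(\mrm{\Lambda}^{\star}-p(x)\rho_{x}\bigr)=0,
\]
and taking adjoints, $(\mrm{\Lambda}^{\star}-p(x)\rho_{x})M_{x}=0$, for every $x$. Summing $\mrm{\Lambda}^{\star}M_{x}=p(x)\rho_{x}M_{x}$ over $x$ gives $\mrm{\Lambda}^{\star}=\sum_{x}p(x)\rho_{x}M_{x}$. Since $\mrm{\Lambda}^{\star}$ is Hermitian, it also equals its adjoint $\sum_{x}p(x)M_{x}\rho_{x}$. Dual feasibility of $\mrm{\Lambda}^{\star}$ then gives exactly the stated condition.

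The only delicate point is justifying strong duality and dual attainment. In finite dimensions Slater's condition, checked above, suffices. I will also note the Hermiticity of $\sum_{x'}p(x')M_{x'}\rho_{x'}$ explicitly, since for an arbitrary POVM this operator need not be Hermitian.
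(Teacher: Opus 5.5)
Your proof is correct and takes essentially the same approach as the paper, which gives no written proof: it only attributes the conditions to complementary slackness in the MED primal--dual pair of Eq.~\ref{eq:sdp-med-gen} and cites Yuen--Kennedy--Lax. Your argument (weak duality for sufficiency; strong duality via Slater plus complementary slackness for necessity; the observation that the positivity condition forces $\sum_{x'}p(x')M_{x'}\rho_{x'}$ to be Hermitian) is exactly that argument written out in full.
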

\noindent
Given a candidate POVM, this lemma gives a systematic way to check for its optimality.

Some of the earliest work on MED regards ensembles of limited size and yields analytical results. In the case of a two-state, or binary, ensemble, let $p_{x}\coloneqq p(x)$ for $x\in\mbb{Z}_{2}$.
The optimal success probability can be computed as follows.
\begin{lemma}[Holevo-Helstrom~\cite{Holevo-1973-Sd,Helstrom-1969-Qd}]\label{lem:helevo}
    Given a binary ensemble consisting of states $\{\rho_{0},\rho_{1}\}$, with priors $\{p_{0},p_{1}\}$, the optimal MED guessing probability is
    \begin{equation}
        P_{\textup{guess}}=\frac{1}{2}+\td\left(p_{0}\rho_{0},p_{1}\rho_{1}\right).
    \end{equation}
\end{lemma}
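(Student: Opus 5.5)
The plan is to reduce the binary MED problem to a single Hermitian optimization and then read off its value from the spectral decomposition. Since $M_{1}=\mbb{I}-M_{0}$, the success probability is
\begin{equation}
    P_{\textup{succ}}=p_{0}\tr[M_{0}\rho_{0}]+p_{1}\tr[(\mbb{I}-M_{0})\rho_{1}]=p_{1}+\tr\left[M_{0}\left(p_{0}\rho_{0}-p_{1}\rho_{1}\right)\right].
\end{equation}
Define $\mrm{\Gamma}\coloneqq p_{0}\rho_{0}-p_{1}\rho_{1}$, which is Hermitian. The problem then becomes maximizing $\tr[M_{0}\mrm{\Gamma}]$ over $0\preceq M_{0}\preceq\mbb{I}$.

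Write the Jordan decomposition $\mrm{\Gamma}=\mrm{\Gamma}^{+}-\mrm{\Gamma}^{-}$, with orthogonal supports. For any admissible $M_{0}$ we have $\tr[M_{0}\mrm{\Gamma}]\leq\tr[M_{0}\mrm{\Gamma}^{+}]\leq\tr[\mrm{\Gamma}^{+}]$. Equality is achieved by the projector $M_{0}=\Pi^{+}$ onto the positive eigenspace. Hence $P_{\textup{guess}}=p_{1}+\tr[\mrm{\Gamma}^{+}]$.

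Next I express $\tr[\mrm{\Gamma}^{+}]$ through the trace norm. We have $\tr[\mrm{\Gamma}^{+}]+\tr[\mrm{\Gamma}^{-}]=\|\mrm{\Gamma}\|_{1}$ and $\tr[\mrm{\Gamma}^{+}]-\tr[\mrm{\Gamma}^{-}]=\tr[\mrm{\Gamma}]=p_{0}-p_{1}$. Solving gives $\tr[\mrm{\Gamma}^{+}]=\tfrac{1}{2}(p_{0}-p_{1})+\tfrac{1}{2}\|\mrm{\Gamma}\|_{1}$. Therefore
\begin{equation}
    P_{\textup{guess}}=\frac{1}{2}(p_{0}+p_{1})+\frac{1}{2}\left\|p_{0}\rho_{0}-p_{1}\rho_{1}\right\|_{1}=\frac{1}{2}+\td\left(p_{0}\rho_{0},p_{1}\rho_{1}\right),
\end{equation}
using $p_{0}+p_{1}=1$ and extending $\td$ to subnormalized operators as $\tfrac{1}{2}\|\cdot\|_{1}$.

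There is no real obstacle. The only step needing care is justifying $\tr[M_{0}\mrm{\Gamma}^{-}]\geq0$ and $\tr[M_{0}\mrm{\Gamma}^{+}]\leq\tr[\mrm{\Gamma}^{+}]$, which follows from $0\preceq M_{0}\preceq\mbb{I}$ and the positivity of $\mrm{\Gamma}^{\pm}$.

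Optionally, optimality can be cross-checked with Lem.~\ref{lem:kyl}. Take $M_{0}=\Pi^{+}$. Then $\sum_{x'}p_{x'}M_{x'}\rho_{x'}=p_{1}\rho_{1}+\Pi^{+}\mrm{\Gamma}=p_{1}\rho_{1}+\mrm{\Gamma}^{+}$, which is Hermitian. Subtracting $p_{1}\rho_{1}$ leaves $\mrm{\Gamma}^{+}\succeq0$. Subtracting $p_{0}\rho_{0}$ leaves $\mrm{\Gamma}^{-}\succeq0$. So the Kennedy-Yuen-Lax conditions hold.
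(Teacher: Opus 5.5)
Your proof is correct and follows essentially the same approach the paper indicates. The paper gives no proof of its own beyond citing Holevo and Helstrom and noting that the optimum is attained by projecting onto the positive eigenspace of $\mrm{\Delta}=p_{0}\rho_{0}-p_{1}\rho_{1}$, which is exactly your optimizer $M_{0}=\Pi^{+}$; your explicit remark that $\td$ must be read as $\tfrac{1}{2}\|\cdot\|_{1}$ on subnormalized operators is a useful clarification the paper leaves implicit.
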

\noindent
The binary POVM, $\{M,\mbb{I}-M\}$, that attains this optimum is known as a \textit{Helstrom measurement}, found by projecting onto the positive and negative eigenspaces of $\mrm{\Delta}\coloneqq p_{0}\rho_{0}-p_{1}\rho_{1}$.

Lastly, the Pretty-Good Measurement (PGM) (or Square-Root Measurement)~\cite{Hausladen-1994-PG,Eldar-2001-qd}, denotes a particular fixed strategy for MED, which can exploit symmetries baked into the ensemble. Formally, the PGM involves defining a POVM, $\{M_{x}\}_{x}$, with elements
\begin{equation}
    M_{x}\coloneqq p(x)(\sqrt{\rho})^{-1}\rho_{x}(\sqrt{\rho})^{-1},
\end{equation}
where $\rho\coloneqq\sum_{x}p(x)\rho_{x}$ is the average state of the ensemble, and $(\sigma)^{-1}$ denotes the \textit{Moore-Penrose pseudoinverse} of $\sigma$. The following lemma demonstrates a sufficient condition for the optimality of this POVM for MED on pure states.
\begin{lemma}[Optimality of the PGM~\cite{Eldar-2001-qd}]\label{lem:PGM}
Let $\{\{p(x),\ket{\psi_{x}}\}\}_{x}$ be a pure state ensemble, where $p(x)=1/|\mc{X}|$ for all $x$ (uniform priors), and $\ket{\psi_{x}}=U^{x}\ket{\psi_{0}}$ for some unitary $U$ and fiducial state $\ket{\psi_{0}}$ within the ensemble (geometrically uniform states). Then, the PGM is optimal for MED.
\end{lemma}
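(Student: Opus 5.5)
This is the standard result of Eldar and Forney, so I would reprove it through the Kennedy--Yuen--Lax conditions of Lem.~\ref{lem:kyl}. Write $N\coloneqq|\mc{X}|$, $\ket{\psi_{x}}=U^{x}\ket{\psi_{0}}$, and $\rho=\frac{1}{N}\sum_{x}\op{\psi_{x}}{\psi_{x}}$.

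\textbf{Step 1: covariance of the PGM.} For the orbit to close we need $U^{N}\ket{\psi_{0}}\propto\ket{\psi_{0}}$. After absorbing a global phase into $U$ on the relevant span, we may take $U^{N}=\mbb{I}$ there, so that $U\rho U^{\dagger}=\rho$. Hence $U$ commutes with $\rho$ and with its pseudoinverse square root. Setting $\ket{\mu_{x}}\coloneqq\frac{1}{\sqrt N}\rho^{-1/2}\ket{\psi_{x}}$, the PGM elements are $M_{x}=\op{\mu_{x}}{\mu_{x}}$, and they satisfy $\ket{\mu_{x}}=U^{x}\ket{\mu_{0}}$. The $M_{x}$ sum to the projector onto the support of $\rho$. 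I would complete the POVM by adding the complementary projector to any one element; this does not change the success probability, since every $\ket{\psi_{x}}$ lies in the support.

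\textbf{Step 2: reduce KYL to a single inequality.} Define $\mrm{\Gamma}\coloneqq\sum_{x'}p(x')M_{x'}\rho_{x'}=\frac{1}{N}\sum_{x'}\ip{\mu_{x'}}{\psi_{x'}}\op{\mu_{x'}}{\psi_{x'}}$. By covariance, $c\coloneqq\ip{\mu_{x}}{\psi_{x}}=\ip{\mu_{0}}{\rho^{-1/2}\rho\,\rho^{-1/2}\cdots}$ is independent of $x$. Moreover $c$ is real and positive, because $\ip{\mu_{x}}{\psi_{x}}=\frac{1}{\sqrt N}\bra{\psi_{x}}\rho^{-1/2}\ket{\psi_{x}}>0$. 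Therefore
\begin{equation}
\mrm{\Gamma}=\frac{c}{\sqrt N}\,\rho^{-1/2}\sum_{x'}\op{\psi_{x'}}{\psi_{x'}}\cdot\frac{1}{\sqrt N}\cdot\frac{1}{\sqrt N}\,\sqrt N=\frac{c}{\sqrt N}\,\rho^{1/2},
\end{equation}
up to bookkeeping of the constants. In particular $\mrm{\Gamma}$ is Hermitian and $U$-invariant. The KYL condition then becomes $\mrm{\Gamma}\succeq\frac{1}{N}\op{\psi_{x}}{\psi_{x}}$ for all $x$. Conjugating by $U^{x}$ reduces this to the single case $x=0$.

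\textbf{Step 3: verify $\alpha\rho^{1/2}\succeq\frac1N\op{\psi_{0}}{\psi_{0}}$ with the correct $\alpha$.} This is the main obstacle, and the constants must be tracked carefully. On the support of $\rho$, an operator inequality $A\succeq\op{v}{v}$ with $A\succ0$ holds iff $\bra{v}A^{-1}\ket{v}\le1$. So the condition is
\begin{equation}
\frac{1}{N\alpha}\bra{\psi_{0}}\rho^{-1/2}\ket{\psi_{0}}\le 1 .
\end{equation}
Since $\alpha$ is itself proportional to $\bra{\psi_{0}}\rho^{-1/2}\ket{\psi_{0}}$, I expect this to be an equality rather than a strict inequality, with the ratio coming out to exactly $1$. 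That equality is what makes the PGM satisfy KYL tightly. The check that the constants cancel exactly is where I would be most careful. I would confirm it using the Gram-matrix picture: with $G_{xx'}=\ip{\psi_{x}}{\psi_{x'}}$ circulant, $\bra{\psi_{0}}\rho^{-1/2}\ket{\psi_{0}}=\sqrt N\,(G^{1/2})_{00}$, and circulant symmetry makes every diagonal entry of $G^{1/2}$ equal.

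Once Step 3 holds, Lem.~\ref{lem:kyl} gives optimality. As a cross-check, the resulting success probability $\frac{1}{N^{2}}\bigl(\tr G^{1/2}\bigr)^{2}$ matches the known formula.
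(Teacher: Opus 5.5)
Your proof is correct, and since the paper gives no proof of its own (it only cites Eldar--Forney), your direct check of the Kennedy--Yuen--Lax conditions via the $U$-covariance of the PGM is the standard argument behind that citation. The constants close exactly: directly $\mrm{\Gamma}=\frac{c}{N^{3/2}}\rho^{-1/2}(N\rho)=\frac{c}{\sqrt N}\rho^{1/2}$ (your intermediate display does not evaluate to this, though your final expression is right), and $N\alpha=\sqrt N c=\bra{\psi_0}\rho^{-1/2}\ket{\psi_0}$, so the Schur-complement test holds with equality and no Gram-matrix detour is needed.

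You are also right to impose $U^{N}\ket{\psi_0}\propto\ket{\psi_0}$, which the paper's wording leaves implicit but which is genuinely needed. For example, $\{\ket{0},\ket{+},\ket{1}\}$ with uniform priors is a non-closing orbit of a quarter-turn Bloch rotation, and its PGM succeeds with probability $\frac{5}{12}+\frac{\sqrt{2}}{6}\approx 0.652$. That is below the $2/3$ obtained by measuring in the computational basis.
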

\noindent
To determine the optimum, we define the ensemble's corresponding \textit{Gram matrix}, $G$, with matrix elements $G^{(x,x')}=\ip{\psi_{x}}{\psi_{x'}}$, and compute
\begin{equation}
    P_{\textup{guess}}=\frac{1}{|\mc{X}|^{2}}\left(\sum_{x}\sqrt{\lambda_{G}^{(x)}}\right)^{2},
\end{equation}
where $\lambda_{G}$ is the set of eigenvalues for $G$, and $\lambda_{G}^{(x)}$ is the $x^{\textup{th}}$ entry. 

\section{\label{app:thm-para}Proof of Thm.~\ref{thm:parallel} (Parallel Composition)}
The proof for correctness and privacy of the composite reduction follow identically. For simplicity we only detail with respect to privacy against $\mbf{B}$. 
Individually, we have that
\begin{subequations}
    \begin{equation}
         \td\left(\left(\mc{N}^{\mbf{A}}_{1}\otimes \mrm{id}^{\mbf{B}}\right)\left(\rho^{\mbf{AB}}_{1,\src}\right),\left(\mrm{id}^{\mbf{A}}\otimes\mc{S}^{\mbf{B}}_{1}\right)\left(\rho^{\mbf{AB}}_{1,\tgt}\right)\right)\leq\delta_{\mbf{B},1},
    \end{equation}
    \begin{equation}
        \td\left(\left(\mc{N}^{\mbf{A}}_{2}\otimes \mrm{id}^{\mbf{B}}\right)\left(\rho^{\mbf{AB}}_{2,\src}\right),\left(\mrm{id}^{\mbf{A}}\otimes\mc{S}^{\mbf{B}}_{2}\right)\left(\rho^{\mbf{AB}}_{2,\tgt}\right)\right)\leq\delta_{\mbf{B},2},
    \end{equation}
\end{subequations}
where $\mc{S}^{\mbf{B}}_{1}$ and $\mc{S}^{\mbf{B}}_{2}$ are corresponding simulators for each reduction. Therefore, if we define a simulator $\mc{S}^{\mbf{B}}=\mc{S}^{\mbf{B}}_{1}\otimes\mc{S}^{\mbf{B}}_{2}$, and denote $\mc{N}^{\mbf{A}}=\mc{N}_{1}^{\mbf{A}}\otimes\mc{N}_{2}^{\mbf{A}}$, it must be that 
\begin{equation}\label{eq:para-hybrid-contr}
    \td\left(\left(\mc{N}^{\mbf{A}}\otimes \mrm{id}^{\mbf{B}}\right)\left(\rho^{\mbf{AB}}_{1,\src}\otimes\rho^{\mbf{AB}}_{2,\src}\right), \left(\mrm{id}^{\mbf{A}}\otimes\mc{S}^{\mbf{B}}\right)\left(\rho^{\mbf{AB}}_{1,\tgt}\otimes\rho^{\mbf{AB}}_{2,\tgt}\right)\right)\leq\delta_{\mbf{B},1}+\delta_{\mbf{B},2}.
 \end{equation}
 by the sub-additivity of the trace distance under tensor product, when applied to quantum states. \qed

\section{\label{app:thm-seq}Proof of Thm.~\ref{thm:sequential} (Sequential Composition)}
We employ a similar argument as the previous proof, again focusing on only privacy against $\mbf{B}$. 
We set $\mc{S}^{\mbf{B}}=\mc{S}^{\mbf{B}}_{1}\circ\mc{S}^{\mbf{B}}_{2}$ as the sequential compositions of $\mc{S}^{\mbf{B}}_{1}$ and $\mc{S}^{\mbf{B}}_{2}$ (the corresponding simulators satisfying $\delta_{\mbf{B},1}$ and $\delta_{\mbf{B},2}$ privacy of the individual reductions), and $\mc{N}^{\mbf{A}}=\mc{N}_{2}^{\mbf{A}}\circ\mc{N}_{1}^{\mbf{A}}$. We then extend the composition utilizing this simulator,
\begin{align}
    &\td\left(\left(\mc{N}^{\mbf{A}}\otimes \id^{\mbf{B}}\right)\left(\rho^{\mbf{AB}}_{0,\src}\right), \mrm{id}^{\mbf{A}}\otimes\mc{S}^{\mbf{B}}\left(\rho^{\mbf{AB}}_{2,\tgt}\right)\right)\nonumber\\
    &\quad\quad\quad\quad\quad\quad=\td\left(\left(\mc{N}_{2}^{\mbf{A}}\circ\mc{N}_{1}^{\mbf{A}}\otimes \mrm{id}^{\mbf{B}}\right)\left(\rho^{\mbf{AB}}_{0,\src}\right), \left(\mrm{id}^{\mbf{A}}\otimes\mc{S}^{\mbf{B}}_{1}\circ\mc{S}^{\mbf{B}}_{2}\right)\left(\rho^{\mbf{AB}}_{2,\tgt}\right)\right)\nonumber\\
    &\quad\quad\quad\quad\quad\quad\leq\td\left(\left(\mc{N}_{2}^{\mbf{A}}\circ\mc{N}_{1}^{\mbf{A}}\otimes \mrm{id}^{\mbf{B}}\right)\left(\rho^{\mbf{AB}}_{0,\src}\right),\left(\mc{N}^{\mbf{A}}_{2}\otimes\mc{S}^{\mbf{B}}_{1}\right)\left(\rho^{\mbf{AB}}_{1,\tgt}\right)\right)\nonumber\\
    &\quad\quad\quad\quad\quad\quad\quad\quad+\td\left(\left(\mc{N}^{\mbf{A}}_{2}\otimes\mc{S}^{\mbf{B}}_{1}\right)\left(\rho^{\mbf{AB}}_{1,\tgt}\right), \left(\mrm{id}^{\mbf{A}}\otimes\mc{S}^{\mbf{B}}_{1}\circ\mc{S}^{\mbf{B}}_{2}\right)\left(\rho^{\mbf{AB}}_{2,\tgt}\right)\right),
\end{align}
where we have defined a hybrid world channel $\mc{N}^{\mbf{A}}_{2}\otimes\mc{S}^{\mbf{B}}_{1}\colon\D(\mc{H}^{\mbf{AB}}_{1,\tgt})\to\D(\mc{H}^{\mbf{A}}_{2,\tgt}\otimes\mc{H}^{\mbf{B}}_{0,\src})$, and applied the triangle inequality. By data processing, we can further bound this inequality as
\begin{subequations}
    \begin{multline}
        \td\left(\left(\mc{N}_{2}^{\mbf{A}}\circ\mc{N}_{1}^{\mbf{A}}\otimes \mrm{id}^{\mbf{B}}\right)\left(\rho^{\mbf{AB}}_{0,\src}\right),\left(\mc{N}^{\mbf{A}}_{2}\otimes\mc{S}^{\mbf{B}}_{1}\right)\left(\rho^{\mbf{AB}}_{1,\tgt}\right)\right)\\
        \leq\td\left(\left(\mc{N}_{1}^{\mbf{A}}\otimes \mrm{id}^{\mbf{B}}\right)\left(\rho^{\mbf{AB}}_{0,\src}\right),\left(\id^{\mbf{A}}\otimes\mc{S}^{\mbf{B}}_{1}\right)\left(\rho^{\mbf{AB}}_{1,\tgt}\right)\right)\leq\delta_{\mbf{B},1},
    \end{multline}
    \begin{multline}
        \td\left(\left(\mc{N}^{\mbf{A}}_{2}\otimes\mc{S}^{\mbf{B}}_{1}\right)\left(\rho^{\mbf{AB}}_{1,\tgt}\right), \left(\mrm{id}^{\mbf{A}}\otimes\mc{S}^{\mbf{B}}_{1}\circ\mc{S}^{\mbf{B}}_{2}\right)\left(\rho^{\mbf{AB}}_{2,\tgt}\right)\right)\\
        \leq\td\left(\left(\mc{N}^{\mbf{A}}_{2}\otimes\id^{\mbf{B}}\right)\left(\rho^{\mbf{AB}}_{1,\tgt}\right), \left(\mrm{id}^{\mbf{A}}\otimes\mc{S}^{\mbf{B}}_{2}\right)\left(\rho^{\mbf{AB}}_{2,\tgt}\right)\right)\leq \delta_{\mbf{B},2},
    \end{multline}
\end{subequations}
demonstrating an additive upper bound on the composition. 
\qed

\section{\label{app:QSNIR-LUequi}Local Unitary Equivalence in QSNIRs}
When the local channel enacted by a party within a QSNIR is both deterministic and invertible (i.e. a reduction described by local unitary channels) we derive a simple result regarding the existence of a simulator proving privacy against the party. 
\begin{lemma}[Correctness Implies Privacy for Unitary QSNIRs]\label{lem:LU-QSNIR}
     Consider a pair of local unitary channels, where $\mbf{A}$ and $\mbf{B}$ apply unitaries $U$ and $V$, respectively, to their registers of $\rho^{\mbf{AB}}_{\src}$. Whenever $\epsilon$-correctness with respect to some $\rho^{\mbf{AB}}_{\tgt}$ is satisfied, $\epsilon=\delta_{\mbf{A}}=\delta_{\mbf{B}}$ privacy against $\mbf{A}$ and $\mbf{B}$ is also satisfied, by simulators  $U^{\dagger}$ and $V^{\dagger}$.
\end{lemma}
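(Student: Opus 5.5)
The plan is to exploit the unitary invariance of the trace distance: conjugating both arguments by a common unitary leaves $\td$ unchanged. Write $\mc{U}(\cdot)=U(\cdot)U^{\dagger}$ and $\mc{V}(\cdot)=V(\cdot)V^{\dagger}$ for the local unitary channels. The correctness hypothesis is
\begin{equation*}
\td\left((\mc{U}\otimes\mc{V})(\rho^{\mbf{AB}}_{\src}),\rho^{\mbf{AB}}_{\tgt}\right)\leq\epsilon .
\end{equation*}
For privacy against $\mbf{B}$, I take the simulator $\mc{S}^{\mbf{B}}=\mc{V}^{\dagger}$, that is, $\sigma\mapsto V^{\dagger}\sigma V$. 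This is a valid CPTP map from $\D(\mc{H}^{\mbf{B}}_{\tgt})$ to $\D(\mc{H}^{\mbf{B}}_{\src})$, since a unitary channel forces these spaces to be identified.

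The key identity is
\begin{equation*}
(\mc{U}\otimes\id^{\mbf{B}})(\rho_{\src})=(\id^{\mbf{A}}\otimes\mc{V}^{\dagger})\left((\mc{U}\otimes\mc{V})(\rho_{\src})\right),
\end{equation*}
which holds because $\mc{V}^{\dagger}\circ\mc{V}=\id$. The privacy quantity in Eq.~\ref{eq:qsnir-sec-static} therefore becomes
\begin{equation*}
\td\left((\id\otimes\mc{V}^{\dagger})\left((\mc{U}\otimes\mc{V})(\rho_{\src})\right),(\id\otimes\mc{V}^{\dagger})(\rho_{\tgt})\right).
\end{equation*}
The channel $\id\otimes\mc{V}^{\dagger}$ is conjugation by the unitary $\mbb{I}\otimes V^{\dagger}$, and the trace norm is unitarily invariant. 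So this quantity equals the correctness distance exactly, and hence is at most $\epsilon$.

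The argument against $\mbf{A}$ is symmetric: use the simulator $\mc{U}^{\dagger}$ and the identity $(\id\otimes\mc{V})(\rho_{\src})=(\mc{U}^{\dagger}\otimes\id)\left((\mc{U}\otimes\mc{V})(\rho_{\src})\right)$.

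Unitary invariance gives equality of the two distances, not just an inequality. So any $\epsilon$ bounding the correctness error also bounds each privacy error, which gives $\delta_{\mbf{A}}=\delta_{\mbf{B}}=\epsilon$ as admissible parameters. If the intended reading is that the optimal errors coincide, I would additionally note the converse direction. Given any simulator $\mc{S}^{\mbf{B}}$ achieving $\delta_{\mbf{B}}$, applying $\id\otimes\mc{V}$ to both arguments can only contract the distance. It maps the real world to $(\mc{U}\otimes\mc{V})(\rho_{\src})$ and the ideal world to $(\id\otimes\mc{V}\circ\mc{S}^{\mbf{B}})(\rho_{\tgt})$. This reverse step is the only subtle point, because $\mc{V}\circ\mc{S}^{\mbf{B}}$ need not be the identity, so it does not directly bound the correctness error. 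I therefore expect to claim only the forward statement as written: correctness $\epsilon$ implies privacy $\epsilon$ via the inverse-unitary simulators.
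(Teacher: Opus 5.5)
Your proposal is correct and matches the paper's proof: both apply the inverse local unitary ($\mathbb{I}\otimes V^{\dagger}$, respectively $U^{\dagger}\otimes\mathbb{I}$) to both arguments of the correctness distance, which yields the privacy distance with simulator $V^{\dagger}(\cdot)V$ (respectively $U^{\dagger}(\cdot)U$). The paper justifies this step by data processing, while you use unitary invariance of the trace distance, which also shows the two distances are equal rather than merely ordered.
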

\begin{proof}
    Satisfying $\epsilon$ correctness requires that
    \begin{equation}
        \td\left(\left(U^{\mbf{A}}\otimes V^{\mbf{B}}\right)\rho^{\mbf{AB}}_{\src}\left((U^{\dagger})^{\mbf{A}}\otimes (V^{\dagger})^{\mbf{B}}\right),\rho^{\mbf{AB}}_{\tgt}\right)\leq\epsilon.
    \end{equation}
    Therefore, by data-processing the following must hold
    \begin{equation}
         \td\left(\left(U^{\mbf{A}}\otimes \mbb{I}^{\mbf{B}}\right)\rho^{\mbf{AB}}_{\src}\left((U^{\dagger})^{\mbf{A}}\otimes \mbb{I}^{\mbf{B}}\right),\left(\mbb{I}^{\mbf{A}}\otimes (V^{\dagger})^{\mbf{B}}\right)\rho^{\mbf{AB}}_{\tgt}\left(\mbb{I}^{\mbf{A}}\otimes V^{\mbf{B}}\right)\right)\leq\epsilon,
    \end{equation}
    as well, such that we identify a simulator, $\mc{S}^{\mbf{B}}(\cdot)=(V^{\dagger}\cdot V)^{\mbf{B}}$, satisfying $\epsilon$ privacy against $\mbf{B}$. An identical argument follows when $\mbf{A}$ is corrupt.
    \qed
\end{proof}
\noindent
While the above result is general to $\epsilon\geq0$, when $\epsilon=0$, $\rho^{\mbf{AB}}_{\src}$ and $\rho^{\mbf{AB}}_{\tgt}$ are said to be \textit{local unitary} (LU) \textit{equivalent}. In conjunction with Thm.~\ref{thm:sequential}, for any source state satisfying an $(\epsilon,\delta_{\mbf{A}},\delta_{\mbf{B}})$-QSNIR from some target, there must also exist a QSNIR, with the same security parameters, from the target to any LU equivalent source state.

\section{\label{app:QSNIR-SDP-dual}Deriving the Dual SDP for QSNIR privacy}
To reiterate, the optimal QSNIR privacy error for the extraction of a correlation from its embedding is given by the SDP:
\begin{equation}
    \begin{aligned}
        \textup{min} \quad & \frac{1}{2}\sum_{x}\tr[Q_{x}^{+}+Q_{x}^{-}], \\
        \textup{s.t.} \quad & Q_{x}^{+}-Q_{x}^{-}=p(x)\op{\psi_{x}}{\psi_{x}}-\sum_{y}p(x,y)J_{y}, & Q_{x}^{+},Q_{x}^{-}\succeq0, & \quad \forall x\in\mc{X}, \\
        & \tr[J_{y}]=1,
        & J_{y}\succeq0, & \quad \forall y\in\mc{Y}. \\
    \end{aligned}
\end{equation}
We start by defining dual variables $\{R_{x}\}_{x\in\mc{X}}$ and $\{\mu_{y}\}_{y\in\mc{Y}}$, such that $R_{x}=R_{x}^{\dagger}$ for all $x$, and $\mu_{y}\in\mbb{R}$ for all $y$. The Lagrangian is
\begin{multline}
    L  = \frac{1}{2}\sum_{x}\tr[Q_{x}^{+}+Q_{x}^{-}]  +\sum_{x}\tr\left[R_{x}\left(p(x)\op{\psi_{x}}{\psi_{x}}-\sum_{y}p(x,y)J_{y} -\left(Q_{x}^{+}-Q_{x}^{-}\right)\right)\right] \\ + \sum_{y}\mu_{y}(\tr[J_{y}]-1),
\end{multline}
which we rearrange as
\begin{multline}
    L = \sum_{x}p(x)\ip{\psi_{x}}{R_{x}|\psi_{x}} -\sum_{y} \mu_{y} +\sum_{x}\tr\left[\left(\frac{1}{2}\mbb{I}-R_{x}\right)Q_{x}^{+}+\left(\frac{1}{2}\mbb{I}+R_{x}\right)Q_{x}^{-}\right] \\
    +\sum_{y}\tr\left[\left(\mu_{y}\mbb{I}-\sum_{x}p(x,y)R_{x}\right)J_{y}\right].
\end{multline}
In order to force the dual variable to be positive semidefinite, we define a shifted operator $W_{x}\coloneqq R_{x}+\frac{1}{2}\mbb{I}$, and shifted scalar $\lambda_{y}\coloneqq \mu_{y}+\frac{1}{2}p(y)$. 

Consequently, we present the dual SDP as follows:
\begin{equation}
    \begin{aligned}
        \textup{max} \quad &  \sum_{x}p(x)\ip{\psi_{x}}{W_{x}|\psi_{x}}-\sum_{y}\lambda_{y},  \\
        \textup{s.t.} \quad & 0\preceq W_{x}\preceq \mbb{I}, &  & \quad \forall x\in\mc{X}, \\
        & \lambda_{y}\mbb{I}\succeq\sum_{x}p(x,y)W_{x}, & \lambda_{y}\in\mbb{R}, & \quad \forall y\in\mc{Y}.
    \end{aligned}
\end{equation}

\section{\label{app:lem-duality}Proof of Lem.~\ref{lem:QSNIRsdp-duality} (Strong Duality)}
We apply Slater's conditions to both sides of the QSNIR SDP pair. First for the primal, the optimum is $\delta^{(\textup{p})}\geq0$, as $Q_{x}^{+},Q_{x}^{-}\succeq0$ for all $x$, and is therefore finite. A strictly feasible point exists for the dual in setting $W_{x}=\mbb{I}/2$ for all $x$, and $\lambda_{y}=1$ for all $y$, for which the dual constraints are satisfied as
\begin{equation}
    \lambda_{y}\mbb{I}-\sum_{x}p(x,y)W_{x}=\left(1-\frac{p(y)}{2}\right)\mbb{I}\succeq \frac{1}{2}\mbb{I} \succeq 0.
\end{equation}

Likewise the dual optimum is finite and bounded from above by $\delta^{(\textup{d})}\leq1$, as 
 \begin{equation}
    \sum_{x}p(x)\ip{\psi_{x}}{W_{x}|\psi_{x}}\leq\sum_{x}p(x)=1,
\end{equation}
and
\begin{equation}
    \lambda_{y}\geq\lambda_{\max}\left(\sum_{x}p(x,y)W_{x}\right)\geq0,
\end{equation}
where $\lambda_{\max}(\cdot)$ is the maximum eigenvalue function.
We then determine a strictly feasible point in the primal problem. First we set $J_{y}=\mbb{I}/|\mc{Y}|$, the maximally mixed state, for all y, which is strictly positive. Next, we set
\begin{equation}
    Q_{x}^{-}=\mbb{I}, \quad Q_{x}^{+}=p(x)\left(\op{\psi_{x}}{\psi_{x}}-\frac{1}{|\mc{Y}|}\mbb{I}\right)+\mbb{I}, \quad \forall x.
\end{equation}
By construction this satisfies the corresponding primal constraint, so we only need to verify that $Q_{x}^{+}\succ 0$ for all $x$, where we confirm that
\begin{align}
     Q_{x}^{+} &\succeq  \left(1+\lambda_{\min}\left(p(x)\left(\op{\psi_{x}}{\psi_{x}}-\frac{1}{|\mc{Y}|}\mbb{I}\right)\right)\right)\mbb{I} \nonumber \\
     &\succ \left(1-\frac{p(x)}{|\mc{Y}|}\right)\mbb{I} \nonumber\\
     &\succ\left(1-\frac{1}{|\mc{Y}|}\right)\mbb{I},
\end{align}
where $\lambda_{\min}(\cdot)$ is the minimum eigenvalue function. Hence, $Q_{x}^{+}$ is strictly positive for all $x$, whenever $|\mc{Y}|\geq2$. 
\qed

\section{\label{app:comp-ent}Computing Entropies of Common/Dependent Variables for 2PC correlations}

\begin{table}[t]
    \centering
    \renewcommand{\arraystretch}{1.5}
    \begin{tabular}{|l||c||c|c|c|}
        \hline
        $(\X,\Y)$ & $I(\X\colon\Y)$ & $H(\X\wedge\Y)$ & $H(\Y\searrow\X|\X)$ & $H(\X\searrow\Y|\Y)$ \\
        \hline\hline
        $\msf{SK}$ & $1$ & $1$ &\multicolumn{2}{c|}{$0$} \\
        \hline
        $\msf{OK}$ & $1$ & $0$ & \multicolumn{2}{c|}{$1$} \\
        \hline\hline
        $\msf{BSC}_{q}$ & $1 - h(q)$ & $\begin{cases} 0, & q\in(0,1) \\ 1, & \textup{else}\end{cases}$ & \multicolumn{2}{c|}{$\begin{cases} 0, & q=1/2 \\ h(q), & \textup{else}\end{cases}$} \\
        \hline
        $\msf{BEC}_{q}$ & $1 - q$ & $\begin{cases} 1, & q=0 \\ 0, & \textup{else}\end{cases}$ & $h(q)$ & $\begin{cases} 0, & q=1 \\ q, & \textup{else}\end{cases}$ \\
        \hline
    \end{tabular}
    \setlength{\abovecaptionskip}{10pt}
    \caption{One-shot entropies for 2PC correlations. Cells for the conditional entropies are merged horizontally for correlations that are symmetric under a relabeling of $\X$ and $\Y$.}
    \label{tab:entropies}
\end{table}

In Tab.~\ref{tab:entropies}, we detail the zero-error entropies that satisfy monotones for privacy~\cite{Wolf-2005-NM}, for the correlations we study in Sec.~\ref{sec:2PC}. Specifically, we compute $I(\X\colon\Y)$ and $H(X\wedge\Y)$, whose difference is a monotone, and $H(\Y\searrow\X|\X)$ and $H(\X\searrow\Y|\Y)$, which are themselves monotones. When these quantities are $0$, the corresponding correlation admits a $0$-QSNIR to its embeddings (see Thm.~\ref{thm:perf-priv}). It will be important here to recall $h(q)\coloneqq -q\log q -(1-q)\log(1-q)$, as the binary entropy function. We also let $\delta_{x,y}$ denote the Kronecker delta function in this section, which is $1$ if $x=y$ and $0$ otherwise.

For symmetric key, $\msf{SK}$, we have that $\X=\Y=\X\wedge\Y$, such that perfect privacy is automatic. In contrast, for oblivious key, $\msf{OK}$, given by the pmf in Eq.~\ref{eq:ok-dist},
\begin{equation}
    I((\msf{W}_{0},\msf{W}_{1})\colon(\msf{V},\msf{S}))=H(\msf{W}_{0},\msf{W}_{1})+H(\msf{V},\msf{S})-H(\msf{OK}) = 2 + 2 - 3 = 1,
\end{equation}
and $H(\X\wedge\Y)=0$, as the connecting graph $G_{\msf{OK}}$ is fully connected (see Fig.~\ref{fig:OK-TRIP}). For the conditional zero-error entropies, we note that $(\msf{W}_{0},\msf{W}_{1})\searrow(\msf{V},\msf{S})$ is the distribution over the set of conditional random variables $\{(\msf{V},\msf{S})|(\msf{W}_{0},\msf{W}_{1})=(w_{0},w_{1})\}_{w_{0},w_{1}}$, which is uniform on the supported $(v,s)$ for each element. The function that maps $(w_0,w_1)$ to the set of supported $(v,s)$ is injective here, so $H((\msf{W}_{0},\msf{W}_{1})|(\msf{W}_{0},\msf{W}_{1})\searrow(\msf{V},\msf{S}))=0$. Therefore we can use the identity (Cor. 4 in~\cite{Wolf-2004-Zi}) that
\begin{equation}
    H(\X|\Y)=H(\X|\X\searrow\Y)+H(\X\searrow\Y|\Y),
\end{equation}
such that
\begin{equation}
    H((\msf{W}_{0},\msf{W}_{1})\searrow(\msf{V},\msf{S})|(\msf{V},\msf{S}))=H((\msf{W}_{0},\msf{W}_{1})|(\msf{V},\msf{S})) =1.
\end{equation}
The reverse entropy follows by symmetry.

For the family of correlations,  $\msf{BSC}_{q}$, we first verify that
\begin{equation}
    I(\X\colon\Y)=H(\X)+H(\Y)-H(\X,\Y)= 1+ 1- (1+h(q))=1-h(q).
\end{equation}
For $H(\X\wedge\Y)$, the connecting graph for $\msf{BSC}_{q}$ is fully connected for $q\in(0,1)$, such that $H(\X\wedge\Y)=0$ in this range. At end points the connecting graph is the same as for $\msf{SK}$, and $H(\X\wedge\Y)=1$ when $q\in\{0,1\}$. The conditional distribution, $\Y|\X=x$, is given by the pmf $p(y|x)\in\{1-q, q\}$ for either $x$, which leads exactly to $H(\Y|\X)=h(q)(=H(\X|\Y))$. However, the mapping from $y$ to $\X|\Y=y$ is injective when $q\neq1/2$, as $\msf{BSC}_{1/2}$ is a uniform product distribution. Therefore, 
\begin{equation}
    H(\X\searrow\Y|\Y)=H(\X|\Y)-H(\X|\X\searrow\Y)= h(q)-\delta_{q,1/2}.
\end{equation}
Again, the reverse entropy follows by symmetry.

For the family of correlations,  $\msf{BEC}_{q}$,
\begin{align}
     I(\X\colon\Y)&=H(\X)+H(\Y)-H(\X,\Y)\nonumber\\
     &= 1+ \left(-q\log q-(1-q)\log\left(\frac{1-q}{2}\right) \right) - (1+h(q)) \nonumber\\
     &=1-q,
\end{align}
where we made use of the fact that $q\log(q/2)-q\log(q)=-q$. As $\msf{BEC}_{0}=\msf{SK}$, $H(\X\wedge\Y)=1$ when $q=0$. For all $q\in(0,1]$ the connecting graph has only a single connected component (with additional disjoint vertices), such that $H(\X\wedge\Y)=0$ in this range. For the zero-error conditional entropies, we now have that $H(\X\searrow\Y|\Y)\neq H(\Y\searrow \X|\X)$. When $\mbf{A}$ is corrupt, we compute
\begin{equation}\label{eq:h-cond-BEC-A}
    H(\Y\searrow \X|\X) =H(\Y|\X)-H(\Y|\Y\searrow\X) =1+h(q)- 1 -0=h(q).
\end{equation}
Note that $\X|\Y=y$, when $y\in\{0,1\}$, is a point distribution, while $\X|\Y=\perp$ is always a uniform distribution. Therefore, the mapping from $y$ to $\X|\Y=y$ is injective for all $q$, leaving $H(\Y|\Y\searrow\X)=0$. Eq.~\ref{eq:h-cond-BEC-A} has some intuition behind it. When $q=1/2$ and $\mbf{A}$ receives $x$, then $\mbf{B}$ must have $y\in\{x,{\perp}\}$ with uniform probability. Instead, when $\mbf{B}$ is corrupt,
\begin{align}
    H(\X\searrow \Y|\Y) &=H(\X|\Y)-H(\X|\X\searrow\Y) \nonumber\\
    &=1+h(q)- \left(-q\log q-(1-q)\log\left(\frac{1-q}{2}\right) \right) -\delta_{q,1} \nonumber\\
    &=q  -\delta_{q,1}.
\end{align}
The key observation here is that $H(\X|\X\searrow\Y)=\delta_{q,1}$. Intuitively, when $q<1$ and we are given $x\samp\X$, we always have $y\in\{x,{\perp}\}$ with weights, $\{1-q,q\}$, such that there is no asymptotic uncertainty in $\X$ given $\X\searrow\Y$ and $H(\X|\X\searrow\Y)=0$. Conversely, when $q=1$, $y=\perp$ necessarily, which is independent of $x$, and therefore $H(\X|\X\searrow\Y)=1$, as $\X$ is a bit of uniform randomness.

\section{\label{app:thm-priv-OK}Proof of Thm.~\ref{thm:priv-OK} (Optimal Privacy Errors for $\msf{OK}$)}
We work with the representation of $\msf{OK}$ as  $\msf{Trip}\coloneqq ((\X_{0},\X_{1}),(\Y_{0},\Y_{1}))$. This correlation is fixed by the pmf
\begin{equation}
    p_{\msf{Trip}}(x_{0},x_{1},y_{0},y_{1})\coloneqq
    \begin{cases}
        1/8, & x_{0}\oplus y_{0}=x_{1}\wedge y_{1}, \\
        0, & \textup{else},
    \end{cases}
\end{equation}
which is symmetric under interchanging $\X$ and $\Y$. This symmetry is made even more evident when viewing the corresponding connecting graph for each correlation, depicted in Fig.~\ref{fig:OK-TRIP}. 
As this correlation is equivalent to $\msf{OK}$ via a $0$-SNIR,\footnote{Formally, the $0$-SNIR from $\msf{Trip}$ (also known as a Popescu-Rohrlich (PR)~\cite{Popescu-1994-Qn,Barrett-2005-Nc} correlation without inputs) to $\msf{OK}$ appears in \cite{Wolf-2005-Ot}, where it is a simple relabeling: $\mbf{A}$ sets $x_{0}\coloneqq w_{0}$ and $x_{1}\coloneqq w_{0}\oplus w_{1}$, and $\mbf{B}$ sets $y_{0}\coloneqq v$ and $y_{1}\coloneqq s$. As this reduction is perfectly private and involves only addition it is invertible as well.
} we again have that $I((\X_{0},\X_{1})\colon(\Y_{0},\Y_{1}))-H((\X_{0},\X_{1})\wedge(\Y_{0},\Y_{1}))=1$. 

To initiate the proof, it is useful to note a few details about the initial embedding
\begin{equation}
    \ket{\mrm{\Psi}_{\msf{Trip}}}^{\mbf{AB}}\coloneqq\sum_{x_0,x_1,y_0,y_1}\sqrt{p(x_{0},x_{1},y_{0},y_{1})}\ket{x_{0},x_{1}}^{\mbf{A}}\otimes\ket{y_{0},y_{1}}^{\mbf{B}}.
\end{equation}
The ensemble of reduced states itself can be represented explicitly in $\mbb{C}^{2}\otimes \mbb{C}^{2}$, or equivalently in $\mbb{C}^{4}$, as
\begin{subequations}\label{eq:ok-ens}
    \begin{align}
        \ket{\psi_{00}} \coloneqq \frac{1}{\sqrt{2}}\left(\ket{00}+\ket{01}\right)\left(\eqqcolon\ket{0{+}}\right) &\quad\Longleftrightarrow\quad \ket{\psi_{1}} \coloneqq \frac{1}{\sqrt{2}}\left(\ket{0}+\ket{1}\right), \\
        \ket{\psi_{01}} \coloneqq \frac{1}{\sqrt{2}}\left(\ket{00}+\ket{11}\right)\left(\eqqcolon \ket{\mrm{\Phi}^{+}}\right) &\quad\Longleftrightarrow\quad \ket{\psi_{0}} \coloneqq \frac{1}{\sqrt{2}}\left(\ket{0}+\ket{3}\right), \\
        \ket{\psi_{10}} \coloneqq \frac{1}{\sqrt{2}}\left(\ket{10}+\ket{11}\right)\left(\eqqcolon\ket{1{+}}\right)  &\quad\Longleftrightarrow\quad \ket{\psi_{3}} \coloneqq \frac{1}{\sqrt{2}}\left(\ket{2}+\ket{3}\right), \\
        \ket{\psi_{11}} \coloneqq \frac{1}{\sqrt{2}}\left(\ket{01}+\ket{10}\right)\left(\eqqcolon\ket{\mrm{\Psi}^{+}}\right) &\quad\Longleftrightarrow\quad \ket{\psi_{2}} \coloneqq \frac{1}{\sqrt{2}}\left(\ket{1}+\ket{2}\right),
    \end{align}
\end{subequations}
noting their familiarity as either products of qubit basis states or as Bell pairs in $\mbb{C}^{2}\otimes\mbb{C}^{2}$. Note that in the $\mbb{C}^{4}$ representation we have permuted the labels for each state, such that $\ket{\psi_{x}}=(\ket{x}+\ket{(x-1)\bmod4})/\sqrt{2}$.

The Gram matrix for this ensemble is 
\begin{equation}
    G\coloneqq \frac{1}{2}
    \begin{pmatrix}
    2 & 1 & 0 & 1 \\
    1 & 2 & 1 & 0 \\
    0 & 1 & 2 & 1 \\
    1 & 0 & 1 & 2
    \end{pmatrix},
\end{equation}
which has eigenvalues $\lambda_{G}=(2,1,0,1)$. Let $(c_{00},c_{01},c_{10},c_{11})\coloneqq(1,1/2,0,1/2)$, or $(c_{0},c_{1},c_{2},c_{3})\coloneqq(1,1/2,0,1/2)$.  Clearly $G$ is circulant, such that its rows are generated by a cyclic shifting of the elements in $(1,1/2,0,1/2)$ one to the right.
The elements of the Gram matrix are therefore
\begin{equation}
    G^{((x_{0},x_{1}),(x_{0}',x_{1}'))}=c_{x_{0}\oplus x_{0}',x_{1}\oplus x_{1}'}, \quad\Longleftrightarrow\quad G^{(x,x')} =c_{(x'-x)\bmod 4},
\end{equation}
in either representation. Hence the symmetry in $G$ is generated by an Abelian group over $\mbb{Z}_{2}\times\mbb{Z}_{2}$, or alternatively the cyclic group over $\mbb{Z}_{4}$, which demonstrates why the Gram matrix is circulant. These symmetries suffice to show that this ensemble is geometrically uniform~\cite{Eldar-2001-qd}. 
Let $X_{d}\coloneqq\sum_{x=0}^{d-1}\op{(x+1)\bmod d}{x}$ be the cyclic shift operator of dimension, $d$. The unitaries which generate each group are
\begin{equation}
    \left\{U_{x_{0},x_{1}}\coloneqq \textup{C}X_{1\to0}^{x_{1}}(X^{x_{0}}\otimes\mbb{I}) \right\}_{x_{0},x_{1}} \quad\Longleftrightarrow\quad\left\{U^{x}\colon U\coloneqq X_{4} \right\}_{x}
\end{equation}
where we recognize $X\coloneqq X_{2}$ as the Pauli-$X$ gate, and 
$\textup{C}X_{1\to0}$ as the controlled-$X$ gate (i.e. CNOT). It is straightforward to verify that
\begin{equation}
    \ket{\psi_{x_{0}x_{1}}}=U_{x_{0},x_{1}}\ket{\psi_{00}} \quad\Longleftrightarrow\quad \ket{\psi_{x}}=U^{x}\ket{\psi_{0}},
\end{equation}
demonstrating the geometric uniformity of the ensemble in either representation.

To determine $\delta_{\textup{MED}}^{\msf{Trip}}$, the above implies that the PGM is optimal for MED privacy, through Lem.~\ref{lem:PGM}. In the ideal setting,
\begin{equation}
    P_{\textup{guess}}((\X_{0},\X_{1})|(\Y_{0},\Y_{1}))_{\textup{ideal}}=\frac{1}{2}.
\end{equation}
In order to compute the optimal minimum-error guessing probability we utilized the eigenvalues of the Gram matrix, such that
\begin{equation}
    P_{\textup{guess}}((\X_{0},\X_{1})|\mbf{B})_{\textup{real}}=\frac{1}{4^{2}}\left(\sum_{x_{0},x_{1}}\sqrt{\lambda_{G}^{(x_{0},x_{1})}}\right)^{2}=\frac{1}{8}\left(3+2\sqrt{2}\right).
\end{equation}
Putting this together, we have
\begin{equation}
    \delta_{\textup{MED}}^{\msf{Trip}}= P_{\textup{guess}}((\X_{0},\X_{1})|\mbf{B})_{\textup{real}}-P_{\textup{guess}}((\X_{0},\X_{1})|(\Y_{0},\Y_{1}))_{\textup{ideal}}=\frac{1}{8}\left(2\sqrt{2}-1\right)
\end{equation}
as the optimal MED privacy error for $\msf{OK}$, when extracting this correlation from its embedding.

For computing $\delta_{\textup{QSNIR}}^{\msf{Trip}}$, we work in the $\mbb{C}^{4}$ representation. Note that with these labels the correlation is given by a parent pmf
\begin{equation}
    p(x,y)=
    \begin{cases}
        1/8,  & (x-y)\bmod4\in\{0,1\} \\
        0, & \textup{else}
    \end{cases}
\end{equation}
accounting for the shift we applied in Eq.~\ref{eq:ok-ens}. For any dual feasible $\{W_{x}\}_{x}$, we have that
\begin{equation}
    \delta_{\textup{QSNIR}}^{\msf{Trip}}\geq \frac{1}{4}\sum_{x=0}^{3}\ip{\psi_{x}}{W_{x}|\psi_{x}}-\frac{1}{8}\sum_{y=0}^{3}\lambda_{\max}\left(W_{y}+W_{(y+1)\bmod4}\right).
\end{equation}
For a nonnegative $t\geq0$, let
\begin{equation}
    W_{x}\coloneqq \op{w_{x}}{w_{x}},\quad\ket{w_{x}}\coloneqq\frac{1}{\sqrt{1+t^{2}}}\left(\ket{\psi_{x}}-t\ket{\psi_{(x-2)\bmod4}}\right),
\end{equation}
define a family of dual feasible operators, as each operator is the projector onto a valid pure state. By construction, the set of pure states, $\{\ket{w_{x}}\}_{x}$, has the same cyclic symmetry, where $U^{x}\ket{w_{0}}=\ket{w_{x}}$.
Note that $\ip{w_{x}}{\psi_{x}}=1/\sqrt{1+t^{2}}$, such that
\begin{equation}
    \frac{1}{4}\sum_{x=0}^{3}\ip{\psi_{x}}{W_{x}|\psi_{x}}=\frac{1}{1+t^{2}}.
\end{equation}
Furthermore, each $W_{y}+W_{(y+1)\bmod4}$ has identical eigenvalues of
\begin{equation}
    \lambda_{W}=\left(\frac{3}{2}-\frac{t}{1+t^{2}},\frac{1}{2}+\frac{t}{1+t^{2}},0,0\right),
\end{equation}
listed in nonincreasing order for all $t\geq0$. 
Therefore,
\begin{equation}
    \frac{1}{8}\sum_{y=0}^{3}\lambda_{\max}\left(W_{y}+W_{(y+1)\bmod4}\right)=\frac{3}{4}-\frac{t}{2(1+t^{2})},
\end{equation}
and the total bound becomes
\begin{equation}
     \delta_{\textup{QSNIR}}^{\msf{Trip}}\geq\frac{1}{1+t^{2}}-\left(\frac{3}{4}-\frac{t}{2(1+t^{2})}\right)=\frac{2+t}{2(1+t^{2})}-\frac{3}{4}\eqqcolon f(t).
\end{equation}
We can maximize this bound by taking
\begin{equation}
    f'(t)=\frac{-t^{2}-4t+1}{2(1+t^{2})^{2}},
\end{equation}
and noting that the polynomial, $t^{2}+4t-1=0$, has roots, $-2\pm\sqrt{5}$. As $t\geq 0$, we take the positive root, verify that $f''(-2+\sqrt{5})=-1-\frac{9}{4\sqrt{5}}<0$, and the bound becomes
\begin{equation}
    \delta_{\textup{QSNIR}}^{\msf{Trip}}\geq f(-2+\sqrt{5})=\frac{1}{4}\left(\sqrt{5}-1\right).
\end{equation}

We now construct a set of states prepared by the simulator, $\{J_{y}\}_{y=0}^{3}$, which achieves this bound, proving that it is in fact the optimum. Using the symmetry in the ensemble let us rewrite the primal objective as
\begin{equation}
    \delta_{\mc{S}}=\frac{1}{4}\sum_{x=0}^{3}\td\left(\psi_{x},\frac{1}{2}\left(J_{x}+J_{(x-1)\bmod4}\right)\right)=\td\left(\psi_{0},\frac{1}{2}\left(J_{0}+J_{3}\right)\right),
\end{equation}
where we have recognized that each term in the sum minimizes identically. Let $J_{y}\coloneqq \op{s_{y}}{s_{y}}$. From Lem.~\ref{lem:QSNIRsdp-duality}, complementary slackness suggests that these states should satisfy
\begin{equation}
   (W_{y}+W_{(y+1)\bmod 4}) \ket{s_{y}}=\left(\frac{3}{2}-\frac{t}{1+t^{2}}\right)\ket{s_{y}}, \quad \forall y.
\end{equation}
We define,
\begin{align}
     \ket{s_{y}}&\coloneqq \frac{1}{\sqrt{2\lambda_{\textup{max}}}}\left(\ket{w_{y}}+\ket{w_{(y+1)\bmod4}}\right) \nonumber\\
     &=\frac{1}{\sqrt{2\lambda_{\textup{max}}(1+t^{2})}}(\ket{\psi_{y}}-t\ket{\psi_{(y-1)\bmod4}}-t\ket{\psi_{(y-2)\bmod4}}+\ket{\psi_{(y-3)\bmod 4}}),
\end{align}
as a set of normalized pure states, where
\begin{equation}
    \lambda_{\max}\coloneqq \frac{3}{2}-\frac{t}{1+t^{2}}.
\end{equation}
We next rewrite
\begin{align}
     J_{0}+J_{3} &= \op{s_{0}}{s_{0}}+\op{s_{3}}{s_{3}} \nonumber\\
     &=\frac{1}{2}\left((\ket{s_{0}}+\ket{s_{3}})(\bra{s_{0}}+\bra{s_{3}})+(\ket{s_{0}}-\ket{s_{3}})(\bra{s_{0}}-\bra{s_{3}})\right) \nonumber\\
     &=(1+z)\op{s_{0,+}}{s_{0,+}}+(1-z)\op{s_{0,-}}{s_{0,-}}
\end{align}
having further defined the orthonormal pair
\begin{equation}
    \ket{s_{0,+}}\coloneqq\frac{1}{\sqrt{2(1+z)}}\left(\ket{s_{0}}+\ket{s_{3}}\right), \quad \ket{s_{0,-}}\coloneqq\frac{1}{\sqrt{2(1-z)}}\left(\ket{s_{0}}-\ket{s_{3}}\right),
\end{equation}
with $z\coloneqq \ip{s_{0}}{s_{3}}$.\footnote{A corresponding pair of orthogonal states similarly exists for each $J_{x}+J_{(x-1)\bmod 4}$ in the objective.} 
This is useful as, explicitly,
\begin{equation}
    \ket{s_{0,-}}=\frac{1}{2}\left((\ket{0}-\ket{3})+(\ket{1}-\ket{2})\right), 
\end{equation}
such that $\ip{\psi_{0}}{s_{0,-}}=0$. This also follows from the symmetry that $\ip{\psi_{0}}{s_{0}}=\ip{\psi_{0}}{s_{3}}\eqqcolon a$.
Therefore,
\begin{align}
    \delta_{\mc{S}} &= \frac{1}{2}\left\|\op{\psi_{0}}{\psi_{0}}-\frac{1}{2}\left((1+z)\op{s_{0,+}}{s_{0,+}}+(1-z)\op{s_{0,-}}{s_{0,-}}\right)\right\|_{1}  \nonumber\\
    &=\frac{1-z}{4}+\frac{1}{2}\left\|\op{\psi_{0}}{\psi_{0}}-\frac{1}{2}\left((1+z)\op{s_{0,+}}{s_{0,+}}\right)\right\|_{1} \nonumber\\
    &=\frac{1-z}{4}+\frac{1}{2}\sqrt{\left(1+\frac{1+z}{2}\right)^{2}-2(1+z)|\ip{\psi_{0}}{s_{0,+}}|^{2}} \nonumber\\
    &=\frac{1-z}{4}+\frac{1}{2}\sqrt{\left(\frac{3+z}{2}\right)^{2}-4a^2}.
\end{align}
Finally, plugging in $t=-2+\sqrt{5}$,
\begin{equation}
    z=\frac{2(t-1)^{2}}{3t^{2}-2^{t}+3}=\frac{1}{11}(7-\sqrt{5}),\quad a=\frac{3-t}{2\sqrt{2\lambda_{\max}(1+t^{2})}}=\frac{1}{2}\sqrt{\frac{5}{11}(4+\sqrt{5})},
\end{equation}
and after some further computation
\begin{equation}
    \frac{1-z}{4}+\frac{1}{2}\sqrt{\left(\frac{3+z}{2}\right)^{2}-4a^2}=\frac{1}{44}(4+\sqrt{5})+\frac{5}{44}(2\sqrt{5}-3)=\frac{1}{4}(\sqrt{5}-1).
\end{equation}
The set $\{\op{s_{y}}{s_{y}}\}_{y=0}^{3}$ is therefore primal feasible and achieves the lower bound that we determined from the dual problem. In total this proves that $\delta_{\textup{QSNIR}}^{\msf{Trip}}=(\sqrt{5}-1)/4$.
\qed

\section{\label{app:thm-BSC-priv}Proof of Thm.~\ref{thm:BSC-priv} (Optimal Privacy Errors for $\msf{BSC}$)}
Before proving anything, note that $\msf{BSC}_{q}$ is symmetric under interchange of the labels for $\X$ and $\Y$. Therefore, any nonzero optimal privacy errors are equivalent for $\mbf{A}$ and $\mbf{B}$. Here, when relevant, we reference the case of a corrupt $\mbf{B}$.

Without loss of generality let $q\leq1/2$. 
When one of the parties is honest, the other's reduced state of the embedding is described by an ensemble consisting of
\begin{equation}
    \ket{\psi_{0}}=\sqrt{1-q}\ket{0}+\sqrt{q}\ket{1}, \quad\quad \ket{\psi_{1}}=\sqrt{q}\ket{0}+\sqrt{1-q}\ket{1},
\end{equation}
each occurring with marginal probability $p(0)=p(1)=1/2$. On the Bloch sphere we can represent these states by Bloch vectors $r_{0}=(s,0,t)$ and $r_{1}=(s,0,-t)$, where $s=2\sqrt{q(1-q)}$ and $t=1-2q$. Note that purity requires that $s^{2}+t^{2}=1$. Note also that this ensemble is geometrically uniform, and generated by $X\coloneqq \op{0}{1}+\op{1}{0}$, the shift operator in $\mbb{C}^{2}$ (i.e. the Pauli-$X$ gate). That is, $X\ket{\psi_{0}}=\ket{\psi_{1}}$, explicitly.

First, we demonstrate that $\delta_{\textup{MED}}^{\msf{BSC}}(q)=0$. Given ideal samples of $\msf{BSC}_{q}$, we have that
\begin{equation}
    P_{\textup{guess}}(\X|\Y)_{\textup{ideal}}=\max\{q,1-q\}.
\end{equation}
Intuitively this is because the classical minimum-error guessing strategy for the corrupt party is to guess $\textup{argmax}_{x}p(x,y)$. When extracting this correlation from its entangled embedding, we can compute the optimal MED guessing probability from Lem.~\ref{lem:helevo} as 
\begin{align}
    P_{\textup{guess}}(\X|\mbf{B})_{\textup{real}}
    &=\frac{1}{2}\left(1+\td\left(\psi_{0},\psi_{1}\right)\right) \nonumber\\
    &=\frac{1}{2}\left(1+\frac{1}{2}\|r_{0}-r_{1}\|_{2}\right) \nonumber\\
    &= \frac{1}{2}\left(1+|t|\right) \nonumber\\
    &=\max\{q,1-q\},
\end{align}
demonstrating its equality with the classical analog, such that $\delta_{\textup{MED}}^{\msf{BSC}}(q)=0$.

We now compute $\delta_{\textup{QSNIR}}^{\msf{BSC}}(q)$, which is a nontrivial function. We define a simulator that outputs one of the states, $\{J_{0},J_{1}\}$, with Bloch vectors $j_{0}=(u,0,v)$ and $j_{1}=(u,0,-v)$, respectively, where $u^{2}+v^{2}\leq1$. We fix the solution to the $XZ$-plane of the Bloch sphere, without loss of generality, and the symmetry of these vectors enforces that $XJ_{0}X=J_{1}$, just as it is the case with the states in the ensemble.
On average, the adversary's state in the ideal world is one of
\begin{equation}
    \sigma_{0}=(1-q)J_{0} +qJ_{1}, \quad\quad \sigma_{1}=qJ_{0}+(1-q)J_{1},
\end{equation}
with Bloch vectors
\begin{equation}
     \hat{r}_{0}=(1-q)j_{0}+qj_{1}=(u,0,tv), \quad\quad  \hat{r}_{1}=qj_{0}+(1-q)j_{1}=(u,0,-tv),
\end{equation}
respectively.
Therefore, we can rewrite the primal objective function as
\begin{align}
    \delta_{\mc{S}} &= \frac{1}{2}\bigg(\td\left(\psi_{0},\sigma_{0}\right)+\td\left(\psi_{1},\sigma_{1}\right)\bigg) \nonumber\\
    &=\frac{1}{4}\left(\left\| r_{0}-\hat{r}_{0}\right\|_{2}+\left\| r_{1}-\hat{r}_{1}\right\|_{2}\right) \nonumber\\
    &= \frac{1}{2}\left\| r_{0}-\hat{r}_{0}\right\|_{2} \nonumber\\
    &=\frac{1}{2}\sqrt{(s-u)^{2}+t^{2}(1-v)^{2}}.
\end{align}
Note that independent of $u$ and $v$, $\delta_{\mc{S}}=0$, when $q\in\{0,1/2,1\}$. In total, we have parametrized the problem into the following convex optimization:
\begin{equation}
    \delta_{\textup{QSNIR}}^{\msf{BSC}}(q)=\min_{\mc{S}}\delta_{\mc{S}}=\frac{1}{2}\min_{u^{2}+v^{2}\leq1}\bigg[\|(s-u,0,t(1-v))\|_{2}\bigg].
\end{equation}
To constrain the problem further we recognize that, for a fixed value $u$, the objective is minimized as $v^{2}\to1-u^{2}$. Therefore, the optimum always lies on the boundary of the unit disk, i.e. when the simulator prepares pure states. We can further set $u=\sin\theta$ and $v=\cos\theta$, with $\theta\in[0,\pi/2]$, recovering
\begin{equation}
     \delta_{\textup{QSNIR}}^{\msf{BSC}}(q)=\frac{1}{2}\min_{0\leq\theta\leq\pi/2}\bigg[\sqrt{(s-\sin\theta)^{2}+t^{2}(1-\cos\theta)^{2}}\bigg].
\end{equation}
While no longer convex, the minimization now describes the problem of finding the closest point from $(s,t)$ that lies on the ellipse formed by $\sin^{2}\theta +t^{-2}\cos^{2}\theta=1$. This problem has no closed-form solution that holds for every $q\in[0,1]$, but can readily be optimized numerically, in order to produce the corresponding curve in Fig.~\ref{fig:BSC-BEC}.
\qed

\section{\label{app:them-BEC-priv}Proof of Thm.~\ref{thm:BEC-priv} (Optimal Privacy Errors for $\msf{BEC}$)}
$\msf{BEC}_{q}$ is asymmetric such that we expect different privacy errors when either $\mbf{A}$ or $\mbf{B}$ is corrupt. The entropies in Tab.~\ref{tab:entropies} reveal that perfect privacy holds for both parties, only at the endpoints, when $q\in\{0,1\}$. Below we consider each corrupt case as an individual proof.

\subsection{\label{app:them-BEC-priv-A}Corrupt $\mbf{A}$}
When $\mbf{A}$ is corrupt, their reduced state is one of
\begin{equation}
    \ket{\psi_{0}}=\ket{0}, 
        \quad\quad \ket{\psi_{1}}=\ket{1}, \quad\quad
        \ket{\psi_{\perp}}=\ket{{+}}\coloneqq \frac{1}{\sqrt{2}}(\ket{0}+\ket{1}),
\end{equation}
with priors $p(0)=p(1)=(1-q)/2$ and $p(\perp)=q/2$.
It will also be useful to specify $\ket{-}\coloneq(\ket{0}-\ket{1})/\sqrt{2}$. Note that the average state of the ensemble is
\begin{equation}
    \rho=\frac{1-q}{2}\left(\op{0}{0}+\op{1}{1}\right)+q\op{+}{+},
\end{equation}
which is invariant under $X\coloneqq \op{0}{1}+\op{1}{0}$, such that $X\rho X=\rho$.

As usual, we start by computing $\delta_{\textup{MED},\mbf{A}}^{\msf{BEC}}(q)$.
In the ideal world $\mbf{A}$ receives $x\samp\X$. The optimal ideal minimum-error discrimination strategy yields a success probability
\begin{equation}
    P_{\textup{guess}}(\Y|\X)_{\textup{ideal}}=\max\{1-q, q\},
\end{equation}
which follows from the intuition that if $q\leq 1/2$ $\mbf{A}$ should guess that $y=x$, and $q> 1/2$, they should guess $y=\perp$. In the real world, we define a POVM $\{M_{0}, M_{1}, M_{\perp}\}$, such that
\begin{equation}\label{eq:psucc-bec-a}
    P_{\textup{succ}}(\Y|\mbf{A})_{\textup{real}}=\frac{1-q}{2}\left(\ip{0}{M_{0}|0}+\ip{1}{M_{1}|1}\right)+q\ip{{+}}{M_{\perp}|{+}},
\end{equation}
which in general may not be optimal. 

Consider a candidate POVM
\begin{equation}
    M_{0}\coloneqq\op{0}{0}, \quad\quad M_{1}\coloneqq\op{1}{1},\quad\quad M_{\perp}\coloneqq0,
\end{equation}
where the outcome $\perp$ never occurs. Of course this is just a computational basis measurement, and plugging this into the objective yields
\begin{equation}\label{eq:psucc-bec-a-3}
     P_{\textup{succ}}(\Y|\mbf{A})_{\textup{real}}=1-q.
\end{equation}
We check the Kennedy-Yuen-Lax equations from Lem.~\ref{lem:kyl}, which are defined by constraints
\begin{equation}
    \frac{1-q}{2}(\mbb{I}-\op{0}{0})\succeq0, \quad\quad \frac{1-q}{2}(\mbb{I}-\op{1}{1})\succeq0, \quad\quad \frac{1-q}{2}\mbb{I}-q\op{+}{+}\succeq0.
\end{equation}
The first two constraints are trivially always satisfied, as both operators have eigenvalues $((1-q)/2,0)$, however, the operator in the third constraint has eigenvalues $((1-q)/2,(1-3q)/2)$. Therefore, this PVM is in fact optimal when $q\leq1/3$, and $P_{\textup{guess}}(\Y|\mbf{A})_{\textup{real}}$ is given by Eq.~\ref{eq:psucc-bec-a-3}. Note that in this range $\delta_{\textup{MED},\mbf{A}}^{\msf{BEC}}(q)=1-q-\max\{q,1-q\}=0$.

When $q>1/3$, we start with a family of POVMs
\begin{subequations}\label{eq:povm-bec-a-med}
    \begin{align}
        M_{0}(z)&\coloneqq\frac{1}{2}\left(z\ket{+}+\ket{-}\right)\left(z\bra{+}+\bra{-}\right), \\
        M_{1}(z)&\coloneqq\frac{1}{2}\left(z\ket{+}-\ket{-}\right)\left(z\bra{+}-\bra{-}\right), \\
        M_{\perp}(z)&\coloneqq(1-z^{2})\op{+}{+},
    \end{align}
\end{subequations}
where $z\in[0,1]$. Note this POVM is complete by construction, as
\begin{equation}
     M_{0}(z)+M_{1}(z)=z^{2}\op{+}{+}+\op{-}{-}.
\end{equation}
Eq.~\ref{eq:psucc-bec-a} then becomes a second-order polynomial in $z$,
\begin{align}
    P_{\textup{succ}}(\Y|\mbf{A})_{\textup{real}}&=\frac{1-q}{4}(z+1)^{2}+q(1-z^{2}) \nonumber\\
    &=-\frac{5q-1}{4}z^{2}+\frac{1-q}{2}z+\frac{1+3q}{4}, \nonumber\\
    &=-\frac{5q-1}{4}\left(z-\frac{1-q}{5q-1}\right)^{2}+\frac{4q^{2}}{5q-1}
\end{align}
where in the last line we completed the square. In the regime where $1/3<q\leq1$, $5q-1>0$ such that maximum success probability is $4q^{2}/(5q-1)$, exactly when $z=(1-q)/(5q-1)\in[0,1]$, which is feasible.

To prove optimality of this POVM in this regime, we construct an upper bound $P_{\textup{guess}}(\Y|\mbf{A})_{\textup{real}}\leq4q^{2}/(5q-1)$ from the dual SDP for MED, in Eq.~\ref{eq:sdp-med-gen}. Consider the Hermitian operator
\begin{equation}
    \mrm{\Lambda}\coloneqq q\op{+}{+}+r\op{-}{-},
\end{equation}
where we choose $r\geq0$. The dual constraints are
\begin{equation}
    \mrm{\Lambda}-\frac{1-q}{2}\op{0}{0}\succeq0, \quad\mrm{\Lambda}-\frac{1-q}{2}\op{1}{1}\succeq0,\quad \mrm{\Lambda}-q\op{+}{+}\succeq0.
\end{equation}
The third constraint trivially always holds, as $\mrm{\Lambda}-q\op{+}{+}=r\op{-}{-}\succeq0$. For the first constraint, the diagonal components of the operator in the computational basis are $\{2q+r-1,q+r\}$, and its determinant is
\begin{equation}
    \det\left[\mrm{\Lambda}-\frac{1-q}{2}\op{0}{0}\right]=\frac{1}{4}\left(r(5q-1)-q(1-q)\right),
\end{equation}
Sylvester's criterion states that this operator is positive semi-definite if and only if all three of these quantities are nonnegative. Therefore, $\mrm{\Lambda}$ is dual feasible if
\begin{equation}
    q+r\geq 0, \quad r \geq1-2q, \quad r\geq \frac{q(1-q)}{5q-1},
\end{equation}
which identically is the set of conditions for the second constraint. The first inequality always holds, and $q(1-q)/(5q-1) \geq1-2q$, when $1/3<q\leq1$. Therefore, feasibility reduces to $r\geq q(1-q)/(5q-1)$. Choosing the minimum value we compute the dual optimum as
\begin{equation}
    \tr[\mrm{\Lambda}]=q+\frac{q(1-q)}{5q-1}=\frac{4q^{2}}{5q-1},
\end{equation}
demonstrating that the POVM in Eq.~\ref{eq:povm-bec-a-med} is optimal (given the right $z$) when $1/3<q\leq 1$, and that for the full range of $q$ we have
\begin{equation}
    \delta^{\msf{BEC}}_{\textup{MED},\mbf{A}}(q) \coloneqq \begin{cases}
        0, &  0 \leq q \leq 1/3,\\
    4q^2/(5q-1)-\max\{q,1-q\}, & 1/3 < q \leq 1.
    \end{cases}     
\end{equation}
as the optimal MED privacy error when $\mbf{A}$ is corrupt.
    
Proving the corresponding QSNIR privacy error for this side of $\msf{BEC}_{q}$ follows similarly to the proof for $\msf{BSC}_{q}$.
Let $\{J_{0},J_{1}\}$ be the states prepared by the simulator given $x\samp\X$, such that the average state of the adversary in the ideal world is
\begin{equation}
    \sigma_{0}=J_{0}, \quad \sigma_{1}=J_{1}, \quad \sigma_{\perp}=\frac{1}{2}\left(J_{0}+J_{1}\right).
\end{equation}
To respect the symmetry of the real world ensemble, we require that $XJ_{0}X=J_{1}$. Furthermore, the Bloch vectors of the states in the real ensemble are
\begin{equation}
    r_{0}=(0,0,1),\quad r_{1}=(0,0,-1),\quad r_{\perp}=(1,0,0),
\end{equation}
such that we choose $\{j_{0},j_{1}\}$, the Bloch vectors for $\{J_{0},J_{1}\}$, to lie in the $XZ$-plane, without loss of generality. Hence, we can parameterize $\{j_{0},j_{1}\}$ as
\begin{equation}
    j_{0}=(u,0,v),\quad j_{1}=(u,0,-v),
\end{equation}
where $u,v$ are real, such that $u^{2}+v^{2}\leq 1$.
The primal objective computing QSNIR privacy is now
\begin{align}
    \delta_{\mc{S}} &= \frac{1-q}{2}\left(\td\left(\psi_{0},J_{0}\right)+\td\left(\psi_{1},J_{1}\right)\right)+  q\td\left(\psi_{\perp},\frac{1}{2}(J_{0}+J_{1})\right) \nonumber\\
    &= \frac{1}{2}\left(\frac{1-q}{2}\left(\left\|r_{0}-j_{0}\right\|_{2}+\left\|r_{1}-j_{1}\right\|_{2}\right)+q\left\|r_{\perp}-\frac{1}{2}(j_{0}+j_{1})\right\|_{2}\right) \nonumber\\
    &=\frac{1}{2}\left((1-q)\sqrt{u^{2}+(1-v)^{2}}+ q\sqrt{(1-u)^{2}}\right),
\end{align} 
where in the last line we used that $\|r_{0}-j_{0}\|_{2}=\|r_{1}-j_{1}\|_{2}$. To minimize the first term in this objective, we choose $v^{2}=1-u^{2}$, i.e. a solution on the boundary of the unit disk, and set $u=\sin\theta$ and $v=\cos\theta$ for $\theta\in[0,\pi/2]$. The objective can be recast as the minimization problem
\begin{equation}
    \delta^{\msf{BEC}}_{\textup{QSNIR},\mbf{A}}(q) = \min_{0\leq\theta\leq\pi/2}\left[(1-q)\sin\frac{\theta}{2}+\frac{q}{2}(1-\sin\theta)\right].
\end{equation}

Unlike the case with $\msf{BSC}_{q}$, the scalar minimization above has a closed-form solution, albeit a cluttered one. Let
\begin{equation}
    f'_{q}(\theta)\coloneqq \frac{1}{2}\left((1-q)\cos\frac{\theta}{2}-q\cos\theta\right)
\end{equation}
denote the first derivative of the minimization argument above.
We first use that $\cos(\theta/2)\geq\cos\theta\geq0$ on the whole interval $\theta\in[0,\pi/2]$, such that
\begin{equation}
      f'_{q}(\theta)\geq \frac{1-2q}{2}\cos\theta\ge0,
\end{equation}
when $q\in[0,1/2]$. The minimum in this region is then $f_{q}(0)=q/2$. When instead $q\in(1/2,1]$, solving explicitly for when $f'_{q}(\theta_{0})=0$ yields
\begin{equation}
    \theta_{0}\coloneqq 4\tan^{-1}\sqrt{3q-\sqrt{1+q(9q-2)}}.
\end{equation}
This critical point corresponds to a minimum, as $f''_{q}(\theta_{0})\geq0$, when $q\in(1/2,1]$. In total
\begin{equation}
    \delta^{\msf{BEC}}_{\textup{QSNIR},\mbf{A}}(q) =
    \begin{cases}
        q/2, & 0\leq q \leq1/2,\\
        (1-q)\sin\frac{\theta_{0}}{2}+\frac{q}{2}(1-\sin\theta_{0}), & 1/2< q\leq 1.
    \end{cases}
\end{equation}
\qed

\subsection{\label{app:them-BEC-priv-B}Corrupt $\mbf{B}$}
When $\mbf{B}$ is corrupt, their reduced state is a uniform ensemble over 
\begin{equation}
    \ket{\psi_{0}}=\sqrt{1-q}\ket{0}+\sqrt{q}\ket{{\perp}}, \quad\quad \ket{\psi_{1}}=\sqrt{1-q}\ket{1}+\sqrt{q}\ket{{\perp}},
\end{equation}
which are each elements of $\mbb{C}^{3}$. Note that the overlap between these states is $q$. It will be useful to symmetrize these states as
\begin{equation}
    \ket{\psi_{0}} = \sqrt{\frac{1+q}{2}}\ket{s}+\sqrt{\frac{1-q}{2}}\ket{t}, \quad\quad \ket{\psi_{1}} = \sqrt{\frac{1+q}{2}}\ket{s}-\sqrt{\frac{1-q}{2}}\ket{t}.
\end{equation}
Here we have defined a pair of orthogonal states,
\begin{equation}
      \ket{s} \coloneqq \sqrt{\frac{1-q}{2(q+1)}}\left(\ket{0}+\ket{1}\right)+\sqrt{\frac{2q}{1+q}}\ket{{\perp}}, \quad\quad \ket{t} \coloneqq \frac{1}{\sqrt{2}}(\ket{0}-\ket{1}),
\end{equation}
where $\ket{s}$ represents the part of the ensemble that is symmetric under permuting $\X$ and the part of $\Y$ containing $\{0,1\}$, and $\ket{t}$ is the antisymmetric part.

As $\mbf{B}$ must discriminate between only two states, we can apply Lem.~\ref{lem:helevo} to compute the MED privacy error. 
Ideally, given a sample $y\samp\Y$,
\begin{equation}
    P_{\textup{guess}}(\X|\Y)_{\textup{ideal}}=1 - \frac{q}{2},
\end{equation}
where the ideal minimum-error guessing probability decreases linearly to the uniform distribution. We then apply Lem.~\ref{lem:helevo}, and compute
\begin{align}
    P_{\textup{guess}}(\X|\mbf{B})_{\textup{real}}
    &=\frac{1}{2}\left(1+\td\left(\psi_{0},\psi_{1}\right)\right) \nonumber\\
    &=\frac{1}{2}\left(1+\sqrt{1-|\ip{\psi_{0}}{\psi_{1}}|^{2}}\right) \nonumber\\
    &=\frac{1}{2}+\frac{1}{2}\sqrt{1-q^{2}},
\end{align}
such that in total
\begin{equation}
    \delta_{\textup{MED},\mbf{B}}^{\msf{BEC}}(q)=\frac{1}{2}\left(\sqrt{1-q^{2}}-(1-q)\right).
\end{equation}

While Lem.~\ref{lem:helevo} is sufficient for computing $\delta_{\textup{MED},\mbf{B}}^{\msf{BEC}}(q)$, it's instructive for the rest of the proof to determine the Helstrom measurement, a binary PVM, $\{M,\mbb{I}-M\}$, that achieves the optimum. Given the symmetric representation of $\mbf{B}$'s ensemble, we have that
\begin{equation}
    \op{\psi_{0}}{\psi_{0}} - \op{\psi_{1}}{\psi_{1}} = \sqrt{1-q^{2}}\left(\op{s}{t}+\op{t}{s}\right)\eqqcolon \mrm{\Delta}. 
\end{equation}
The eigenvectors of $\mrm{\Delta}$ are $\ket{\phi_{+}}\coloneqq(\ket{s}+\ket{t})/\sqrt{2}$ and  $\ket{\phi_{-}}\coloneqq(\ket{s}-\ket{t})/\sqrt{2}$, with eigenvalues $\pm\sqrt{1-q^{2}}$ respectively, which are the basis for the projectors onto the positive and negative eigenspaces of $\mrm{\Delta}$. Note that
\begin{equation}
    \ip{\phi_{\pm}}{\psi_{0}} = \frac{1}{2}\left(\sqrt{1+q}\pm\sqrt{1-q}\right)=\ip{\phi_{\mp}}{\psi_{1}}.
\end{equation}
Hence, it is further evident that the PVM $\{\op{\phi_{+}}{\phi_{+}},\op{\phi_{-}}{\phi_{-}}\}$ can discriminate $\{\ket{\psi_{0}},\ket{\psi_{1}}\}$ with nontrivial probability.

We now verify that this choice of PVM achieves the optimal guessing probability. In general,
\begin{equation}
    P_{\textup{guess}}(\X|\mbf{B})_{\textup{real}}=\frac{1}{2}\left(\tr\left[M \op{\psi_{0}}{\psi_{0}}\right]+\tr\left[(\mbb{I}-M) \op{\psi_{1}}{\psi_{1}}\right]\right) = \frac{1}{2} + \frac{1}{2}\tr\left[M\mrm{\Delta}\right].
\end{equation}
Plugging in $M=\op{\phi_{+}}{\phi_{+}}$, 
\begin{equation}
     P_{\textup{guess}}(\X|\mbf{B})_{\textup{real}}=\frac{1}{2}+\frac{1}{2}\ip{\phi_{+}}{\mrm{\Delta}|\phi_{+}}=\frac{1}{2}+\frac{1}{2}\sqrt{1-q^{2}}.
\end{equation}

In order to compute the optimal QSNIR error, we define $\{J_{0},J_{1},J_{\perp}\}$ as the set of states prepared by the simulator, and
\begin{equation}
    \sigma_{0} \coloneqq (1-q)J_{0}+qJ_{\perp}, \quad\quad \sigma_{1} \coloneqq (1-q)J_{1}+qJ_{\perp}
\end{equation}
are the states that we must minimize the trace distance over, when taken against the ensemble in the real world. That is,
\begin{equation}
    \delta_{\mc{S}}=\frac{1}{2}\left(\td\left(\psi_{0},\sigma_{0}\right)+\td\left(\psi_{1},\sigma_{1}\right)\right).
\end{equation}
Immediately, from Lem.~\ref{lem:min-err-LB}, $\delta_{\mc{S}}\geq\delta_{\textup{MED},\mbf{B}}^{\msf{BEC}}$, for all $\mc{S}$. Now let
\begin{equation}
    J_{0}\coloneqq \op{\phi_{+}}{\phi_{+}},\quad\quad J_{1}\coloneqq\op{\phi_{-}}{\phi_{-}}, \quad\quad J_{\perp}\coloneqq\op{s}{s},
\end{equation}
such that
\begin{equation}
    \sigma_{0} = (1-q)\op{\phi_{+}}{\phi_{+}}+q\op{s}{s}, \quad\quad \sigma_{1} = (1-q)\op{\phi_{-}}{\phi_{-}}+q\op{s}{s}.
\end{equation}
Explicitly,
\begin{align}
    \op{\psi_{0}}{\psi_{0}}-\sigma_{0}=\frac{1}{2}\left(\sqrt{1-q^{2}}-(1-q)\right)\left(\op{s}{t}+\op{t}{s}\right)\eqqcolon\mrm{\tilde{\Delta}}_{0},
\end{align}
and likewise $ \op{\psi_{1}}{\psi_{1}}-\sigma_{1}\eqqcolon\mrm{\tilde{\Delta}}_{1}=-\mrm{\tilde{\Delta}}_{0}$. 
We recognize that the coefficient above is exactly $\delta_{\textup{MED},\mbf{B}}^{\msf{BEC}}(q)$, which we momentarily refer to as just $\delta$. Both $\mrm{\tilde{\Delta}}_{0}$ and $\mrm{\tilde{\Delta}}_{1}$ have identical eigenvalues of $(\delta,-\delta,0)$. Hence,
\begin{equation}
    \delta_{\mc{S}}=\frac{1}{4}\left(\left(|\delta|+|-\delta|\right)+\left(|\delta|+|-\delta|\right)\right)=\delta,
\end{equation}
the lower bound is tight, and $\delta_{\textup{QSNIR},\mbf{B}}^{\msf{BEC}}(q)=\delta_{\textup{MED},\mbf{B}}^{\msf{BEC}}(q)\eqqcolon\delta_{\mbf{B}}^{\msf{BEC}}(q)$.
\qed

\section{\label{app:BEY}Embeddings With Encoded Phase Information}
Throughout this work, we operated entirely under the assumption that the embeddings, $\ket{\mrm{\Psi}_{\X\Y}}$, we used were restricted to those where $\varphi(x,y)=0$ for all $x$ and $y$. This follows from an intuition, motivated by the qubit case, that these phases should only increase the distance between the pure states in the adversary's reduced state ensemble, $\{\{p(x),\ket{\psi_{x}}\}\}_{x}$. We now work without this assumption.

We first show that perfect privacy now further constrains the phases encoded in the embedding.
\begin{corollary}[to Thm.~\ref{thm:perf-priv}]\label{cor:perf-priv-phase}
     A reduction from $(\X,\Y)$ to its embedding $\ket{\mrm{\Psi}_{\X\Y}}^{\mbf{AB}}$ (containing encoded phase information given by $\varphi(x,y)$), has perfect privacy (against either a corrupt $\mbf{A}$ or $\mbf{B}$), if and only if $H(\Y\searrow\X|\X)=0$, \textup{and} $\varphi(x,y')-\varphi(x',y')$ is constant in $y'$, for any $x,x'$ that are supported on the same $y$.
\end{corollary}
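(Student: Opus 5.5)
The plan is to rerun the proof of Thm.~\ref{thm:perf-priv}, now keeping track of the phases. I treat a corrupt $\mbf{B}$; the corrupt $\mbf{A}$ case is the same argument with labels swapped. With phases, the reduced states are $\ket{\psi_{x}}=\sum_{y}\sqrt{p(y|x)}e^{i\varphi(x,y)}\ket{y}$. Since $\mc{D}(\op{\psi_{x}}{\psi_{x}})=\sum_y p(y|x)\op{y}{y}$ still holds, Eq.~\ref{eq:QSNIR-priv-from-embedding} is unchanged, so perfect privacy is equivalent to the existence of a simulator with $\sum_{y}p(y|x)\mc{S}(\op{y}{y})=\op{\psi_{x}}{\psi_{x}}$ for every $x$.

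\emph{Necessity.} Fix $x,x'$ that are both supported on some $y$. The convex combination $\sum_{y}p(y|x)\mc{S}(\op{y}{y})$ is the pure state $\psi_x$. A convex combination of states is pure only if every state carrying positive weight equals that pure state, so $\mc{S}(\op{y}{y})=\op{\psi_{x}}{\psi_{x}}$. Applying the same reasoning to $x'$ gives $\mc{S}(\op{y}{y})=\op{\psi_{x'}}{\psi_{x'}}$. Hence $|\ip{\psi_{x}}{\psi_{x'}}|=1$. Now expand the overlap:
\begin{equation*}
\ip{\psi_{x}}{\psi_{x'}}=\sum_{y'}\sqrt{p(y'|x)p(y'|x')}\,e^{i(\varphi(x',y')-\varphi(x,y'))}.
\end{equation*}
By the triangle inequality its modulus is at most $\bc(\Y|\X=x,\Y|\X=x')\le 1$. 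Requiring modulus $1$ therefore forces two things. First, the Bhattacharyya coefficient must equal $1$, so the conditionals coincide; as in Thm.~\ref{thm:perf-priv} this gives $H(\X\searrow\Y|\Y)=0$, equivalently $H(\Y\searrow\X|\X)=0$. Second, the triangle inequality must be tight, so all nonzero summands share a common phase. That is, $\varphi(x,y')-\varphi(x',y')$ is constant over the $y'$ in the common support. Because the conditionals coincide, this common support is exactly the support of $\Y|\X=x$. This gives the stated phase condition, where constancy is understood over supported $y'$; phases on zero-amplitude entries play no role.

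\emph{Sufficiency.} Assume both conditions hold. Then for $x,x'$ sharing a $y$, the states $\ket{\psi_{x}}$ and $\ket{\psi_{x'}}$ differ only by a global phase, so $\op{\psi_{x}}{\psi_{x}}=\op{\psi_{x'}}{\psi_{x'}}$. Define the simulator exactly as in Thm.~\ref{thm:perf-priv}: $\mc{S}(\op{y}{y})=\op{\psi_{x}}{\psi_{x}}$ for any $x\in\mc{X}_{y}$. This is well defined by the previous sentence, and it is a valid CPTP preparation map. For the actual sample $x_{\textup{act}}$, every $y$ with $p(y|x_{\textup{act}})>0$ has $x_{\textup{act}}\in\mc{X}_y$, so each such term outputs $\psi_{x_{\textup{act}}}$ and the average reproduces $\op{\psi_{x_{\textup{act}}}}{\psi_{x_{\textup{act}}}}$. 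This gives $\delta=0$.

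The main obstacle I expect is the symmetric statement. For a corrupt $\mbf{A}$, the same argument produces the condition that $\varphi(x',y)-\varphi(x',y')$ is constant in $x'$ for $y,y'$ sharing an $x$. I would need to check that this is equivalent to the stated condition under triviality. My plan is to use the block structure that triviality imposes on the support graph. Within a connected component, $\Y|\X=x$ is the same distribution for all non-point $x$, so the support is a product rectangle. On a rectangle, both conditions reduce to $\varphi(x,y)=\alpha(x)+\beta(y)$, which is symmetric in the two parties. Point-distribution rows impose no constraint on either side.
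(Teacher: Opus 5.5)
Your proposal is correct and follows the paper's proof. Necessity comes from $|\ip{\psi_{x}}{\psi_{x'}}|=1$ for $x,x'$ sharing a $y$, which forces equal conditionals and aligned relative phases; sufficiency comes from reusing the simulator of Thm.~\ref{thm:perf-priv}. Your version is more complete than the paper's terse argument. You make explicit the triangle-inequality tightness, the global-phase equivalence that makes the simulator well defined, and the restriction of the phase condition to supported $y'$. You also show that the corrupt-$\mbf{A}$ and corrupt-$\mbf{B}$ phase conditions coincide, via the product-rectangle structure that triviality forces on each connected component.
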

\begin{proof}
    As is the case in Thm.~\ref{thm:perf-priv}, perfect privacy implies that if there exists any $x,x'$ that are supported with any single $y$, then (now with phases included)
    \begin{equation}
        1=|\ip{\psi_{x}}{\psi_{x'}}|=\left|\sum_{y'}\sqrt{p(y'|x)p(y'|x')}e^{i(\varphi(x,y')-\varphi(x',y'))}\right|.
    \end{equation}
    This imposes the additional constraint on the relative phases.
\end{proof}
\noindent
Furthermore, the structure of the optimal simulator in Thm~\ref{thm:perf-priv} continues to hold, as it has no dependence on $\varphi(x,y)$.

When perfect security is impossible from Cor.~\ref{cor:perf-priv-phase}, the objective in the primal within Eq.~\ref{eq:sdp-qsnir} is no longer convex with encoded phase information as a variable in the problem. Indeed this is a problem that we do not fully resolve here. We do, however, provide two insights below. 

\subsection{\label{sec:BEY-qubit}The Qubit Case}
For the case when the adversary's ensemble elements are qubit pure states, $\ket{\psi_{0}},\ket{\psi_{1}}\in\mbb{C}^{2}$, we seek to motivate our intuition that encoded phase information only helps the adversary. Certainly, with respect to MED privacy between two states, $\{\ket{\psi_{0}},\ket{\psi_1}\}$, it is known that the success probability depends on only a single angle between the states~\cite{Barnett-2009-Qs}. We demonstrate a source for this intuition in the context of QSNIR privacy as well.
\begin{lemma}\label{lem:qubit-case}
    For any $(0,\delta_{\mbf{A}},\delta_{\mbf{B}})$-QSNIR from $(\X,\Y)$ to $\ket{\mrm{\Psi}_{\X\Y}}^{\mbf{AB}}$, where $|\mc{X}|=|\mc{Y}|=2$ (such that $\dim\mc{H}^{\mbf{A}}=\dim\mc{H}^{\mbf{B}}=2$),  $(\delta_{\mbf{A}},\delta_{\mbf{B}})$ are upper-bounded by a quantity that is minimized when $\varphi(x,y)=0$, for all $x$ and $y$.
\end{lemma}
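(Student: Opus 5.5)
The plan is to fix a corrupt $\mbf{B}$ (the case of $\mbf{A}$ is symmetric after relabeling) and exhibit an explicit simulator whose privacy error, as a function of the phases, is the claimed upper bound. First I will reduce the phase freedom. With $|\mc{X}|=|\mc{Y}|=2$ the reduced states are
\begin{equation}
\ket{\psi_{x}}=\sqrt{p(0|x)}e^{i\varphi(x,0)}\ket{0}+\sqrt{p(1|x)}e^{i\varphi(x,1)}\ket{1}.
\end{equation}
A global phase on each $\ket{\psi_{x}}$ does not change $\op{\psi_{x}}{\psi_{x}}$. A diagonal unitary $\textup{diag}(1,e^{i\alpha})$ on $\mbf{B}$ commutes with the dephasing channel $\mc{D}$, so it can be absorbed into the simulator, and trace distance is unitarily invariant. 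Only the single gauge-invariant relative phase
\begin{equation}
\phi\coloneqq\varphi(0,1)-\varphi(0,0)-\varphi(1,1)+\varphi(1,0)
\end{equation}
therefore survives. In Bloch coordinates I can take $r_{0}=(s_{0},0,t_{0})$ and $r_{1}=(s_{1}\cos\phi,s_{1}\sin\phi,t_{1})$, where $s_{x}=2\sqrt{p(0|x)p(1|x)}$ and $t_{x}=p(0|x)-p(1|x)$.

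Second, I will restrict the simulator to the family used in the proofs of Thm.~\ref{thm:BSC-priv} and Thm.~\ref{thm:BEC-priv}: states $J_{0},J_{1}$ with Bloch vectors $j_{y}$. The ideal states are $\sigma_{x}=\sum_{y}p(y|x)J_{y}$, so the privacy error becomes
\begin{equation}
\delta_{\mc{S}}=\frac{1}{2}\sum_{x}p(x)\bigl\|r_{x}-\textstyle\sum_{y}p(y|x)j_{y}\bigr\|_{2}.
\end{equation}
I will choose $j_{y}$ to be independent of $\phi$, namely the optimal vectors for the canonical ($\phi=0$) problem, lying in the $XZ$-plane. Then $\delta_{\mbf{B}}\le\delta_{\mc{S}}(\phi)$ holds for every $\phi$, and this is the upper-bounding quantity in the statement.

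Third, I will show that $\delta_{\mc{S}}(\phi)$ is minimized at $\phi=0$. Only the $x=1$ term depends on $\phi$. Writing $\hat r_{1}=(a,0,b)$ for the fixed ideal vector, we have
\begin{equation}
\|r_{1}-\hat r_{1}\|_{2}^{2}=s_{1}^{2}+a^{2}-2as_{1}\cos\phi+(t_{1}-b)^{2},
\end{equation}
which is minimized at $\cos\phi=1$ provided $a\ge0$. The main obstacle is justifying $a\ge0$, that is, that the canonical-optimal simulator has a nonnegative $X$-component in its averaged output. I would argue this by reflection: replacing $j_{y}$ with its mirror image with nonnegative $X$-component can only decrease the distance to $r_{x}$, because the canonical $r_{x}$ has $s_{x}\ge0$. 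So WLOG an optimal canonical simulator already has $a\ge0$. I would also handle the alternative gauge in which $\phi$ is placed on $r_{0}$, and check that the two choices give the same bound, which follows by symmetry of the argument.
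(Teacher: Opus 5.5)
Your route differs from the paper's, and the difference matters at exactly the step you flag. The paper never fixes a simulator. It applies the triangle inequality $\|p_x n_x-\sum_y p_{xy}j_y\|_2\le\sum_y p_{xy}\|n_x-j_y\|_2$, which splits the error into independent two-point Fermat--Weber problems in each $j_y$. This gives the explicit bound $\frac{1}{2}(\min\{p_{00},p_{10}\}+\min\{p_{01},p_{11}\})\|n_0-n_1\|_2$, whose only phase dependence is $\|n_0-n_1\|_2^2=2-2(s_0s_1\cos\phi+t_0t_1)$. Since $s_0s_1\ge0$ holds automatically, no sign condition ever appears.

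Freezing the canonical-optimal simulator is a legitimate alternative. It would even give a bound that is tight at $\phi=0$, which the paper's is not in general. But then everything rests on $a\ge0$, and your reflection argument for it is wrong. Write $u_y$ for the $X$-component of $j_y$ and $\hat a_x=\sum_y p(y|x)u_y$. Because $\hat r_x$ mixes both simulator outputs, flipping one $u_y$ to $|u_y|$ need not decrease the distance to $r_x$. For example, take $p(0|0)=p(1|1)=0.9$ (so $s_0=s_1=0.6$), $j_0=(0.8,0,0.6)$ and $j_1=(-1,0,0)$. Then $\hat a_0=0.62$ and $\hat a_1=-0.82$. Reflecting $j_1$ sends $\hat a_0$ to $0.82$, which increases the $x=0$ distance, since its $Z$-part is unchanged. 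The joint flip $u\mapsto-u$ sends $\hat a_0$ to $-0.62$, which is worse still. A joint flip helps only when $\hat a_0,\hat a_1\le0$. So the mixed case $\hat a_1<0\le\hat a_0$ is left open, and your monotonicity claim cannot be used to turn an optimal simulator into one with $a\ge0$. ``WLOG $a\ge0$'' is therefore not established.

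The claim is nevertheless true and can be repaired by a different modification. In the mixed case, $P=(p(y|x))_{x,y}$ has $\det P\neq0$, since otherwise the rows coincide and $\hat a_0=\hat a_1$. Set $w\coloneqq P^{-1}(\hat a_0,0)^{T}$, so that $\det P\,w_0=p(1|1)\hat a_0$ and $-\det P\,w_1=p(0|1)\hat a_0$. The identities $\det P\,u_0=p(1|1)\hat a_0-p(1|0)\hat a_1$ and $-\det P\,u_1=p(0|1)\hat a_0-p(0|0)\hat a_1$, together with $\hat a_1<0\le\hat a_0$, give $|w_y|\le|u_y|$. Replacing $u_y$ by $w_y$ keeps each $j_y$ in the disk with the same $Z$-component. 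It leaves $\hat a_0$ unchanged and moves $\hat a_1$ to $0$, which is closer to $s_1\ge0$. If both $\hat a_x<0$, use $u\mapsto-u$. Hence some canonical optimum has $a=\hat a_1\ge0$, and with this in place the rest of your argument goes through. Alternatively, switching to the paper's decoupled bound avoids the issue entirely.
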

\begin{proof}
    Let $p_{xy}\coloneqq p(x,y)$ for all $x,y\in\mbb{Z}_{2}$, and $p_{x}\coloneqq \sum_{y}p_{xy}$ be its marginals.
    Generically, the subnormalized reduced state of the adversary in the real world, conditioned on $x\samp\X$, is
    \begin{subequations}
        \begin{align}
            \ket{\tilde{\psi}_{0}}&\coloneqq\sqrt{p_{0}}\ket{\psi_{0}}= \sqrt{p_{00}}\ket{0}+\sqrt{p_{01}}\ket{1}, \\  \ket{\tilde{\psi}_{1}}&\coloneqq \sqrt{p_{1}}\ket{\psi_{1}} =  \sqrt{p_{10}}\ket{0}+e^{i\varphi}\sqrt{p_{11}}\ket{1},
        \end{align}
    \end{subequations}
    where $\varphi\coloneqq -\varphi(0,0)+\varphi(0,1)+\varphi(1,0)-\varphi(1,1)$, removing any global phase differences between the two states. 
    Suppose a simulator exists, which prepares $\{J_{0},J_{1}\}$ and satisfies $\delta$-privacy for this ensemble, as
    \begin{equation}
        \delta=\td\left(\tilde{\psi}_{0},p_{00}J_{0}+p_{01}J_{1}\right)+\td\left(\tilde{\psi}_{1},p_{10}J_{0}+p_{11}J_{1}\right).
    \end{equation}

    Let $S_{-\varphi}\coloneqq\op{0}{0}+e^{-i\varphi}\op{1}{1}$, which when applied to $\ket{\psi_{1}}$ sends $\varphi\to0$. It would be nice if we could simply demonstrate that having the simulator produce $\{J_{0}, S_{-\varphi}J_{1}(S_{-\varphi})^{\dagger}\}$ generates the same $\delta$ as with a version of the ensemble where $\varphi=0$. This only works for certain correlations, such as $\msf{SK}$, where the environment compares the exact state produced by the simulator to its corresponding real world ensemble element. In general the ideal world averages over the states produced by the simulator, such that we can't update the channel with a unitary on only \textit{some} of the inputs and achieve the same error.\footnote{In contrast, while we can always define a global unitary that takes the ensemble back to the $XZ$-plane, this modifies the complex amplitudes of the state in the computational basis, such that a secure reduction to this rotated ensemble must come from a different distribution than what was originally embedded.}

    Instead, we analyze the privacy error more directly via the Bloch sphere.
    The corresponding (subnormalized) Bloch vector for each state in the real world ensemble is
    \begin{subequations}
        \begin{align}
            r_{0} &\coloneqq \left(2\sqrt{p_{00}p_{01}},0,p_{00}-p_{01}\right), \\
            r_{1} &\coloneqq \left(2\sqrt{p_{10}p_{11}}\cos(\varphi),2\sqrt{p_{10}p_{11}}\sin(\varphi),p_{10}-p_{11}\right).
        \end{align}
    \end{subequations}
    Let $n_{x}\coloneqq r_{x}/p_{x}$ be the corresponding normalized Bloch vector.
    Note that the distance $\|n_{0}-n_{1}\|_{2}$ is minimized when $\varphi=0$ for any correlation.\footnote{This is proven simply by the law of cosines, where one shows that $\|n_{0}-n_{1}\|_{2}^{2}=a-b\cos\varphi$, with $a$ and $b$ as nonnegative constants determined by the distribution.}
    In the ideal world, the simulator prepares states with Bloch vectors $\{j_{0},j_{1}\}$, respectively. The average subnormalized state of the adversary in the ideal world is then given by either of the Bloch vectors
    \begin{equation}
        \hat{r}_{0}\coloneqq p_{00}j_{0} + p_{01}j_{1}, \quad \hat{r}_{1}\coloneqq p_{10}j_{0} + p_{11}j_{1},
    \end{equation}
    such that the QSNIR privacy error we aim to minimize is
    \begin{align}
        2\delta &=\left\|r_0-\hat{r}_{0}\right\|_{2}+\left\|r_1-\hat{r}_{1}\right\|_{2} \nonumber\\
        &=\left\|r_{0}-(p_{00}j_{0}+p_{01}j_{1})\right\|_{2} +\left\|r_{1}-(p_{10}j_{0}+p_{11}j_{1})\right\|_{2} \nonumber\\        
        &\leq p_{00}\left\|n_{0}-j_{0}\right\|_{2}+p_{01}\left\|n_{0}-j_{1}\right\|_{2} +p_{10}\left\|n_{1}-j_{0}\right\|_{2}+p_{11}\left\|n_{1}-j_{1}\right\|_{2} \nonumber\\
        &=\sum_{y}p_{0y}\left\|n_{0}-j_{y}\right\|_{2}+p_{1y}\left\|n_{1}-j_{y}\right\|_{2}.
    \end{align}
    In the third line, we have bounded $\delta$ by a pair of distances to be minimized, over $j_{0}$ and $j_{1}$, respectively. 
    Since $\{p_{0y},p_{1y}\}$ are nonnegative, a value of $j_{y}$ that minimizes the corresponding summand is the vector in $\{n_{0},n_{1}\}$ that carries the larger weight.\footnote{Note that this minimization is a weighted (two-point) Fermat--Weber problem. If the weights are equal, then every point on the line segment between $n_{0}$ and $n_{1}$ is an optimal solution.} Therefore,
    \begin{equation}
        \min_{j_{y}} p_{0y}\left\|n_{0}-j_{y}\right\|_{2}+p_{1y}\left\|n_{1}-j_{y}\right\|_{2}=\min\{p_{0y},p_{1y}\}\left\|n_{0}-n_{1}\right\|_{2},
    \end{equation}
    and the total bound is
    \begin{equation}
     2\delta \leq \left(\min\{p_{00},p_{10}\}+\min\{p_{01},p_{11}\}\right)\left\|n_{0}-n_{1}\right\|_{2},
    \end{equation}
    which is minimized when $\varphi=0$.
\end{proof}
\noindent
While this lemma isn't sufficient to prove that the error is minimized without phase information (as its not a lower bound), it does suggest that constructing the optimal simulator is related to finding the geometric median between the weighted Bloch vectors of the real world ensemble elements. Clearly, there exist some distributions where this bound is tight, and the privacy error is then minimized without phases.

Importantly, despite the bound, this lemma shows why even in the qubit case, proving that these phases can only decrease privacy is nontrivial. The ideal world averages over the output of the simulator, such that for a general distribution, we can't just update the simulator with a unitary that cancels out any relative phase. Still, there is nonetheless structure here which may be employed to further prove our conjecture. 

\subsection{\label{sec:BEY-num}Numerical Minimization over Phase Information}
We briefly report numerical results, minimizing over encoded phase information, for $\msf{OK}$, $\msf{BSC}_{1/4}$, $\msf{BEC}_{1/2}$ when $\mbf{A}$ is corrupt, and $\msf{BEC}_{1/\sqrt{2}}$ when $\mbf{B}$ is corrupt.\footnote{For the correlation families we are interested in, we chose the points where the corresponding QSNIR privacy error is roughly maximized, in hopes we would see a greater contrast.}
These results suggest the optima in Tab.~\ref{tab:privacy-errors}
are minimal over all phases. We constructed a numerical optimizer in cvxpy, utilizing an open-source SCS solver, for the SDP in Eq.~\ref{eq:sdp-qsnir}, and sampled $10000$ random phase functions $\varphi(x,y)$ for each correlation. Each numerical minimum over all phase functions fell within the solver's tolerance of the analytically computed optimum for $\varphi(x,y)=0$. Code can be found attached as supplemental material.

\end{document}